\title{Feedback Capacity of MIMO Gaussian Channels}
\author{Oron Sabag, Victoria Kostina, Babak Hassibi}
\date{November 2020}
\DeclareMathOperator*{\nn}{\nonumber}
\DeclareMathOperator{\E}{\mathbb{E}}
\DeclareMathOperator{\vy}{\mathbf{y}}
\DeclareMathOperator{\ve}{\mathbf{e}}
\DeclareMathOperator{\vm}{\mathbf{m}}
\DeclareMathOperator{\vv}{\mathbf{v}}
\DeclareMathOperator{\vw}{\mathbf{w}}
\DeclareMathOperator{\cov}{\mathbf{cov}}
\DeclareMathOperator{\vz}{\mathbf{z}}
\DeclareMathOperator{\vx}{\mathbf{x}}
\DeclareMathOperator{\vs}{\mathbf{s}}
\DeclareMathOperator{\vhs}{\hat{\mathbf{s}}}
\DeclareMathOperator{\vhhs}{\hat{\hat{\mathbf{s}}}}
\newtheorem{lemma}{Lemma}
\newtheorem{theorem}{Theorem}
\newtheorem{remark}{Remark}
\theoremstyle{definition}
\newtheorem{assumption}{Assumption}
\newtheorem{definition}{Definition}
\newcommand\rr[1]{\textcolor{black}{#1}}
\def\blfootnote{\gdef\@thefnmark{}\@footnotetext}
\begin{document}

\maketitle

\begin{abstract}
Finding a computable expression for the feedback capacity of channels with colored Gaussian, additive noise is a long standing open problem. In this paper, we solve this problem in the scenario where the channel has multiple inputs and multiple outputs (MIMO) and the noise process is generated as the output of a time-invariant state-space model. Our main result is a computable expression for the feedback capacity in terms of a finite-dimensional convex optimization. The solution to the feedback capacity problem is obtained by formulating the finite-block counterpart of the capacity problem as a \emph{sequential convex optimization problem} which leads in turn to a single-letter upper bound. This converse derivation integrates tools and ideas from information theory, control, filtering and convex optimization. A tight lower bound is realized by optimizing over a family of time-invariant policies thus showing that time-invariant inputs are optimal even when the noise process may not be stationary. The optimal time-invariant policy is used to construct a capacity-achieving and simple coding scheme for scalar channels, and its analysis reveals an interesting relation between a smoothing problem and the feedback capacity expression.
\end{abstract}






\section{Introduction}
\blfootnote{The authors are with the
Department of Electrical Engineering at California Institute of Technology (e-mails:
 \{oron,vkostina,hassibi\}@caltech.edu). Part of this work was published in \cite{sabag_MIMO_isit}.}

We consider the feedback capacity of a multiple-input multiple-output (MIMO) Gaussian channel
\begin{align}\label{eq:intro_channel}
    \vy_i&= \Lambda\vx_i + \vz_i,
\end{align}
where $\Lambda\in\mathbb{R}^{p\times m}$ is a deterministic matrix, $\vy_i$ is the channel output and $\vx_i$ is the channel input. The noise is a colored Gaussian process generated by a vector state-space model (a hidden Markov model)
\begin{align}\label{eq:intro_noise}
\vs_{i+1}&= F \vs_i + G \vw_i \nn\\
\vz_i&= H \vs_i + \vv_i,
\end{align}
where the sequence $(\vw_i,\vv_i)$ is i.i.d. with Gaussian distribution. Our assumptions on the state-space are mild and include for instance non-stationary noise processes (when the spectral radius of $F$ is greater than $1$). Particular realizations of the state-space reveal known random processes such as the auto-regressive moving-average (ARMA) noise process.


Most related to our setting is the framework for channels with general additive Gaussian noise processes by Cover and Pombra \cite{CoverPombra}. They showed that the feedback capacity is equal to the limit of
\begin{align}\label{eq:intro_nletter}
    C_n(P)&= \max_{K_V\succeq0,B} \frac1{2n} \log \frac{\det (K_V + (I+B)K_n^Z (I+B)^T)}{\det K_n^Z},
\end{align}
where $K_n^Z$ is the covariance of the Gaussian noise, and the maximum is subject to strictly-causal linear operators $B$ (lower-triangular matrices) and pairs $(K_V,B)$ that satisfy the power constraint $\text{Tr} (K_V + B K_n^ZB^T)\le n P$. Their general methodology applies to arbitrary  Gaussian processes, and can be extended to MIMO channels and results in a formula that is similar to \eqref{eq:intro_nletter}, but the computation of such expressions remains non-trivial. In this paper, we show that imposing a state-space structure on the Gaussian noise leads to a computable characterization of the infinite-limit of the optimization problem \eqref{eq:intro_nletter}.


Our main result is a computable expression for the feedback capacity, formulated as a finite-dimensional convex optimization problem. The optimization is a  maximal determinant optimization problem subject to linear matrix inequalities (LMIs) constraints, a class of convex optimization problems that often appear in the control literature \cite{boyd_detmax,boyd1994linear,scherer2000linear,caverly2019lmi} and recently also in information theory \cite{sabag_CDC_SDP,Tanaka_SDP_1,Tanaka_SDP_2,Gattami}. The LMIs are interpretable, and one of the LMIs corresponds to a tight relaxation of a Riccati equation. Several aspects of the feedback capacity solution such as computability, comparison with non-feedback rates, and optimal inputs distribution are discusses by studying the capacity of the moving-average (MA) and the auto-regressive (AR) noise processes.

The literature on the feedback capacity of scalar Gaussian channel is rich, e.g. \cite{Butman69,Butman_conjecture,TiernanSchalk_AR_UB,Ebert,Feder_Gaussian,YoulaCoding,Han_GaussianFeedback,KalmanConnectionISIT20,elia_bodemeets,Ozarow90}, and the focus here is on works most related to ours (a detailed survey can be found in \cite{Kim10_Feedback_capacity_stationary_Gaussian}). In \cite{YangKavcicTatikondaGaussian}, an explicit lower bound for ARMA(1,1) noise was derived. Their lower bound was shown to be optimal for the MA(1) noise in \cite{Kim06_MA}, and the conjecture was proven for the ARMA(1,1) in \cite{Kim10_Feedback_capacity_stationary_Gaussian}. \textcolor{black}{The capacity characterization for the ARMA(1,1) noise in \cite{Kim10_Feedback_capacity_stationary_Gaussian} relies on their general result that stationary channel input processes achieve the feedback capacity when the noise is stationary. Based on this fundamental result, \cite{Kim10_Feedback_capacity_stationary_Gaussian} also studied the special case of our channel in \eqref{eq:intro_channel}-\eqref{eq:intro_noise} where the channel is scalar, the noise is stable, and the hidden state is available to the encoder ($\vw_i=\vv_i$), and formulated its capacity as a finite-dimensional, non-convex optimization problem. In contrast, we provide a convex optimization for the feedback capacity of the general channel in \eqref{eq:intro_channel}-\eqref{eq:intro_noise} under mild conditions (see Section \ref{sec:setting} below). In \cite{Gattami}, a change of variable to the non-convex optimization in \cite{Kim10_Feedback_capacity_stationary_Gaussian}, combined with the novel idea of using LMIs, showed that the capacity can be formulated as a convex optimization problem. However, the change of variable relied on an erroneous claim (see Remark~\ref{remarkgattami}). Our paper studies colored Gaussian noise described by the general state-space model where the hidden state of the noise may or may not be available to the encoder, the channel may be scalar or vector (MIMO), and the noise may be stationary or not. We express the feedback capacity as a convex optimization problem in this general setting. A recent conference paper also studies MIMO channels, and extends the convex optimization in [15] to MIMO channels with ISI [24]. The intersection of \cite{Elia_MIMO_ITW} and our setting is a MIMO channel with a stable, colored Gaussian noise, and the capacity results in the current paper (initially published as \cite{sabag_MIMO_isit}) and \cite{Elia_MIMO_ITW} were developed independently and published concurrently. Each work considers a different extension of the MIMO channel with stable noise: \cite{Elia_MIMO_ITW} studies channels with ISI, whereas we study colored Gaussian noise that can be either stationary or non-stationary. A major technical contribution in our work is that we show the optimality of stationary inputs for non-stationary noise processes. This fact and its proof may be of independent interest since it does not rely on the frequency-based methods of \cite{Kim10_Feedback_capacity_stationary_Gaussian}.}

The starting point of our derivations is the general Cover-Pombra characterization in \eqref{eq:intro_nletter}, and we develop a \emph{time-domain} methodology to provide a computable expression for the feedback capacity. We derive a novel formulation of the $n-$letter capacity in \eqref{eq:intro_nletter} as a \emph{sequential convex optimization problem} (SCOP). In particular, we formulate an optimization problem whose decision variable is a sequence of length $n$, where at each time fixed-dimensional matrices are optimized. In the SCOP formulation, the LMI constraints have a sequential nature and should depend on two consecutive times only. This sequential property combined with the convexity of the problem is the key to obtain a single-letter upper bound for the limit of the $n$-letter capacity in \eqref{eq:intro_nletter}. For the lower bound, a family of time-invariant channel inputs distributions is optimized and is shown to achieve the upper bound. An outcome of our derivation is a new methodology to show that time-invariant inputs are sufficient to achieve the feedback capacity even when noise may be non-stationary.

An optimal time-invariant policy can be computed directly from the feedback capacity convex optimization. Using this policy, we also construct an explicit coding scheme that achieves the feedback capacity for scalar channels. The derived scheme generalizes the coding proposed in \cite{Kim10_Feedback_capacity_stationary_Gaussian}, and simplifies its encoding by showing that the message can be encoded in a single dimension rather than the multi-dimensional variant proposed in \cite{Kim10_Feedback_capacity_stationary_Gaussian}. We also derive an explicit decoding rule by studying a related smoothing problem \cite{kailath_booklinear}. The analysis of the smoothing problem reveals an interesting relation between the volume reduction of its error covariance and the capacity solution. That analysis is performed for the general case of a MIMO channel, and a possible MIMO scheme is discussed in Section \ref{sec:conclusion}.

The rest of the paper is organized as follows. In Section \ref{sec:setting}, we present the setting and the preliminaries. Section \ref{sec:main} includes our main result on the feedback capacity of the MIMO Gaussian channel and several examples. In Section \ref{sec:scheme}, we present the optimal inputs distribution and the capacity-achieving coding scheme. In Section \ref{sec:derivation}, the main ideas and the technical lemmas to prove our main result are presented while their detailed proofs appear in Section \ref{sec:proofs}.



\section{The setting and Preliminaries}\label{sec:setting}
This section includes the communication setting. We also present the Kalman filter and the Riccati equation that are required for the presentation of the main result.
\subsection{The setting}
We consider a MIMO additive Gaussian channel
\begin{align}\label{eq:pre_channel}
    \vy_i&= \Lambda \vx_i + \vz_i,
\end{align}
where the channel input is $\vx_i\in\mathbb{R}^{m}$, $\vy_i\in\mathbb{R}^p$ is the channel output, the additive noise is $\vz_i\in\mathbb{R}^p$, and $\Lambda\in\mathbb{R}^{p\times m}$ is a fixed known matrix. The encoder has access to noiseless, strictly-causal feedback so that the input $\vx_i$ is a function of the message and all previous channel outputs $\vy^{t-1}:= \vy_1,\dots,\vy_{t-1}$. For a fixed blocklength $n$, the channel input should satisfy the average power constraint $\frac1{n} \sum_{i=1}^n \E[\vx_i^T\vx_i]\le P.$ Definitions of the average probability of error, achievable rates and the feedback capacity are standard and can be found in \cite{Kim10_Feedback_capacity_stationary_Gaussian}, for instance. The feedback capacity with a power constraint $P$ is denoted by $C_{fb}(P)$.

In the case of MIMO channels, the capacity can be expressed as the multi-letter expression in \eqref{eq:intro_nletter} by modifying all the matrices to their corresponding block matrices with appropriate dimensions. An equivalent characterization of the $n$-letter objective in \eqref{eq:intro_nletter} is the directed information $I(\vx^n\to\vy^n)$ that characterizes the feedback capacity of point to point channels \cite{Kramer98,TatikondaMitter_IT09,PermuterWeissmanGoldsmith09,Massey90}.

The additive noise is a colored Gaussian process generated as the output of the state-space:
\begin{align}\label{eq:noise_state_space}
\vs_{i+1}&= F\vs_i + G \vw_i\nn\\
\vz_i&= H\vs_i + \vv_i,
\end{align}
where \textcolor{black}{$\vs_i\in\mathbb{R}^n$}, $\vw_i\sim N(0,W)$ and $\vv_i\sim N(0,V)$ are i.i.d. sequences with $\mathbb{E}[\vw_i\vv_i^T]=L$, and are independent of the initial state $\vs_1\sim N(0,\Sigma_{1})$. Note that the encoder has a strictly causal access to the noise $\vz_i$, but not to the hidden state $\vs_i$. For this case, we can use Kalman filtering in order to write the state-space in an observer form \cite{charalambous2020new}. This pre-Kalman filtering step for the state-space \eqref{eq:noise_state_space} allows one to define a new channel state that is available to the encoder as presented in the next section.



\subsection{The Kalman filter and the innovations process}
The Kalman filter is a simple, recursive method to compute the maximum likelihood estimate of the hidden state $\vs_i$ based on the measurements $\vz_1,\dots,\vz_{i-1}$. The predicted-estimate of the state and its error covariance are defined as
\begin{align}\label{eq:KF_estimator}
    \vhs_i &= \E[\vs_i|\vz^{i-1}]\nn\\
    \Sigma_i &= \cov(\vs_i-\vhs_i).
\end{align}
The Kalman filter is given by the recursion
\begin{align}\label{eq:prel_estimator}
    \vhs_{i+1}&= F\vhs_{i} + K_{p,i} (\vz_i-H\vhs_{i})
\end{align}
with the initialization $\vhs_1=0$, and the constants are
\begin{align}\label{eq:def_Kp_Psi}
K_{p,i}&= (F \Sigma_iH^T + GL)\Psi_i^{-1}\nn\\
\Psi_i&=  H \Sigma_i H^T+V,
\end{align}
where the error covariance $\Sigma_i$ is described by the Riccati recursion
\begin{align}\label{eq:Riccati_recursion}
\Sigma_{i+1}&= F \Sigma_{i}F^T + GWG^T - K_{p,i}\Psi_iK_{p,i}^T.
\end{align}
with the initial condition $\Sigma_1\succeq0$. The estimate in \eqref{eq:prel_estimator} can be computed using the \emph{innovations process} defined as $\ve_i = \vz_i-H\vhs_{i}$ and is distributed according to $N(0,\Psi_i)$. It is also known that the innovation $\ve_i$ is orthogonal (statistically independent) to the previous instances of the measurements $\vz^{i-1}$ \cite{Kailath_innovations}. \textcolor{black}{Thus, we can write a new equivalent channel as
\begin{align}\label{eq:new_channel}
        \vhs_{i+1}&= F\vhs_{i} + K_{p,i} \ve_i\nn\\
        \vy_i&= H\vhs_i + \Lambda \vx_i + \ve_i
\end{align}
where $\vhs_i$ plays the role of the channel state and is available to the encoder. Note that this is a valid channel due to the Markov chain $(\vhs_{i+1},\vy_i) - (\vx_i,\vhs_i) - (\vx^{i-1},\vy^{i-1},\vhs^{i-1},m)$.}

The innovations process also characterizes the entropy rate of Gaussian random processes as
\begin{align}\label{eq:z_entropy_rate}
    \frac1{n} h(\vz^n)&= \frac1{n} \sum_{i=1}^n h(\vz_i|\vz^{i-1})\nn\\
    &= \frac1{n} \sum_{i=1}^n h(\ve_i)\nn\\
    &= \textcolor{black}{\frac1{2n} \sum_{i=1}^n \log \det (\Psi_i) + \frac1{2n}\log (2\pi e)^d
     .}
\end{align}

In \eqref{eq:def_Kp_Psi}, it is assumed that $\Psi_i\succ0$ for all $i$. This is a natural assumption since otherwise the capacity is infinite. Namely, if $\Psi_i$ is only positive semidefinite, a coordinate in the noise vector $\vz_i$ is a deterministic function of the past noise instances $\vz^{i-1}$. Building an infinite-rate scheme is straightforward: the encoder transmits $\vx_j=0$ so that $\vy_j=\vz_j$ for $j\le i-1$. Then, by having $\vz^{i-1}$, the encoder and the decoder know a coordinate of $\vz_i$, and can communicate an inifinite number of bits on this vector coordinate (assuming the image of $\Lambda$ is not degenerated at this particular direction).

\subsection{The Riccati equation}\label{subsec:Riccati}
Consider the function
\begin{align}\label{eq:riccati_general}
    f(\Sigma)&= F \Sigma F^T - \Sigma + GWG^T - K_p(\Sigma)\Psi(\Sigma) K_p^T(\Sigma),
\end{align}
where $K_p(\Sigma) = (F\Sigma H^T + GL) \Psi(\Sigma)^{-1}$ and $\Psi(\Sigma) = H\Sigma H^T + V$. The Riccati equation is defined as $f(\Sigma)=0$. The stabilizing solution to the Riccati equation (if exists) not only solves $f(\Sigma)=0$, but is also the unique solution such that its corresponding closed-loop system $F - K_p(\Sigma) H$ is stable. In the rest of the paper, we refer to
\begin{align}\label{eq:ricc_constants}
    K_p &= (F\Sigma H^T + GL) \Psi^{-1}\nn\\
    \Psi &= H\Sigma H^T + V
\end{align}
as the constants evaluated at the stabilizing solution. \textcolor{black}{The corresponding time-invariant Kalman filter is
\begin{align}\label{eq:enc_KF_LTI}
    \vhs_{i+1}&= F\vhs_i + K_p(\vz_i-H\vhs_i).
\end{align}}

We move on to present assumptions on the state-space model. The stability of $F$ determines the stationarity of the noise process.
\begin{definition}\label{def:stable}
The matrix $F$ is stable if its spectral radius satisfies $\rho(F)<1$.
\end{definition}
Without further assumptions, our results hold for the stationary case, i.e., when $F$ is stable \textcolor{black}{(and $L=0)$}. Thus, a reader whose interest is limited to the stationary case may skip the following assumptions.
\begin{assumption}
The pair $(F,H)$ is detectable. That is, there exists a matrix $K$ such that $\rho(F-KH)<1$.
\end{assumption}
\begin{assumption}
The pair $(F_s,\textcolor{black}{W_s})$ is controllable on the unit circle where $F_s\triangleq F - GLV^{-1}H$ \textcolor{black}{and $W_s\triangleq W - GLV^{-1}L^TG^T$}. That is, for any $x$ and $\lambda$ such that $xF_s = x\lambda$, if $|\lambda|= 1$, then \textcolor{black}{$xW_s^{1/2}\neq0$}.
\end{assumption}
Assumption $1$ asserts that all eigenvectors of $F$ that have unstable eigenvalues (outside the unit circle) can be observed via the matrix $H$. Indeed, without loss of generality, it can even be assumed the pair $(F,H)$ is observable (for all eigenvectors) since the unobserved eigenvectors have no effect on the channel noise. Assumptions $1$ and $2$ are sufficient and necessary conditions for the existence of the unique stabilzing solution to the Riccati equation in \eqref{eq:riccati_general}.

\textcolor{black}{We further need to assume that the initial covariance matrix $\Sigma_1$ converges to the stabilizing solution. Advanced discussions on convergence of Riccati recursions can be found in \cite[App. E]{kailath_booklinear}, and here we aim to provide several alternatives in order to obtain a general framework. The first condition is stabilizability of $(F_s,W_s)$. That is, for any $x$ and $\lambda$ such that $xF_s = x\lambda$, if $|\lambda|\ge 1$, then $xW_s^{1/2}\neq0$. This condition guarantees that the stabilizing solution is the only positive semidefinite solution to the Riccati equation, which implies that \emph{any initial state covariance} converges to the stabilizing solution. Another useful condition is $\Sigma_1\succeq\overline{\Sigma}$ where $\overline{\Sigma}$ is the stabilizing solution. Beyond these two sufficient conditions, in simple cases (such as the moving average noise in Section \ref{subsec:MA}), the convergence can be verified manually. We can assume without loss of generality that the initial covariance $\Sigma_1$ is equal to the stabilizing solution of the Riccati equation in \eqref{eq:riccati_general} by letting $\vx_i=0$ before the transmission begins.}



\section{Main result and discussion}\label{sec:main}
In this section we present the feedback capacity of the MIMO channel and its particularization to scalar channels. We discuss different aspects of our main results via several examples. The following is our main result.
\begin{theorem}[Feedback capacity of MIMO channels]\label{th:main}
The feedback capacity of the MIMO Gaussian channel in \eqref{eq:pre_channel}-\eqref{eq:noise_state_space} is given by the
convex optimization
\begin{align}\label{eq:capacity_main}
    &C_{fb}(P)= \max_{\Pi,\hat{\Sigma},\Gamma} \frac1{2}\log \det (\Psi_Y) - \frac1{2}\log\det(\Psi)\nn\\
&\ \ \ \text{s.t. } \ \ \ \Psi_Y=\Lambda \Pi \Lambda^T + H \hat{ \Sigma} H^T+ \Lambda \Gamma H^T+ H \Gamma^T\Lambda^T  + \Psi \nn\\
&\begin{pmatrix}
    \Pi & \Gamma\\
    \Gamma^T& \hat{ \Sigma}
\end{pmatrix} \succeq0, \ \ \mathbf{Tr}(\Pi)\le P, \nn \\
&\begin{pmatrix}
     F \hat{ \Sigma} F^T + K_{p}\Psi K_{p}^T - \hat{ \Sigma}& F (\Gamma^T \Lambda^T + \hat{ \Sigma} H^T) + K_p\Psi \\
    (\Lambda \Gamma + H \hat{ \Sigma}) F^T + \Psi K_p^T & \Psi_Y
    \end{pmatrix} \succeq0,
\end{align}
where $K_p$ and $\Psi$ are constants given in \eqref{eq:ricc_constants}, \textcolor{black}{and the optimization variables are the matrices $\Pi\in\mathbb{R}^{m\times m}$,  $\hat{\Sigma}\in\mathbb{R}^{n\times n}$, and~$\Gamma\in\mathbb{R}^{m\times n}$}.
\end{theorem}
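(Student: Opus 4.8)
I plan to prove the two matching inequalities $C_{fb}(P)\le C^\star$ and $C_{fb}(P)\ge C^\star$, where $C^\star$ denotes the value of the convex program in \eqref{eq:capacity_main}. For the converse I would start from the MIMO version of the Cover--Pombra characterization, rewritten in directed-information form $C_n(P)=\frac1n\max I(\vx^n\to\vy^n)=\frac1n\max\bigl(h(\vy^n)-h(\vz^n)\bigr)$, and pass to the observer form \eqref{eq:new_channel} so that the filtered state $\vhs_i$ is available to the encoder and the noise driving $\vy_i$ is the white innovation $\ve_i\sim N(0,\Psi_i)$. Since Gaussian inputs maximize each conditional entropy $h(\vy_i\mid\vy^{i-1})$ at fixed second moments, it suffices to restrict attention to jointly Gaussian $(\vx_i,\vhs_i,\vy^{i-1})$ and to track only the second-order statistics: the input covariance $\Pi_i$, the posterior state-error covariance $\hat\Sigma_i=\cov(\vhs_i\mid\vy^{i-1})$, the cross-covariance $\Gamma_i$, and the output-innovation covariance $\Psi_{Y,i}=\cov(\vy_i\mid\vy^{i-1})$. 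The term $h(\vz^n)$ is already in closed form through \eqref{eq:z_entropy_rate}, and under Assumptions 1--2 (and the stated conditions on $\Sigma_1$) the Riccati iterates obey $\Psi_i\to\Psi$ and $K_{p,i}\to K_p$.

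\textbf{Sequential convex relaxation and the converse.} Next I would derive, via the Kalman update of the pair $(\vhs_i,\vx_i)$ against the new information in $\vy_i$, a one-step recursion expressing $(\Pi_{i+1},\hat\Sigma_{i+1},\Gamma_{i+1})$ in terms of $(\Pi_i,\hat\Sigma_i,\Gamma_i)$ alone, together with the identity for $\Psi_{Y,i}$ displayed in the theorem. The crucial structural fact is that this recursion is convex-representable: nonnegativity of the joint covariance of $(\vx_i,\vhs_i)$ is the first LMI; the update for $\hat\Sigma_{i+1}$ is the Schur complement of a block matrix built from $(\Pi_i,\hat\Sigma_i,\Gamma_i)$, so relaxing the corresponding Riccati \emph{equality} to a matrix inequality yields exactly the second LMI; and the power budget becomes $\mathbf{Tr}(\Pi_i)\le P$ on average. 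This casts $\frac1n I(\vx^n\to\vy^n)$ as a sequential optimization whose objective is the Ces\`aro average of the per-letter functional $\frac12\log\det\Psi_{Y,i}-\frac12\log\det\Psi_i$ and whose constraints couple only consecutive times $i$ and $i+1$. The converse is then a Jensen argument: concavity of $\log\det$ together with convexity of the LMI feasible set imply that the time-averaged variables $\bar\Pi=\frac1n\sum_i\Pi_i$, $\bar{\hat\Sigma}$, $\bar\Gamma$ are feasible for the single-letter program --- the relaxed Riccati inequality survives averaging precisely because it is an inequality, and the telescoping residual $\hat\Sigma_1-\hat\Sigma_{n+1}$ is $O(1/n)$ --- and attain at least the average rate. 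Letting $n\to\infty$ and using $\Psi_i\to\Psi$ gives $C_{fb}(P)\le C^\star$.

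\textbf{Achievability by a time-invariant policy.} For the reverse inequality I would fix an optimal triple $(\Pi,\hat\Sigma,\Gamma)$ of \eqref{eq:capacity_main} and exhibit an explicit time-invariant linear encoder $\vx_i=A(\vhs_i-\E[\vhs_i\mid\vy^{i-1}])+(\text{message term})$, with gain $A$ recovered from the factorization $\Gamma=A\hat\Sigma$ on the range of $\hat\Sigma$ and the message term chosen so that the input covariance equals $\Pi$. Initializing the filter at the stabilizing Riccati solution, one verifies that the closed-loop second-order statistics equal $(\Pi,\hat\Sigma,\Gamma)$ for \emph{every} $i$ --- so $\frac1n I(\vx^n\to\vy^n)$ equals the objective at all blocklengths --- using that an optimal solution makes the Riccati LMI tight and satisfies $\mathbf{Tr}(\Pi)=P$. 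Stability of the resulting closed loop, hence stationarity of all relevant covariances, follows from the detectability and controllability assumptions, which also cover $\rho(F)\ge1$; this simultaneously establishes optimality of time-invariant inputs for non-stationary noise without frequency-domain tools.

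\textbf{Main obstacle.} I expect the hard part to be the converse: showing that the finite-horizon problem genuinely possesses the claimed sequential, Schur-complement structure --- so that the Riccati equality may be relaxed to an LMI --- and then that this relaxation is asymptotically tight, i.e. that the single-letter program is simultaneously an upper bound (via the averaging argument) and attainable by a stationary policy. A secondary difficulty is controlling the unstable modes of $F$, namely checking that the message-feedback loop composed with $F-K_pH$ remains stable, which is exactly where Assumptions 1--2 enter.
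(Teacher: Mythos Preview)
Your plan is essentially the paper's proof: the three-part structure (sequential convex reformulation of $C_n(P)$, Jensen/averaging for the converse, time-invariant policy for achievability) matches Lemmas~\ref{lemma:policy}--\ref{lemma:2conditions} almost exactly, including the key idea of relaxing the Riccati equality to an LMI so that the constraint set becomes convex.

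Two places deserve more care than your sketch suggests. First, in the converse you write that the time-averaged triple is feasible because ``the relaxed Riccati inequality survives averaging.'' In fact it does \emph{not}: the per-step LMI involves the time-varying constants $K_{p,i},\Psi_i$, so after averaging you obtain $\Omega(\bar\Pi_n,\bar{\hat\Sigma}_n,\bar\Gamma_n)$ only up to an additive error $\begin{pmatrix}\bar\Psi_n-\Psi & F(\bar\Sigma_n-\Sigma)H^T\\ \ast & \bar\Psi_n-\Psi\end{pmatrix}$, and the paper needs an $\epsilon$-relaxed feasible set plus a limit-point extraction to close this (your ``$\Psi_i\to\Psi$'' remark is the right ingredient, but the paper flags this as the main technical step, not a triviality). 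Second, in the achievability you cannot ``initialize the filter at the stabilizing Riccati solution'' and get the target statistics for \emph{every} $i$: the decoder's error covariance necessarily starts at $\hat\Sigma_1=0$, so the argument is a convergence argument for the Riccati recursion \eqref{eq:SIGMArecursion}. The paper modifies the policy at $i=1$ (take $M_1=\alpha I$ large) to force $\hat\Sigma_2\succeq\hat\Sigma_s$, invokes a Riccati-convergence theorem requiring detectability of $(F,\Lambda\Gamma\hat\Sigma^\dagger+H)$, and then separately checks (Lemma~\ref{lemma:2conditions}) that at an optimum of the upper bound this detectability holds and the Riccati LMI can be taken with equality---so the lower bound's extra constraints are redundant. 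Your proposal has the right skeleton but omits exactly these two verifications.
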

Note that by the first LMI constraint, the optimization variables $\Pi$ and $\hat{\Sigma}$ are positive semidefinite. The objective structure is the difference between the entropy rates of the channel outputs process and that of the noise process. Note that the objective is a concave function of the decision variables since $\Psi_Y$ is a linear function the decision variables, while $\Psi$ is a constant. Thus, the optimization problem is a convex optimization that can be computed with standard software e.g. \cite{cvx}.

In Section \ref{subsec:inputs}, the decision variables will be given a straightforward interpretation by showing that they induce a time-invariant, optimal inputs distribution. Here, we briefly remark on the LMIs in \eqref{eq:capacity_main}. The decision variable $\Pi$ corresponds to the covariance matrix of the channel input so that the first LMI in \eqref{eq:capacity_main} is a verification that it forms a valid covariance matrix with a correlated variable whose covariance matrix is $\hat{\Sigma}$. For the second LMI in \eqref{eq:capacity_main}, its Schur complement implies the Riccati inequality
\begin{align}\label{eq:Schur_Riccati_main}
     \hat{ \Sigma}&\preceq F \hat{ \Sigma} F^T + K_{p}\Psi K_{p}^T - K_Y\Psi_YK_Y^T,
\end{align}
with $K_Y = (F (\Gamma^T \Lambda^T + \hat{ \Sigma} H^T) + K_p\Psi)\Psi_Y^{-1}$. In Lemma \ref{lemma:2conditions} (Section \ref{sec:derivation}), it is shown that there always exist optimal decision variables $(\Pi,\hat{\Sigma},\Gamma)$ that satisfy the Riccati inequality \eqref{eq:Schur_Riccati_main} with equality, i.e., it is a Riccati equation. This reveals that the origin for explicit capacity formulae expressed as function of roots to some polynomials \cite{Kim06_MA,Kim10_Feedback_capacity_stationary_Gaussian,YangKavcicTatikondaGaussian} is the Riccati equation. We demonstrate this interesting fact in Section \ref{subsec:MA} for the MA noise process.

If the channel outputs, inputs, and the additive noise are scalars, but the hidden state of the noise $\vs_i$ can still be a vector, the capacity in Theorem \ref{th:main} can be simplified as follows.
\begin{theorem}[Feedback capacity of scalar channels]\label{th:scalar}
The feedback capacity of the scalar Gaussian channel \eqref{eq:pre_channel}-\eqref{eq:noise_state_space} with $\Lambda=1$ is given by the convex optimization problem
\begin{align}\label{eq:theorem_scalar}
   & C_{fb}(P)= \max_{\hat{\Sigma},\Gamma} \frac1{2}\log  \left( 1+ \frac{P + H \hat{ \Sigma} H^T+ \Gamma H^T  + H\Gamma^T}{\Psi}\right) \nn\\
&\text{s.t. } \ \ \ \begin{pmatrix}
    P & \Gamma\\
    \Gamma^T& \hat{ \Sigma}
\end{pmatrix} \succeq0, \ \ \\
&\begin{pmatrix}
     F \hat{ \Sigma} F^T + K_{p}\Psi K_{p}^T - \hat{ \Sigma}& F \Gamma^T + F \hat{ \Sigma} H^T + K_p\Psi \\
    \Gamma F^T + H\hat{ \Sigma} F^T + \Psi K_p^T & P  +  H\hat{ \Sigma} H^T + \Gamma H^T  + H\Gamma^T + \Psi
    \end{pmatrix} \succeq0,\nn
\end{align}
where $K_p$ and $\Psi$ are constants given in \eqref{eq:ricc_constants}.
\end{theorem}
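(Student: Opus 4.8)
The plan is to obtain Theorem~\ref{th:scalar} directly from Theorem~\ref{th:main} by specializing the convex program \eqref{eq:capacity_main} to the scalar case $m=p=1$, $\Lambda=1$, and then eliminating the (now scalar) variable $\Pi$ through a short monotonicity argument. No new machinery beyond Theorem~\ref{th:main} is needed.

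First I would set $m=p=1$ and $\Lambda=1$ in \eqref{eq:capacity_main}. The decision variable $\Pi$ becomes a scalar, $\Gamma\in\mathbb{R}^{1\times n}$, and $\hat{\Sigma}\in\mathbb{R}^{n\times n}$, while $\Psi$ and $\Psi_Y$ are scalars, so $\det(\cdot)$ collapses to the scalar value and the objective becomes $\tfrac12\log(\Psi_Y/\Psi)$. With $\Lambda=1$ the defining identity gives $\Psi_Y=\Pi+H\hat{\Sigma}H^T+\Gamma H^T+H\Gamma^T+\Psi$, so the objective equals $\tfrac12\log\big(1+(\Pi+H\hat{\Sigma}H^T+\Gamma H^T+H\Gamma^T)/\Psi\big)$. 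The trace constraint $\mathbf{Tr}(\Pi)\le P$ collapses to $\Pi\le P$; the first LMI in \eqref{eq:capacity_main} is unchanged since it does not involve $\Lambda$; and in the second LMI the substitution $\Lambda=1$ turns the off-diagonal blocks into $F\Gamma^T+F\hat{\Sigma}H^T+K_p\Psi$ and its transpose, and the bottom-right block into the scalar $\Psi_Y$. At this point the program coincides with \eqref{eq:theorem_scalar} except that $P$ is replaced everywhere by $\Pi$ (in the objective, in the top-left entry of the first LMI, and in the bottom-right entry of the second LMI).

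The only remaining step is to show that the optimum is attained at $\Pi=P$, which lets us drop $\Pi$ as a variable. The objective $\tfrac12\log(\Psi_Y/\Psi)$ is strictly increasing in $\Pi$ with all other variables fixed, since $\Pi$ enters $\Psi_Y$ additively with coefficient $1>0$ and $\Psi>0$. Feasibility is preserved when $\Pi$ is increased: raising $\Pi$ to $P$ adds the positive semidefinite matrix $\mathrm{diag}(P-\Pi,0,\dots,0)$ to the first LMI, adds the nonnegative scalar $P-\Pi$ to the bottom-right entry of the second LMI (again a positive semidefinite perturbation), and makes $\Pi\le P$ hold with equality. Hence any feasible triple $(\Pi,\hat{\Sigma},\Gamma)$ can be modified to one with $\Pi=P$ and at least as large an objective, so the maximum is attained at $\Pi=P$, and substituting $\Pi=P$ yields exactly \eqref{eq:theorem_scalar}. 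I expect this feasibility-preservation step to be the only point needing care, and it is elementary precisely because $\Pi$ appears in just one diagonal block of each LMI and only through a nonnegative multiple; otherwise the statement is an immediate corollary of Theorem~\ref{th:main} and inherits its convexity (a concave objective subject to the two LMIs).
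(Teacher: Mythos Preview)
Your proposal is correct and follows the paper's approach: the paper presents Theorem~\ref{th:scalar} as an immediate specialization of Theorem~\ref{th:main} to the scalar case $m=p=1$, $\Lambda=1$, without a separate proof. Your explicit monotonicity argument showing that $\Pi=P$ is optimal (via the additive, diagonal-only appearance of $\Pi$ in both LMIs) is a detail the paper leaves implicit, and it is exactly right.
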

Choosing $H=0$ in \eqref{eq:theorem_scalar} recovers the capacity formula of an additive white Gaussian noise (AWGN) channel $$C_{fb}(P) = \frac1{2}\log\left(1+\frac{P}{V}\right).$$

\begin{remark}\label{remarkgattami}
The state-space studied in \cite{Gattami} can be recovered from the setting in Theorem \ref{th:scalar} with $W=V=L=1$ and a stable $F$. In this case, the constants are $\Sigma=0,K_p= G, \Psi =1$ and the capacity expression in \eqref{eq:theorem_scalar} and that in \cite[Th. $4$]{Gattami} are almost in full agreement. In particular, they write the optimization problem with a supremum, and there is a difference in the sign of the first LMI in \eqref{eq:theorem_scalar} which reads as a strict LMI ($\succ$) in \cite{Gattami}. The reasoning for their strict LMI follows from an erroneous claim after their Theorem $3$ that the i.i.d. component of the optimal policy should have a positive variance at all times (the policy structure appears below in \eqref{discuss:policy}). This claim is utilized to show their argument on the invertibility of $\hat{\Sigma}$. In Section \ref{subsec:MA} we show, for the MA noise, that \textcolor{black}{the i.i.d. component of the optimal policy is zero}. In particular, we show that the optimum is achieved on the boundary of the LMI, i.e., the optimal variables induce a singular matrix in the LMI (see the proof of Theorem \ref{th:MA} for details).
\end{remark}



\subsection{Moving average (MA) noise}\label{subsec:MA}
In this section, we study a scalar channel with the MA noise process
\begin{align}\label{eq:MA_discussion}
    z_i&= w_i + \alpha w_{i-1},
\end{align}
with i.i.d. $w_i\sim N(0,1)$ and $\alpha\in\mathbb{R}$. In \cite{Kim06_MA}, the feedback capacity of the MA noise with $|\alpha|\le1$ was shown to be equal to
\begin{align}\label{eq:KIM_MA_capa}
 C_{fb}(P) &= -\log x_0,
\end{align}
where $x_0$ is the unique positive root of
\begin{align}\label{eq:KIM_poly}
Px^2 = (1- |\alpha| x)^2(1-x^2).
\end{align}
The MA noise can be realized by the state space \eqref{eq:noise_state_space} with $F=0,H=\alpha,G=W=V=L=1$. We derive here the feedback capacity expression for all $\alpha$.
\begin{theorem}[Moving-average noise]\label{th:MA}
The feedback capacity of the Gaussian channel with MA noise is
\begin{align}\label{eq:th_capacity_MA}
    C_{fb}(\alpha,P)&= \frac1{2}\log(1+\mathbf{SNR}),
\end{align}
where $\mathbf{SNR}$ is the maximal positive root of the polynomial
\begin{align}\label{eq:ma_our_poly}
    \mathbf{SNR} = \begin{cases}  \left(\sqrt{P} + |\alpha|\sqrt{\frac{\mathbf{SNR}}{1+\mathbf{SNR}}} \right)^2 & \text{if} \ \ |\alpha|\le1\\
      |\alpha|^{-2}\left( \sqrt{P} + \sqrt{\frac{\mathbf{SNR}}{1+\mathbf{SNR}}} \right)^2 & \text{if} \ \ |\alpha|> 1 \ \text{and} \ \Sigma_1>0.
    \end{cases}
\end{align}
\textcolor{black}{The capacity expression in \eqref{eq:th_capacity_MA}-\eqref{eq:ma_our_poly} for $|\alpha|\le1$ coincides with the feedback capacity expression in~\eqref{eq:KIM_MA_capa}.}
\end{theorem}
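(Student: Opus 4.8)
The plan is to specialize Theorem~\ref{th:scalar} to the state-space realization $F=0$, $H=\alpha$, $G=W=V=L=1$ of the MA noise \eqref{eq:MA_discussion}, and then to solve the resulting scalar program in closed form. The first step is to identify the constants $(\Sigma,K_p,\Psi)$ of \eqref{eq:ricc_constants}. Here the Riccati map \eqref{eq:riccati_general} is $f(\Sigma)=1-\Sigma-(\alpha^2\Sigma+1)^{-1}$, and $f(\Sigma)=0$ is equivalent to $\Sigma\,\bigl(1-\alpha^2(1-\Sigma)\bigr)=0$, so it has the two solutions $\Sigma=0$ and $\Sigma=1-\alpha^{-2}$. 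Inspecting the closed loop $F-K_pH=-\alpha/\Psi$, the first solution gives $(\Sigma,K_p,\Psi)=(0,1,1)$ and a stable closed loop precisely when $|\alpha|<1$, while the second gives $(\Sigma,K_p,\Psi)=(1-\alpha^{-2},\alpha^{-2},\alpha^2)$ and a stable closed loop precisely when $|\alpha|>1$; this accounts for the case split in \eqref{eq:ma_our_poly}. In the second branch the hypothesis $\Sigma_1>0$ is exactly what makes the Riccati recursion \eqref{eq:Riccati_recursion}, which here reads $\Sigma_{i+1}=1-(\alpha^2\Sigma_i+1)^{-1}$, converge to $1-\alpha^{-2}$ rather than stay at the non-stabilizing fixed point $\Sigma=0$ (which it would if $\Sigma_1=0$), so that the standing assumptions of Section~\ref{sec:setting} hold and Theorem~\ref{th:scalar} applies.

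Substituting these constants into \eqref{eq:theorem_scalar}: since $F=0$, the lower-right block of the second LMI has off-diagonal entry $K_p\Psi=1$, $(1,1)$ entry $K_p^2\Psi-\hat\Sigma=\Psi^{-1}-\hat\Sigma$, and $(2,2)$ entry $\Psi(1+\mathbf{SNR})$, where $\mathbf{SNR}\triangleq(P+\alpha^2\hat\Sigma+2\alpha\Gamma)/\Psi$ is exactly the argument of the objective. The $2\times2$ positive-semidefiniteness test reduces the second LMI to $\hat\Sigma\le\Psi^{-1}$ together with $(1-\Psi\hat\Sigma)(1+\mathbf{SNR})\ge1$, i.e.\ $\Psi\hat\Sigma\le\mathbf{SNR}/(1+\mathbf{SNR})$, while the first LMI is $\Gamma^2\le P\hat\Sigma$. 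Both the objective and this reduced second constraint are nondecreasing in $2\alpha\Gamma$, so at the optimum $\Gamma=\operatorname{sign}(\alpha)\sqrt{P\hat\Sigma}$ and hence $\Psi\,\mathbf{SNR}=(\sqrt P+|\alpha|\sqrt{\hat\Sigma})^2$. The problem thus collapses to the scalar maximization of the strictly increasing map $\hat\Sigma\mapsto\mathbf{SNR}(\hat\Sigma)=\Psi^{-1}(\sqrt P+|\alpha|\sqrt{\hat\Sigma})^2$ over $\hat\Sigma\ge0$ subject to $\hat\Sigma\le\Psi^{-1}$ and $g(\hat\Sigma)\triangleq\Psi\hat\Sigma-\mathbf{SNR}(\hat\Sigma)/(1+\mathbf{SNR}(\hat\Sigma))\le0$.

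Since $\mathbf{SNR}(\cdot)$ is increasing, the optimum is attained at the largest feasible $\hat\Sigma$, which is a zero of $g$ (the bound $\hat\Sigma\le\Psi^{-1}$ is slack there because $\Psi\hat\Sigma=\mathbf{SNR}/(1+\mathbf{SNR})<1$). At such a zero $|\alpha|\sqrt{\hat\Sigma}=|\alpha|\,\Psi^{-1/2}\sqrt{\mathbf{SNR}/(1+\mathbf{SNR})}$, and substituting back yields the fixed-point equation
\[
\mathbf{SNR}=\Psi^{-1}\Big(\sqrt P+|\alpha|\,\Psi^{-1/2}\sqrt{\tfrac{\mathbf{SNR}}{1+\mathbf{SNR}}}\Big)^2 ,
\]
which for $\Psi=1$ (when $|\alpha|\le1$) and for $\Psi=\alpha^2$ (when $|\alpha|>1$) is exactly the polynomial in \eqref{eq:ma_our_poly}; since $g(0)=-P/(P+\Psi)<0$, $g(\Psi^{-1})>0$, and a short monotonicity argument shows $g$ crosses zero only once, this largest feasible $\hat\Sigma$ is well-defined and corresponds to the maximal positive root, which proves \eqref{eq:th_capacity_MA}. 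I expect this last point to be the main obstacle: verifying that the optimum sits on the boundary of the Riccati LMI, so that $\mathbf{SNR}$ is a \emph{root} of the polynomial and not merely bounded by it. Equivalently, as discussed in Remark~\ref{remarkgattami}, the i.i.d.\ component of the optimal policy is zero; this is precisely the place flagged there, and it rests on the interplay of the monotonicity of $\mathbf{SNR}(\hat\Sigma)$ with the sign structure of $g$. The remaining ingredients --- the Riccati factorization, the $2\times2$ Schur complements, and eliminating $\Gamma$ by Cauchy--Schwarz --- are routine.

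Finally, to check that the $|\alpha|\le1$ expression coincides with \eqref{eq:KIM_MA_capa}, I would set $x_0=(1+\mathbf{SNR})^{-1/2}$, so that (with the same logarithm) $-\log x_0=\tfrac12\log(1+\mathbf{SNR})$ and $1-x_0^2=\mathbf{SNR}/(1+\mathbf{SNR})$, whence $\sqrt{\mathbf{SNR}}=\sqrt{1-x_0^2}/x_0$. Inserting this into the fixed-point equation $\sqrt{\mathbf{SNR}}=\sqrt P+|\alpha|\sqrt{1-x_0^2}$ and rearranging gives $\sqrt P\,x_0=\sqrt{1-x_0^2}\,(1-|\alpha|x_0)$; squaring yields exactly $Px_0^2=(1-|\alpha|x_0)^2(1-x_0^2)$, i.e.\ \eqref{eq:KIM_poly}. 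Since $\mathbf{SNR}\ge P$ forces $x_0\le(1+P)^{-1/2}$, one has $1-|\alpha|x_0\ge 1-|\alpha|(1+P)^{-1/2}\ge0$ for $|\alpha|\le1$, so the square-root step is reversible and our root is the unique positive root of \eqref{eq:KIM_poly}, completing the identification.
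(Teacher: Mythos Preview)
Your proposal is correct and follows essentially the same route as the paper: specialize Theorem~\ref{th:scalar} with $F=0$, $H=\alpha$, $G=W=V=L=1$, compute the stabilizing Riccati constants $(\Sigma,K_p,\Psi)$ in the two regimes, show both LMIs are tight at the optimum, and then read off the fixed-point equation for $\mathbf{SNR}$; the equivalence with \eqref{eq:KIM_MA_capa} via $x_0=(1+\mathbf{SNR})^{-1/2}$ is the same substitution the paper uses. The only stylistic difference is that you argue tightness of the Riccati LMI directly from the monotonicity of $\mathbf{SNR}(\hat\Sigma)$ and the sign of $g$ on $[0,\Psi^{-1}]$, whereas the paper appeals to the general Lemma~\ref{lemma:2conditions}; your argument is self-contained, and note that even if $g$ had several zeros, the largest feasible $\hat\Sigma$ would still be the largest zero (since $g(\Psi^{-1})>0$), so uniqueness of the crossing is not actually needed for the ``maximal positive root'' identification.
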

\textcolor{black}{For $|\alpha|\le1$, the feedback capacity is independent of the initial state covariance $\Sigma_1$ but, for $|\alpha|>1$, we assume $\Sigma_1\neq0$. This condition is made to avoid the singularity that $\Sigma_i=0$ for all $i$, and does not converge to the stabilizing solution of the Riccati equation. If $|\alpha|>1$ and $\Sigma_1=0$ (i.e., the initial state is deterministic), the capacity can still be computed but with the solution to the first polynomial in \eqref{eq:ma_our_poly}.}

To compare the capacity expression in Theorem \ref{th:MA} with \eqref{eq:KIM_MA_capa} when $|\alpha|\le1$, we can use the change of variable $x^2 =  (1+\mathbf{SNR})^{-1}$ to write \eqref{eq:th_capacity_MA} as $C_{fb}(P) = -\log x_0$ where $x_0$ solves $\frac{1-x^2}{x^2}=\left(\sqrt{P} + |\alpha|\sqrt{1-x^2} \right)^2$. It is interesting to note that the latter polynomial and \eqref{eq:KIM_poly} are different. However, the second part of Theorem \ref{th:MA} confirms that the feedback capacities are in agreement for $|\alpha|\le1$ by showing that their positive roots coincide. We proceed to prove Theorem~\ref{th:MA}.
\begin{proof}[Proof of Theorem \ref{th:MA}]
We compute the capacity expressions in Theorem \ref{th:scalar}, and then verify thereafter that the required conditions are met.

The Schur complement of $\begin{pmatrix}
    P & \Gamma\\
    \Gamma^T& \hat{ \Sigma}
\end{pmatrix}\succeq0$ evaluated in the optimal variables is shown to be zero using contradiction. Assume that $P-\Gamma^2\hat{\Sigma}^{-1}=p$ for some $p>0$. Then, we can replace $\Gamma$ with $\Gamma'=\Gamma(1+\Gamma^{-2}p\hat{\Sigma})^{1/2}$ to obtain a larger objective. The Riccati LMI can be verified to be satisfied with this replacement. This step shows that the LMI cannot be strict at the optimum. For the other LMI in Theorem \ref{th:scalar}, a similar reasoning shows that the Schur complement of the Riccati LMI \eqref{eq:Schur_Riccati_main} can be achieved with equality (see Lemma \ref{lemma:2conditions}), and the Schur complement simplifies to the Riccati equation $\hat{\Sigma} =  K_p \Psi K_p - (\Psi K_p)^2\Psi_Y^{-1}$.

\textcolor{black}{In order to compute the capacity expression in Theorem \ref{th:scalar}, we compute the Riccati constants $K_p$ and $\Psi$ from the stabilizing solution of the Riccati equation in \eqref{eq:riccati_general}. The Riccati equation has two solutions $\Sigma=0,1-\frac1{\alpha^2}$. For $|\alpha|<1$, the stabilizing solution is $\Sigma=0$ which implies $K_p=\Psi=1$, while for $|\alpha|>1$, the stabilizing solution is $\Sigma  = 1 - \frac1{\alpha^2}$ which implies $\Psi = \alpha^2$ and $K_p = \alpha^{-2}$.} We note that in both cases $K_p\Psi=1$ so that the Riccati equation above simplifies to $\hat{\Sigma} = K_p - \Psi_Y^{-1}$ where $K_p$ is either $1$ or $\alpha^{-2}$. The decoder innovation can be written as
\begin{align}\label{eq:ma_proof_fixed}
    \Psi_Y = \Psi + P + \alpha^2\hat{\Sigma}  + 2|\alpha|\sqrt{P\hat{\Sigma}} = \Psi + \left(\sqrt{P} + |\alpha|\sqrt{K_p - \Psi_Y^{-1}}\right)^2,
\end{align}
where the sign of $\Gamma$ was chosen to maximize $\Psi_Y$. We denote $\Psi_Y\Psi^{-1} = 1+\mathbf{SNR}$ and substitute the latter in both sides of \eqref{eq:ma_proof_fixed} to obtain the fixed-point equations in \eqref{eq:ma_our_poly}.

We verify the conditions for Theorem \ref{th:scalar}. The pair $(F=0,H=\alpha)$ is detectable (Assumption $1$) for all $\alpha$, and $(F_s,W_s)=(-\alpha,0)$ is controllable on the unit-circle for all $|\alpha|\neq 1$ (Assumption $2$). Thus, for $|\alpha|\neq1$, the stabilizing solution for the Riccati equation exists and is equal to $\Sigma=\max\{0,1-\frac1{\alpha^2}\}$. It is easy to check that the Riccati recursion in \eqref{eq:Riccati_recursion}, $\Sigma_{i+1} = 1 - \frac1{1+\alpha^2\Sigma_i}$, converges to the stabiliing solution unless $|\alpha|>1$ and $\Sigma_1=0$.

\textcolor{black}{If $|\alpha|=1$, the only solution to the Riccati equation is $\Sigma=0$, but it is not a stabilizing solution (it is the maximal solution). Although Theorem \ref{th:main} concerns with noise processes whose Riccati equations have stabilizing solutions, the upper bound extends to non-stabilizing solutions as well, and for the particular instance of the MA noise with $|\alpha|=1$, we verified that the lower bound in Lemma \ref{lemma:achievable} holds as well.}

Finally, we show the equivalence of our capacity expression and \eqref{eq:KIM_MA_capa}, for $|\alpha|<1$, by proving that the positive roots of \eqref{eq:KIM_poly} and $\frac{1-x^2}{x^2}=\left(\sqrt{P} + |\alpha|\sqrt{1-x^2} \right)^2$ coincide. The positive root of \eqref{eq:KIM_poly} satisfies $\frac{1-x_0^2}{x_0^2} = \frac{P}{(1- |\alpha| x_0)^2}$ and $\sqrt{1-x_0^2} = \frac{\sqrt{P}x_0}{1-|\alpha|x_0}$, and by substituting these equations into the second polynomial, we get $$\frac{1-x^2}{x^2} - \left(\sqrt{P} + |\alpha|\sqrt{1-x^2} \right)^2 = \frac{P}{(1- |\alpha| x_0)^2} - \left(\sqrt{P} + |\alpha|\frac{\sqrt{P}x_0}{1-|\alpha|x_0} \right)^2=0.$$
The other direction can be shown similarly.
\end{proof}

\subsection{Auto-regressive (AR) noise}
The auto-regressive (AR) noise process of first order is given by \begin{align}\label{eq:AR_discussion}
    z_i&= \beta z_{i-1} + w_i,
\end{align}
where $w_i\sim N(0,1)$ is an i.i.d. sequence. This is one of the simplest instances of colored Gaussian noise and was studied in \cite{Butman_conjecture,Butman69,TiernanSchalk_AR_UB}. \textcolor{black}{A closed-form feedback capacity expression for the stationary case $|\beta| <1$ was derived in \cite{Kim10_Feedback_capacity_stationary_Gaussian}}\footnote{\textcolor{black}{Recently \cite{comments_kim} showed that \cite{Kim10_Feedback_capacity_stationary_Gaussian} has an error in Corollary $4.4$, which renders the feedback capacity expression in \cite{Kim10_Feedback_capacity_stationary_Gaussian} for the AR noise with $|\beta|<1$ unjustified. Particularizing our general capacity expression to AR noise and carrying out a computation similar that of the MA noise (Theorem 3) shows that the feedback capacity expression in \cite{Kim10_Feedback_capacity_stationary_Gaussian} is correct for $|\beta| < 1$.}}. We present next the AR noise with a general $\beta$. The AR process can be realized by \eqref{eq:noise_state_space} with $F = H= \beta$ and $G=L=W=V=1$.

\begin{figure}[t]
    \centering
\psfrag{B}[t][][1]{Regression parameter $\beta$}
\psfrag{A}[b][][1]{Rate [bits/ch. use]}
\includegraphics[scale=0.3]{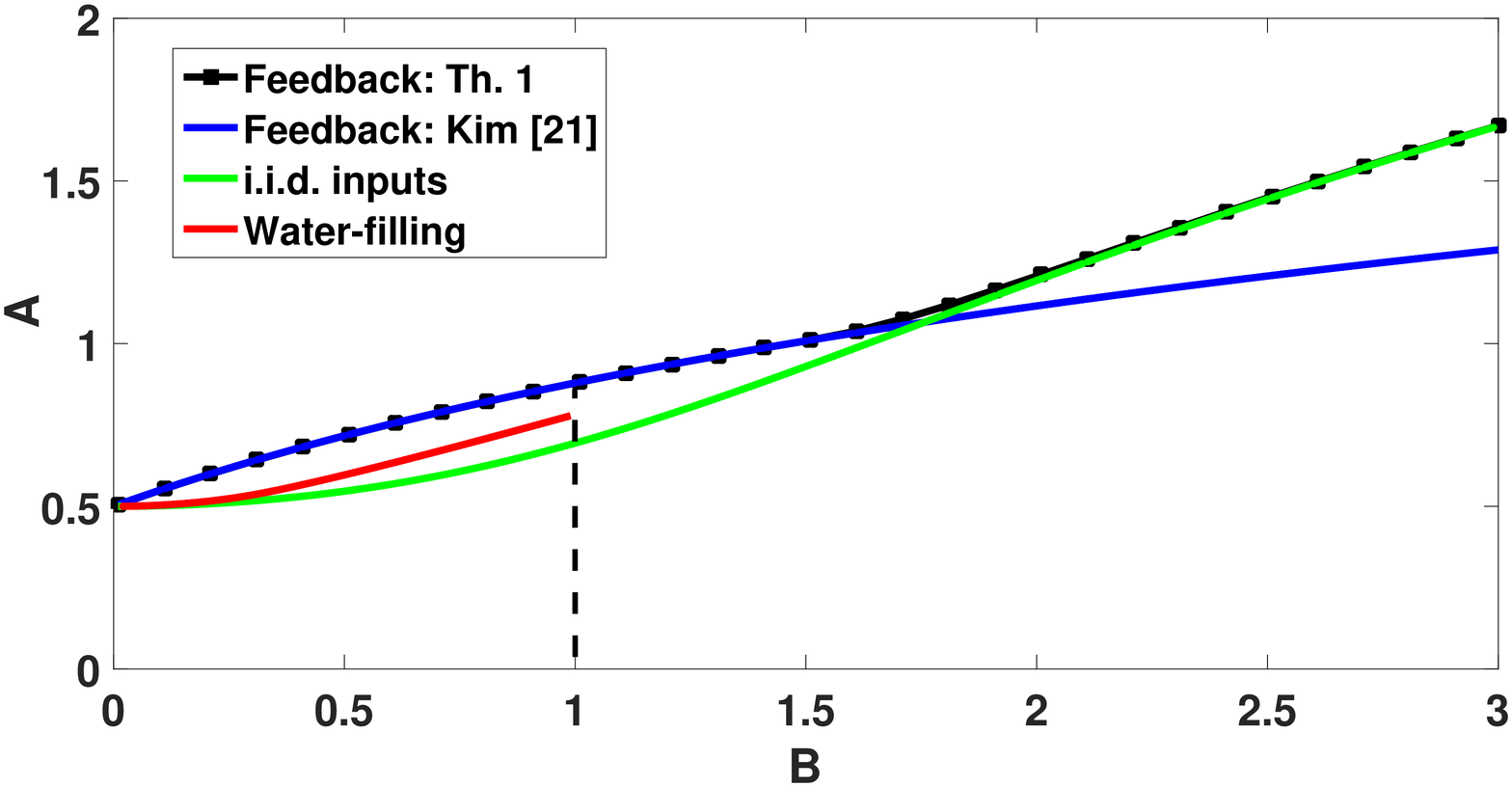}
    \caption{The feedback capacity of the Gaussian channel with an auto-regressive noise of first order and a unit-power input constraint (black curve). The blue curve describes the feedback capacity expression for the stationary case ($|\beta|<1$) from \cite{Kim10_Feedback_capacity_stationary_Gaussian}, but is plotted here for greater values of $\beta$, \rr{as it numerically coincides with the feedback capacity as long as $M=0$ (see Fig. \ref{fig:AR_power} below)}. The red curve corresponds to the feedforward capacity (without feedback) obtained via a water-filling solution, and the green curves corresponds to an i.i.d. coding law (feedback-independent) of the channel inputs \rr{in \eqref{eq:Riid}}.}
    \label{fig:AR_capa}
\end{figure}

In Fig.~\ref{fig:AR_capa}, the feedback capacity in Theorem \ref{th:main}, the feedback capacity expression for $|\beta|<1$ from \cite{Kim10_Feedback_capacity_stationary_Gaussian}, and the non-feedback capacity (using a water filling solution) are plotted. Additionally, we plot the maximal achievable rate with i.i.d. inputs by adding the constraint $\Gamma=0$ to \eqref{eq:theorem_scalar}. This rate can be computed explicitly as
\begin{align}\label{eq:Riid}
  R_{iid}(\beta,P=1) =\frac1{2}\log\left(1+ \frac{\beta^2}{2}\left(1 + \sqrt{1+\frac{4}{\beta^4}}\right)\right).
\end{align}
First, it is interesting to note that black and blue curves coincide for some non-stationary noise with $1\le \beta\lesssim 1.5$. This means that the capacity expression in \cite{Kim10_Feedback_capacity_stationary_Gaussian} holds true even for some values beyond the stationary regime. The rate achieved with i.i.d. inputs (green curve) approaches the feedback capacity for large regression parameters. Thus, the plot shows that, as the regression parameter grows large, the rate achieved by i.i.d. inputs approaches the feedback capacity, and the feedback link has a negligible contribution in terms of capacity. From operational perspective, the feedforward capacity lies in between the i.i.d. achievable rate (green curve) and the feedback capacity (black curve). Thus, for large $\beta$, it can be well-approximated with simple i.i.d. inputs, i.e., codewords with memory are not needed. These phenomena are related to the structure of the optimal inputs distribution that is presented and discussed for the AR noise in Section \ref{subsec:inputs}.

\section{Optimal inputs distribution and coding scheme}\label{sec:scheme}
In this section, we present a capacity-achieving, time-invariant inputs distribution that can be computed from the convex optimization in Theorem~\ref{th:main}. We then use this inputs distribution to construct a capacity-achieving coding scheme for scalar channels and discuss a possible extension to MIMO channels.

\subsection{Optimal inputs distribution}\label{subsec:inputs}
The optimal decision variables in Theorem \ref{th:main} induce a time-invariant capacity-achieving inputs distribution\footnote{More precisely, a capacity-achieving policy in Lemma \ref{lemma:achievable} is the time-invariant law in \eqref{discuss:policy} for $t>1$, and a different coding law for $t=1$.}:
\begin{align}\label{discuss:policy}
\vx_i&= \Gamma \hat{\Sigma}^\dagger (\vhs_{i}-\vhhs_{i}) + \vm_i,
\end{align}
where $\vhs_i$ is defined in \eqref{eq:KF_estimator}, and its estimate at the decoder is defined by
\begin{align}\label{eq:vhhs_def}
    \vhhs_{i}&\triangleq \E[\vhs_{i}|\vy^{i-1}].
\end{align}
The optimal policy is composed as the sum of two signaling components. The first component, $(\vhs_{i}-\vhhs_{i})$ corresponds to the decoder's estimation error, and is a function of the feedback. Its transmission refines the decoders' knowledge on $\vhs_i$ by transmitting the states innovation (the vector $\vhs_i$ can also be regarded as the channel state, see Section \ref{sec:setting}). \textcolor{black}{The covariance of the innovation $(\vhs_{i}-\vhhs_{i})$ is $\hat{\Sigma}$, so that the covariance of the first component is $\cov(\Gamma \hat{\Sigma}^\dagger (\vhs_{i}-\vhhs_{i}))=\Gamma \hat{\Sigma}^\dagger\Gamma^T$.} The second component, $\vm_i$, is independent of $(\vx^{i-1},\vy^{i-1})$ (and thus is feedback-independent), and has an i.i.d. distribution with the remaining covariance, i.e., $\vm_i\sim N(0,M)$ with $M\triangleq\Pi - \Gamma \hat{\Sigma}^\dagger\Gamma^T$. The transmission of the vector $\vm_i$ increases the uncertainty of the channel state $\vhs_i$ at the decoder\textcolor{black}{, but it can be used to transmit new information (on the message).} In the AWGN channel, for instance, the entire power is allocated to the second component $\vm_i$. We proceed to illustrate the policy behavior for the AR noise.

\begin{figure}[b]
    \centering
\psfrag{B}[l][][1]{$\Gamma \hat{\Sigma}^\dagger \Gamma^T$}
\psfrag{A}[l][][1]{$M$}
\psfrag{C}[][][1]{Regression parameter $\beta$}
\psfrag{D}[][][1]{Power}
\includegraphics[scale=0.2]{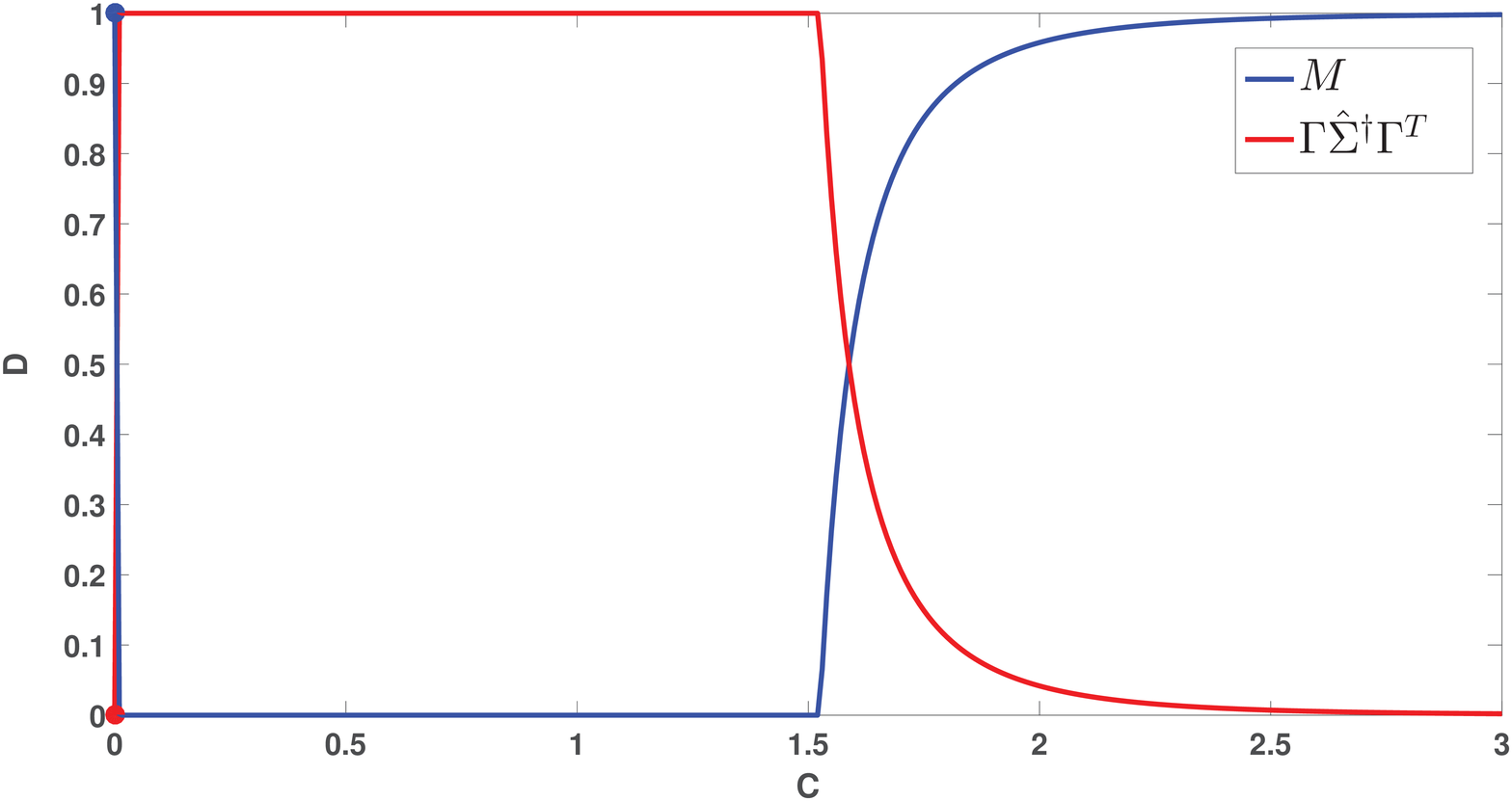}
    \caption{The power of the signaling components in the inputs distribution \eqref{discuss:policy} for AR noise. The red curve corresponds to the feedback-dependent signal $\Gamma\hat{\Sigma}^\dagger(\vhs_i-\vhhs_i)$ and the blue curve corresponds to $\vm_i$. Note that there are two phase transitions, at $\beta=0$ and $\beta\approx1.5$. Consistent with \cite[Th. $4.6$]{Kim10_Feedback_capacity_stationary_Gaussian}, $M=0$ achieves the feedback capacity in the stationary regime $\beta\in(0,1)$. \textcolor{black}{Also, as $\beta$ grows large, the power allocated to the i.i.d. component grows as well. Combined with the i.i.d. coding law curve in Fig. \ref{fig:AR_capa} (green curve), this shows that the feedback link has negligible contribution to the capacity solution.}}
    \label{fig:AR_power}
\end{figure}

In Fig.~\ref{fig:AR_power}, the power allocated to each of the signals in \eqref{discuss:policy} is plotted for the AR noise in \eqref{eq:AR_discussion} with a power constraint $P=1$. \textcolor{black}{For $0<\beta\le1$, Fig.~\ref{fig:AR_power} agrees with the claim in \cite[Th. $4.6$]{Kim10_Feedback_capacity_stationary_Gaussian} on the sufficiency of inputs distribution with $\vm_i=0$  for sclar and stationary noise (see also next paragraph).} However, beyond the stationary regime, there is a sharp phase transition, and the power allocated to $\vm_i$ increases as $\beta$ grows. The phase transition location, $\beta\approx 1.5$, explains the gap between the feedback capacity in Theorem \ref{th:main} and the capacity expression in \cite{Kim10_Feedback_capacity_stationary_Gaussian} since the latter used a policy with $\vm_i=0$. Fig. \ref{fig:AR_power} also shows that the rate achieved with i.i.d. inputs in \eqref{eq:Riid} approaches the feedback capacity for growing $\beta$. This implies that the Schalkwijk-Kailath (SK) encoding law is close to optimal in this regime \cite{SchalkwijkKailath66_feedback_scheme}.

\textcolor{black}{The role of the second component $\vm_i$ has been discussed in several papers \cite{Kim10_Feedback_capacity_stationary_Gaussian,Gattami,comments_kim}. In \cite[Cor. $4.4$]{Kim10_Feedback_capacity_stationary_Gaussian}, it is claimed that for scalar channels with stationary noise, the capacity can be achieved with $M=0$. Recently, \cite{comments_kim} showed that the proof of the claim in \cite[Cor. $4.4$]{Kim10_Feedback_capacity_stationary_Gaussian} relies on an erroneous calculation and thus is invalid. Our capacity derivation relies on a general policy with $M\succeq0$ (with a different coding for the first time $t=1$). As illustrated in the examples above, for general noise processes, $M$ can be either positive or zero; for the MA noise, we prove in Theorem \ref{th:MA} that $M=0$ is necessary to achieve the capacity, and for the AR noise, it is illustrated that $M>0$ in the non-stationary regime (Fig. \ref{fig:AR_power}). When specializing our capacity expression for stationary noise processes, it may be utilized to find a counterexample for \cite[Cor. $4.4$]{Kim10_Feedback_capacity_stationary_Gaussian}.
We ran extensive simulations to specialize our capacity expression in Theorem \ref{th:scalar} to various stationary noise processes and to compute the optimal $M$, yet we did not find a counterexample to [21, Cor. 4.4]. Thus the claim in \cite[Cor. $4.4$]{Kim10_Feedback_capacity_stationary_Gaussian} may be true.}

As mentioned in Remark \ref{remarkgattami}, the fact that $M=0$ does not imply that the achievable rate is as erroneously claimed in \cite{Gattami}. If $M=0$, it simply implies that the message is encoded at the first time with $\vm_1\neq0$, and from $t>1$ the encoder follows the rule $\vx_i=\Gamma\hat{\Sigma}^\dagger(\vhs_i-\vhhs_i)$. In the next section, we show that this explicit coding scheme is capacity-achieving with double-exponential decay in the error probability for any rate below capacity.

\subsection{Coding scheme for scalar channels}
In this section, we construct a capacity-achieving coding scheme for scalar channels (with a vector hidden state) based on the optimal inputs distribution in \eqref{discuss:policy}. Throughout this section, it is assumed that the optimal inputs distribution in \eqref{discuss:policy} satisfies $M = 0$. The design of an explicit coding scheme with $M\neq0$ remains open. 

Our scheme resembles the SK scheme \cite{SchalkwijkKailath66_feedback_scheme,gallager_SK_simple}, and other posterior matching schemes for memoryless channels \cite{horstein_original,shayevitz_posterior_mathcing,tara_posterior,Li_elgamal_matching} and channels with memory \cite{Kim06_MA,Kim10_Feedback_capacity_stationary_Gaussian,Sabag_BIBO_IT,Sabag_BEC,achilles_posterior_scheme,PeledSabagBEC,Elia_MIMO_ITW} in its main idea to refine the decoders' knowledge of the message (or equivalently, to refine the decoders' knowledge of the first noise instance $\vz_0$ in Gaussian channels). The main difference with the SK scheme is that rather than encoding the scaled innovation of the first noise instance $z_0$, our encoding follows \eqref{discuss:policy} to transmit the innovation of $\vhs_i$. This modification results in a more numerically stable encoding since our scaling factor $\Gamma\hat{\Sigma}^\dagger$ is a constant, while in the SK scheme the scaling of the message innovation increases with time.

A related scheme for a similar setting appears in \cite{Kim10_Feedback_capacity_stationary_Gaussian}. Both schemes follow the encoding in \eqref{discuss:policy} but, only our paper provides a computable expression for the coefficient matrix $\Gamma\hat{\Sigma}^\dagger$ (via Theorem \ref{th:main}). Additionally, we simplify the multidimensional encoding method in \cite{Kim10_Feedback_capacity_stationary_Gaussian} by showing that, even when the hidden state is a vector, it is sufficient to encode the message in a single time instance. Additionally, we provide an explicit smoother for the maximum likelihood decoder proposed in \cite{Kim10_Feedback_capacity_stationary_Gaussian,Ihara_Gaussian_Coding}.



Our coding scheme consists of three building blocks:
\begin{enumerate}
\item The message $m\in[1: 2^{nR}]$ is mapped to a zero-mean, unit-separation symbol $U(m) = m-2^{nR-1}$ whose variance is $\text{Var}(U) = (2^{2nR}-1)/12$. The normalized symbol $\bar{U}(m) = \text{Var}(U)^{-\frac1{2}}U(m)$ will be transmitted at the first time instance as $\vx_0(m)$.
\item In subsequent times, the encoding process is simply to use the optimal inputs distribution in \eqref{discuss:policy}. The estimates $\vhs_i$ and $\vhhs_i$ can be computed directly from \eqref{eq:prel_estimator} and \eqref{eq:vhhs_recursion} and the constant $\Gamma\hat{\Sigma}^\dagger$ is obtained from the optimization in Theorem \ref{th:main}.
\item  When transmission ends, the decoder constructs the maximum-likelihood estimate of the first noise instance $\vz_0$ using the measurements $\vy^n$. This is a smoothing problem that is formally presented in Lemma \ref{lemma:smoother}.
\end{enumerate}
We are now ready to present the coding scheme as Algorithm~\ref{alg:bounded}. The abbreviation $\text{KF}$ stands for Kalman filter while $\text{Smooth}$ is for the smoothing function in Lemma \ref{lemma:smoother} (Eq. \eqref{eq:smoother}).

\begin{algorithm}[h!]
\caption{Optimal scheme for scalar channels}
\label{alg:bounded}
\begin{algorithmic}
\State Inputs: $m$, $\Gamma$, $\hat{\Sigma}$, $\vhs_0=\vhhs_0=0$, \rr{$\hat{z}_{0|0}=0$}
\State $x_0\gets \bar{U}(m)$ \Comment \emph{Transmission}
\State Store $y_0 = x_0+z_0$ \Comment \emph{Dec.}
\Procedure{$i=1:n$}{}
\State $\vhs_{\rr{i}}\gets \text{KF}(\vhs_{\rr{i-1}},z_{i-1} = y_{\rr{i-1}}-x_{\rr{i-1}})$
 \Comment \emph{Enc. (Eq. \eqref{eq:enc_KF_LTI})}
\State $\vhhs_{\rr{i}} \gets \text{KF}(\vhhs_{\rr{i-1}},y_{\rr{i-1}},i\rr{-1})$
 \Comment \emph{Dec. (Eq. \eqref{eq:vhhs_recursion})}
\State $x_{\rr{i}} \gets \Gamma \hat{\Sigma}^\dagger(\vhs_{\rr{i}} - \vhhs_{\rr{i}})$  \Comment \emph{Transmission}
\State $y_{\rr{i}} = x_{\rr{i}} + z_i$  \Comment \emph{Ch. output}
\State $\hat{z}_{\rr{0|i}}\gets \text{Smooth}(\hat{z}_{\rr{0|i-1}},y_{\rr{i}},i)$ \Comment \emph{Dec. (Eq. \eqref{eq:smoother})}
\EndProcedure
\State $\hat{m} = \arg\min_{m'} |(y_0 - \hat{z}_{\rr{0|n}}) - \bar{U}(m')|$ \Comment \emph{Dec.}
\end{algorithmic}
\end{algorithm}
\textcolor{black}{We remark that the encoder estimate $\vhs_i$ is computed with the time-invariant Kalman filter in  \eqref{eq:enc_KF_LTI}, while the decoder Kalman filter for $\vhhs_i$ is time-varying is used with the time-invariant parameters $\Psi_i=\Psi, K_{p,i}=K_p$ and the initial conditions $\rr{\hat{\Sigma}_0}=0$ and $M_0 = V(\bar{U})$.}
The following theorem shows the optimality of our coding scheme.
\begin{theorem}[Capacity-achieving coding scheme]\label{th:coding}
For any rate $R<C_{fb}(P)$, the error probability of the coding scheme for scalar channels (with $\vm_i=0$) in Algorithm~\ref{alg:bounded} decays in a doubly-exponential rate for large $n$.
\end{theorem}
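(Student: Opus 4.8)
The plan is to analyze the error event of Algorithm~\ref{alg:bounded} by tracking the decoder's residual uncertainty about the transmitted symbol $\bar U(m)$, which by construction equals its uncertainty about the first noise instance $z_0$ after observing $\vy^n$. First I would reduce the decoding error to a Gaussian tail estimate: since $\hat z_{0|n}$ is the maximum-likelihood (conditional-mean) smoother of $z_0$ given $\vy^n$ and everything is jointly Gaussian, the estimation error $z_0-\hat z_{0|n}$ is Gaussian with some variance $\sigma_n^2$, independent of the message, and a nearest-symbol error occurs only if $|z_0-\hat z_{0|n}|>\tfrac12 V(\bar U)^{-1/2}$; a union bound over the $2^{nR}$ symbols then gives an error probability at most $2^{nR}\cdot 2Q\!\left(\tfrac12 V(\bar U)^{-1/2}\sigma_n^{-1}\right)$. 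Because $V(\bar U)=(2^{2nR}-1)/12$ grows like $2^{2nR}$, this bound is $\exp(-\Theta(2^{2nR}/\sigma_n^2))$ up to the $2^{nR}$ prefactor, so the whole thing decays doubly-exponentially as long as $2^{2nR}\sigma_n^2$ stays bounded, i.e. $\sigma_n^2$ decays at least like $2^{-2n(R+\epsilon)}$ for some $\epsilon>0$ whenever $R<C_{fb}(P)$.

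The heart of the proof is therefore the claim that the smoother error variance $\sigma_n^2$ for $z_0$ decays exponentially at rate $2C_{fb}(P)$, i.e. $\tfrac1n\log\sigma_n^{-2}\to 2C_{fb}(P)$. I would establish this by connecting the volume reduction of the smoothing error covariance to a mutual-information / entropy-rate identity: writing $h(z_0)-h(z_0\mid \vy^n)=I(z_0;\vy^n)$, and observing that under the scheme the channel input carries exactly the refinement information about $\vhs_i$ (hence about $z_0$ through the first transmission $x_0=\bar U(m)$), this mutual information grows like $nC_{fb}(P)$. Concretely, I would invoke the smoothing analysis referenced around Lemma~\ref{lemma:smoother} (the ``interesting relation between a smoothing problem and the feedback capacity expression'' promised in the introduction): with the time-invariant optimal policy \eqref{discuss:policy} under $M=0$, the per-step log-volume reduction of the $z_0$-smoother error equals $\tfrac12\log\det(\Psi_Y)-\tfrac12\log\det(\Psi)$, which is precisely the objective in Theorem~\ref{th:main} evaluated at the optimal $(\hat\Sigma,\Gamma)$, namely $C_{fb}(P)$. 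Since the objective is the log-ratio of the output and noise innovation variances, the smoother error for $z_0$ shrinks by a factor $e^{-2C_{fb}(P)}$ per channel use, giving $\sigma_n^2 \doteq e^{-2nC_{fb}(P)}$.

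Putting the pieces together: for $R<C_{fb}(P)$, pick $\epsilon$ with $R+\epsilon<C_{fb}(P)$; then for $n$ large, $2^{2nR}\sigma_n^2 \le 2^{2nR} e^{-2n(R+\epsilon)\ln 2 \cdot \frac{1}{\ln 2}\cdot \ln 2}$ — more cleanly, $2^{2nR}\sigma_n^2 \to 0$ exponentially, so the argument of the $Q$-function grows exponentially in $n$, and $2^{nR}Q(e^{\Theta(n)})$ decays doubly-exponentially. I would also need to check two technical points: that the time-varying decoder Kalman filter for $\vhhs_i$ with the time-invariant parameters $\Psi_i=\Psi$, $K_{p,i}=K_p$ actually converges so that the per-step volume reduction stabilizes at the optimal value (this follows from the Riccati convergence assumptions in Section~\ref{sec:setting} and the fact that the optimal $\hat\Sigma$ satisfies the Riccati equation \eqref{eq:Schur_Riccati_main} with equality, via Lemma~\ref{lemma:2conditions}), and that the power constraint is met along the scheme (it is, since \eqref{discuss:policy} with covariance $\Pi$ and $\mathbf{Tr}(\Pi)\le P$ is exactly the constraint in Theorem~\ref{th:main}, with the first-instance symbol contributing a vanishing $O(1/n)$ to the average power). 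The main obstacle is the second paragraph: rigorously identifying the exponential decay rate of the smoother error variance with $2C_{fb}(P)$, i.e. showing the smoothing problem's error-covariance volume reduction matches the capacity objective exactly rather than just up to a constant — this requires carefully setting up the augmented state-space for the $(z_0,\vhs_i,\vhhs_i)$ dynamics under the optimal policy and computing the steady-state information rate, which is where the filtering/control machinery referenced in the introduction does the real work.
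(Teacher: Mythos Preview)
Your proposal follows the paper's approach: reduce the decoding error to a Gaussian tail bound on $z_0-\hat z_{0|n}$, then invoke Lemma~\ref{lemma:smoother} (specifically the scalar recursion $\hat Z_{0|i}=\Psi_{Y,i}^{-1}\Psi\,\hat Z_{0|i-1}$ with $\Psi_{Y,i}\to\Psi_Y^\ast$) to obtain exponential decay of the smoother variance at rate $2C_{fb}(P)$, which gives doubly-exponential decay of $P_e^{(n)}$ for any $R<\tfrac12\log(\Psi_Y^\ast/\Psi)=C_{fb}(P)$. Two minor slips worth cleaning up: the $2^{nR}$ union bound over symbols is unnecessary (for M-PAM the error probability is already bounded by $2Q(d/\sigma_n)$ with half-separation $d=\tfrac12\text{Var}(U)^{-1/2}$, no union over messages needed), and your Gaussian tail exponent should read $\exp\!\big(-\Theta(2^{-2nR}/\sigma_n^2)\big)$ rather than $2^{+2nR}$---your subsequent requirement that $\sigma_n^2$ decay like $2^{-2n(R+\epsilon)}$ is nonetheless the correct one. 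The mutual-information detour via $I(z_0;\vy^n)$ is also not needed: the paper simply reads off the per-step variance contraction from the recursion in Lemma~\ref{lemma:smoother} without any entropy bookkeeping.
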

A simple proof of Theorem \ref{th:coding} appears at the end of this section. Its main building block is the analysis of a smoothing problem in Lemma \ref{lemma:smoother}. We provide now explicit formulas to compute the estimate and its error covariance. The formulas are presented for the general MIMO channel, and their particularization to scalar channels will be used in the proof of Theorem~\ref{th:coding}.
\begin{lemma}[The smoothing problem]\label{lemma:smoother}
Consider the smoothing problem of estimating $\vz_0$ from $\vy^n$ with
\begin{align}
 \hat{\vz}_{0|n}&\triangleq \E[\vz_0|\vy^n] \nn\\
 \hat{Z}_{0|n}&\triangleq \cov(\vz_0 - \hat{\vz}_{0|n}).
\end{align}
Subject to the optimal inputs distribution \eqref{discuss:policy}, when $\vm_i=0$,
\begin{enumerate}
    \item The optimal smoother can be recursively computed as
    \begin{align}\label{eq:smoother}
        \hat{\vz}_{0|i}&= \hat{\vz}_{0|i-1} + \hat{Z}_{0|i-1}\kappa_i^T  \Psi_{Y,i}^{-1}(\vy_i-H\vhhs_{i}),
    \end{align}
with $\hat{\vz}_{0|0}=0$ and $$\kappa_i =  (\Lambda\Gamma\hat{\Sigma}^\dagger + H )(F - K_p(\Lambda\Gamma\hat{\Sigma}^\dagger + H))^{i-1} K_p.$$
    \item The error covariance can be updated as
    \begin{align}\label{eq:lemma_cov_reduction}
               \hat{Z}_{0|i} &= (I - \hat{Z}_{0|i-1}\kappa_i^T \Psi_{Y,i}^{-1} \kappa_i)\hat{Z}_{0|i-1}
    \end{align}
    with $\hat{Z}_{0|0} = \Psi$, and its determinant satisfies
    \begin{align}\label{eq:lemma_cov_reduction_det}
        \det (\hat{Z}_{0|i})&= \det (\Psi_{Y,i}^{-1}\Psi) \det (\hat{Z}_{0|i-1}).
    \end{align}
    Moreover, $\Psi_{Y,i}$ converges to $\Psi^\ast_Y$, the optimal value of $\Psi_Y$ in Theorem \ref{th:main}.
    \item[-] Therefore, for scalar channels, the error covariance satisfies
\begin{align}\label{eq:lemma_smooth_scalar}
        \hat{Z}_{0|i} &= \Psi_{Y,i}^{-1}\Psi\hat{Z}_{0|i-1},
\end{align}
and $\Psi_{Y,i}$ converges to $\Psi^\ast_Y$.
\end{enumerate}
\end{lemma}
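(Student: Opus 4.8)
The plan is to recognize the smoothing problem as a standard recursive least-squares / Kalman-smoothing computation driven by the innovations process at the decoder, and to exploit the time-invariant structure of the optimal policy in \eqref{discuss:policy} (with $\vm_i=0$). First I would identify the decoder innovations. Since $\vhhs_i = \E[\vhs_i \mid \vy^{i-1}]$ and the optimal input is $\vx_i = \Gamma\hat{\Sigma}^\dagger(\vhs_i - \vhhs_i)$, the channel output is $\vy_i = H\vhs_i + \Lambda\Gamma\hat{\Sigma}^\dagger(\vhs_i-\vhhs_i) + \ve_i$. Writing $\tilde{\vs}_i = \vhs_i - \vhhs_i$ for the decoder's state-estimation error, the decoder innovation is $\vy_i - H\vhhs_i = (\Lambda\Gamma\hat{\Sigma}^\dagger + H)\tilde{\vs}_i + \ve_i$, which by construction is orthogonal to $\vy^{i-1}$ and has covariance $\Psi_{Y,i} = (\Lambda\Gamma\hat{\Sigma}^\dagger+H)\hat{\Sigma}_i(\Lambda\Gamma\hat{\Sigma}^\dagger+H)^T + \Psi$, matching $\Psi_Y$ in Theorem \ref{th:main} once $\hat{\Sigma}_i \to \hat{\Sigma}$. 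The key structural fact is that the decoder runs a Kalman filter on $\vhs_i$, and the error dynamics $\tilde{\vs}_{i+1}$ are governed by the closed-loop matrix $F - K_p(\Lambda\Gamma\hat{\Sigma}^\dagger + H)$; this is exactly what produces the power $(F - K_p(\Lambda\Gamma\hat{\Sigma}^\dagger+H))^{i-1}$ in $\kappa_i$.

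Next I would derive the smoother recursion \eqref{eq:smoother}. Because the innovations $\{\vy_j - H\vhhs_j\}_{j\le i}$ form an orthogonal basis for the information in $\vy^i$, the estimate updates as $\hat{\vz}_{0|i} = \hat{\vz}_{0|i-1} + \cov(\vz_0, \vy_i - H\vhhs_i)\Psi_{Y,i}^{-1}(\vy_i - H\vhhs_i)$. It remains to compute the cross-covariance $\cov(\vz_0, \vy_i - H\vhhs_i) = \E[\vz_0 \tilde{\vs}_i^T](\Lambda\Gamma\hat{\Sigma}^\dagger+H)^T$. Here I would track how $\vz_0$ correlates with the state-estimation error: since $\vz_0 = H\vs_0 + \vv_0$ and the decoder's estimation error of $\vhs_i$ after seeing $\vy^{i-1}$ contains a contribution proportional to the as-yet-unresolved part of $\vz_0$, one shows $\E[\vz_0\tilde{\vs}_i^T] = \hat{Z}_{0|i-1}\,\big[(\Lambda\Gamma\hat{\Sigma}^\dagger+H)(F-K_p(\Lambda\Gamma\hat{\Sigma}^\dagger+H))^{i-1}K_p\big]^T \cdot(\text{something})$ — more precisely, unrolling the error recursion shows the relevant cross term is exactly $\hat{Z}_{0|i-1}\kappa_i^T$ with $\kappa_i$ as stated. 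This identity, combined with the orthogonal-innovations update, gives \eqref{eq:smoother}, and the standard covariance-reduction formula for orthogonal updates gives \eqref{eq:lemma_cov_reduction}; the determinant identity \eqref{eq:lemma_cov_reduction_det} follows from the matrix determinant lemma, $\det(I - \hat{Z}_{0|i-1}\kappa_i^T\Psi_{Y,i}^{-1}\kappa_i) = \det(\Psi_{Y,i} - \kappa_i\hat{Z}_{0|i-1}\kappa_i^T)/\det(\Psi_{Y,i})$, together with the observation that $\Psi_{Y,i} - \kappa_i\hat{Z}_{0|i-1}\kappa_i^T = \Psi$ — which is itself a consequence of the Riccati equation satisfied at the optimum (Lemma \ref{lemma:2conditions}). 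The scalar specialization \eqref{eq:lemma_smooth_scalar} is then immediate since all quantities commute.

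Finally, convergence of $\Psi_{Y,i}$ to $\Psi_Y^\ast$ follows from convergence of $\hat{\Sigma}_i$ to $\hat{\Sigma}$, which in turn follows from the fact that the decoder's Riccati recursion for the state-estimation error has closed-loop matrix $F - K_p(\Lambda\Gamma\hat{\Sigma}^\dagger + H)$; I would argue this recursion converges to the stabilizing solution $\hat{\Sigma}$ under the paper's standing assumptions (Assumptions 1--2 plus the initial-covariance condition), using the same convergence theory for Riccati recursions invoked in Section \ref{subsec:Riccati}. The main obstacle I anticipate is the bookkeeping in the second step: correctly identifying that the cross-covariance $\E[\vz_0\tilde{\vs}_i^T]$ factors through exactly $\hat{Z}_{0|i-1}\kappa_i^T$ with the stated closed-loop power, rather than picking up extra terms. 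This requires carefully separating the part of $\tilde{\vs}_i$ that is still correlated with $\vz_0$ (the "memory" of the first noise instance propagated through the closed loop) from the part driven by fresh innovations $\ve_1,\dots,\ve_{i-1}$, and verifying that the former is precisely a $\hat{Z}_{0|i-1}$-scaled version of the deterministic gain vector $\kappa_i$. Showing $\Psi_{Y,i} - \kappa_i\hat{Z}_{0|i-1}\kappa_i^T = \Psi$ (the "volume reduction equals $\Psi/\Psi_Y$" phenomenon advertised in the introduction) is the other place where the Riccati equation at the optimum must be used carefully.
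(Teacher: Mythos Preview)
Your overall plan matches the paper's: write the channel output via the decoder error $\tilde{\vs}_i = \vhs_i - \vhhs_i$, unroll the error recursion through the closed-loop matrix $F_p \triangleq F - K_p(\Lambda\Gamma\hat{\Sigma}^\dagger + H)$, and apply a sequential measurement-update formula. Two points of comparison are worth noting.

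First, on the step you flag as the main obstacle (the cross-covariance bookkeeping): the paper resolves it not by direct computation but by a causally invertible change of measurements. After unrolling $\tilde{\vs}_i = F_p^{i-1}K_p\vz_0 + \mathbf{d}_{i-1}$ with $\mathbf{d}_{i-1}$ a function of $\vy^{i-1}$, one defines $\mathbf{o}_i \triangleq (\vy_i - H\vhhs_i) - (\Lambda\Gamma\hat{\Sigma}^\dagger + H)\mathbf{d}_{i-1} = \kappa_i\vz_0 + \ve_i$. Because $\mathbf{d}_{i-1}$ is measurable with respect to $\vy^{i-1}$, the map $\{\vy_j\}\leftrightarrow\{\mathbf{o}_j\}$ is causally invertible, so the two processes share the same innovations; hence $\vy_i - H\vhhs_i = \kappa_i(\vz_0 - \hat{\vz}_{0|i-1}) + \ve_i$ directly, and both \eqref{eq:smoother} and \eqref{eq:lemma_cov_reduction} drop out of the standard measurement update without tracking $\E[\vz_0\mathbf{d}_{i-1}^T]$. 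Your suggestion to separate $\tilde{\vs}_i$ into a $\vz_0$-part and a part driven by $\ve_1,\dots,\ve_{i-1}$ would work too but yields the \emph{decoder} closed loop $F - K_{Y,j}(\Lambda\Gamma\hat{\Sigma}^\dagger+H)$ rather than $F_p$, so matching the stated $\kappa_i$ that way is circuitous.

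Second, the identity $\Psi_{Y,i} - \kappa_i\hat{Z}_{0|i-1}\kappa_i^T = \Psi$ used for \eqref{eq:lemma_cov_reduction_det} does not require Lemma~\ref{lemma:2conditions} or any optimality condition. It is purely structural: once you have the innovations representation $\vy_i - H\vhhs_i = \kappa_i(\vz_0 - \hat{\vz}_{0|i-1}) + \ve_i$, taking covariances gives $\Psi_{Y,i} = \kappa_i\hat{Z}_{0|i-1}\kappa_i^T + \Psi$ immediately, and Sylvester's identity finishes the determinant computation. The convergence $\Psi_{Y,i}\to\Psi_Y^\ast$ is handled in the paper by deferring to the Riccati-convergence argument in the proof of Lemma~\ref{lemma:achievable}, as you anticipated.
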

The proof of Lemma \ref{lemma:smoother} appears in Section \ref{subsec:proofs_coding}. Recall from Theorem \ref{th:main} that the capacity can be expressed as $C_{fb}(P) = \frac1{2}\log\det (\Psi_Y^\ast\Psi^{-1})$ where $\Psi_Y^\ast$ denotes the optimal $\Psi_Y$. \textcolor{black}{The relation between the capacity and the smoothing problem is transparent.} The volume (determinant) reduction of the error covariance in \eqref{eq:lemma_cov_reduction_det} is the logarithm argument in the capacity expression. For scalar channels, the volume reduces to a single dimension refinement of the noise instance $\vz_0$. However, for MIMO channels the volume reduction is not sufficient to derive an explicit coding scheme. We provide details on a possible MIMO scheme construction.

A suggested scheme for MIMO channels is as follows: assume for simplicity $\Lambda=I$, and split the message into $p$ \textcolor{black}{independent sub-messages} where $p$ is dimension of the input, output and noise. \textcolor{black}{In the first time, we normalize each sub-message, and transmit their concatenation as the vector $\vx_0$}. The encoding is identical to that in Algorithm~\ref{alg:bounded}, that is, it follows the policy in \eqref{discuss:policy}. \textcolor{black}{The estimation of the vector $\vz_0$ is based on the smoother in \eqref{eq:smoother},} and the decoding is carried out using coordinate-wise successive cancellation of the vector $\hat{\vz}_{0|n}$. The analysis of such scheme can be possibly done using \textcolor{black}{Lemma \ref{lemma:smoother}}, but a finer spectral analysis is needed. \textcolor{black}{In particular, the geometric reduction in \eqref{eq:lemma_cov_reduction_det} needs to be shown for each coordinate and not for the overall determinant}. The geometric rate of the error covariance in \eqref{eq:lemma_smooth_scalar} should also determine the rates allocated to each sub-message, and is the key to obtain the double-exponential decay. We proceed to show the optimality of Algorithm~\ref{alg:bounded} for scalar channels using the analysis of Lemma \ref{lemma:smoother}.
\begin{proof}[Proof of Theorem \ref{th:coding}]\label{proof:coding}
The decoder estimates $x_0$ from $y_0 - \hat{z}_{0|n} = \bar{U}(m) + z_0 - \hat{z}_{0|n}$. This is the problem of estimating an M-PAM signal from a Gaussian-corrupted measurement, and the probability of error can be bounded as
\begin{align}
        P_e^{(n)}&\le \Pr\left( |z_0 - \E[z_0|y^n]| \ge \frac1{2}\sqrt{\frac{12}{2^{2nR}-1}}\right) \nn \\
        &= 2 Q( \gamma_n),
\end{align}
where the inequality follows from the first and last messages where a large error deviation will not incur an error on one of their ends, and the equality follows from $\gamma_n \triangleq \frac1{2}\sqrt{12 \frac{\text{Var}(z_0|y^n)}{2^{2nR}-1}}$ with the standard $Q$-function. We can further bound $\gamma_n$ as
\begin{align}
    \gamma_n &\le \sqrt{3} \cdot 2^{-nR}\sqrt{\text{Var}(z_0|y^n)}.
\end{align}
By Lemma \ref{lemma:smoother}, $\gamma_n$ has a positive exponent if $R< \frac1{2n}\log\Psi \prod_{i=1}^n (\Psi_{Y,i}\Psi^{-1})$, which leads to the doubly-exponential decay rate. As $\Psi_{Y,i}\to\Psi_Y^\ast$, $R$ can be chosen arbitrarily close to $\frac1{2}\log(\Psi_{Y}^\ast\Psi^{-1})$ which is precisely the feedback capacity.
\end{proof}

\section{Proof of the main result}\label{sec:derivation}
In this section we outline the proof of Theorem \ref{th:main} by presenting the technical lemmas leading to tight lower and upper bounds. The proof is structured as three parts.

1. \textbf{Sequential convex optimization problem (SCOP)}: The $n$-letter capacity expression for the MIMO channel is defined as
\begin{align}\label{eq:objectiveN}
    C_n(P)&=\max_{P(\vx^n||\vy^n): \frac1{n}\sum_{i=1}^n\E[\vx_i^T\vx_i]\le P} h(\vy^n) - h(\vz^n).
\end{align}
The first three lemmas formulate the $n$-letter capacity as a SCOP. While it is easy to show that $C_n(P)$ is concave in its decision variable $P(x^n\|y^n)$, the challenge is to formulate it as a convex optimization problem that enables one to explicitly compute the limit of $C_n(P)$. To this end, we realize a SCOP with LMI constraints that have a sequential structure.

2. \textbf{Upper bound via convexity}: The second part of the proof utilizes the SCOP structure to show that the capacity expression in Theorem \ref{th:main} is an upper bound on the capacity. Since the optimization constraints contain decision variables at consecutive times, the standard time-sharing random variable argument does not apply here, and we use a different technique to show that these constraints are \emph{asymptotically satisfied} when realized at the convex combinations of the decision variables.

3. \textbf{Lower bound using time-invariant inputs}: The last part constructs a time-invariant policy whose optimization leads to a lower bound that is expressed as the upper bound optimization problem with additional constraints. We show that the additional constraints are redundant, concluding the proof of the main result.

\subsection{Sequential convex optimization problem}
The first lemma identifies an optimal structure for the inputs distribution using $\vhs_i$ and $\vhhs_{i}$ defined in \eqref{eq:KF_estimator} and \eqref{eq:vhhs_def}, respectively.
\begin{lemma}[The optimal policy structure]\label{lemma:policy}
For a fixed $n$, it is sufficient to optimize \eqref{eq:objectiveN} with inputs of the form
\begin{align}\label{eq:policy}
\vx_i&= \Gamma_i \hat{\Sigma}^\dagger_{i}(\vhs_{i}-\vhhs_{i}) + \vm_i,\ \ \ \  i=1,\dots,n
\end{align}
where $\vm_i\sim N(0,M_i)$ is independent of $(\vx^{i-1},\vy^{i-1})$, $\hat{\Sigma}^\dagger_{i}$ is the Moore-Penrose pseudo-inverse of
\begin{align}
    \hat{\Sigma}_{i}&= \cov( \vhs_{i}-\vhhs_{i}),
\end{align}
$\Gamma_i$ is a matrix that satisfies
\begin{align}\label{eq:lemma_policy_orthognality}
\Gamma_i(I - \hat{\Sigma}_{i}\hat{\Sigma}^\dagger_{i}) =0,
\end{align}
and the power constraint is
\begin{align}
    \frac1{n}\sum_{i=1}^n\mathbf{Tr}( \Gamma_i \hat{\Sigma}^\dagger_{i}\Gamma_i^T + M_i)\le P.
\end{align}
\end{lemma}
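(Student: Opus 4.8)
The plan is to show that any feedback policy $P(\vx^n\|\vy^n)$ can be replaced, without loss of objective value and without violating the power constraint, by a Gauss-Markov policy of the claimed form. First I would argue Gaussianity: since the objective $h(\vy^n)-h(\vz^n)$ depends on the joint law only through second-order statistics (the noise is Gaussian, the channel is linear, and among all distributions with a fixed covariance the Gaussian maximizes differential entropy), I can pass to the jointly Gaussian distribution with the same covariance matrix as $(\vx^n,\vy^n,\vz^n)$ induced by the original policy; this does not decrease $h(\vy^n)$, leaves $h(\vz^n)$ and the power budget unchanged, and preserves the strict-causality (feedback) structure because one can realize a jointly Gaussian vector respecting the same conditional-independence constraints as a sequence of linear-Gaussian maps $\vx_i = A_i \vy^{i-1} + \vm_i$ with $\vm_i$ an independent Gaussian.

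Next I would identify what information in $\vy^{i-1}$ is actually useful. Using the equivalent channel in \eqref{eq:new_channel}, where $\vy_i = H\vhs_i + \Lambda\vx_i + \ve_i$ and $\vhs_i$ is the encoder-side Kalman state, the pair $(\vhs_i, \vhhs_i)$ with $\vhhs_i = \E[\vhs_i|\vy^{i-1}]$ should be a sufficient statistic: $\vhs_i$ summarizes the encoder's knowledge of the noise, $\vhhs_i$ summarizes the decoder's, and the novel part of $\vx_i$ that can help the decoder is exactly the component correlated with $\vhs_i - \vhhs_i$, the decoder's estimation error. Concretely, I would decompose the linear map $A_i\vy^{i-1}$ into its projection onto $\vhs_i-\vhhs_i$ plus a residual that is a function of the decoder's past $\vy^{i-1}$ alone; the residual contributes nothing to $h(\vy_i|\vy^{i-1})$ since it can be subtracted off at the decoder, so it may be removed, leaving $\vx_i = \Gamma_i\hat\Sigma_i^\dagger(\vhs_i-\vhhs_i) + \vm_i$. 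The pseudo-inverse $\hat\Sigma_i^\dagger$ and the orthogonality condition \eqref{eq:lemma_policy_orthognality}, $\Gamma_i(I-\hat\Sigma_i\hat\Sigma_i^\dagger)=0$, are just the statement that only the range space of $\hat\Sigma_i$ carries signal, so the coefficient matrix can be taken supported there; the covariance of the first component is then $\Gamma_i\hat\Sigma_i^\dagger\Gamma_i^T$ and the power identity $\frac1n\sum_i\mathbf{Tr}(\Gamma_i\hat\Sigma_i^\dagger\Gamma_i^T + M_i)\le P$ follows since the two components are uncorrelated.

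The main obstacle I anticipate is the sufficiency-of-$(\vhs_i,\vhhs_i)$ step: I need to verify that replacing a general linear function of $\vy^{i-1}$ by a linear function of $\vhs_i-\vhhs_i$ does not change the \emph{future} evolution of the relevant covariances in a way that could hurt later terms. This requires checking that $\hat\Sigma_i = \cov(\vhs_i-\vhhs_i)$, the decoder's state-estimation error covariance, evolves autonomously under the restricted policy class — i.e., the recursion for $\hat\Sigma_{i+1}$ depends on $(\Gamma_i, M_i, \hat\Sigma_i)$ only — so that the restriction is consistent across time and the objective telescopes correctly as $\sum_i h(\vy_i|\vy^{i-1})$. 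I would establish this by writing out the Kalman recursion for $\vhhs_i$ against the channel \eqref{eq:new_channel} under the policy \eqref{eq:policy}, confirming it closes in the claimed variables; the orthogonality of the innovations process $\ve_i$ to $\vy^{i-1}$ and the Markov structure noted after \eqref{eq:new_channel} are what make this work. The remaining details — that Gaussianization preserves feedback causality, and that subtracting a decoder-computable term leaves conditional entropies invariant — are routine once the state structure is in place.
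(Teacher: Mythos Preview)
There is a concrete gap in your Gaussianization step. You claim a general Gaussian feedback policy can be realized as $\vx_i = A_i\vy^{i-1} + \vm_i$ with $\vm_i$ independent, but this omits the encoder's private information: the encoder knows its own past inputs $\vx^{i-1}$ and hence can compute $\vz^{i-1}=\vy^{i-1}-\Lambda\vx^{i-1}$, which is strictly more informative about $\vhs_i$ than $\vy^{i-1}$ alone (the latter being contaminated by the encoder's own randomness). The correct reduction is to a linear function of $(\vx^{i-1},\vy^{i-1})$ plus fresh noise, equivalently the Cover--Pombra form $\vx_i=\sum_{j<i}B_{ij}\vz_j+V_i$, which is not in general $A_i\vy^{i-1}+\vm_i$. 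Worse, your decomposition then collapses: since $A_i\vy^{i-1}$ is $\sigma(\vy^{i-1})$-measurable while $\vhs_i-\vhhs_i$ is by construction orthogonal to $\vy^{i-1}$, the projection of $A_i\vy^{i-1}$ onto $\vhs_i-\vhhs_i$ is identically zero. Carried through, your argument yields $\vx_i=\vm_i$, i.e., feedback is useless---false for colored noise.

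The paper's proof does not parametrize the policy at all. Starting from an arbitrary Gaussian policy $P$ (after the maximum-entropy reduction and after subtracting $\E_P[\vx_i\mid\vy^{i-1}]$, which is the correct version of your residual-removal idea), it \emph{defines} $\Gamma_i\triangleq\E_P[\vx_i(\vhs_i-\vhhs_i)^T]$ and $M_i\triangleq\E_P[\vx_i\vx_i^T]-\Gamma_i\hat\Sigma_i^\dagger\Gamma_i^T$ directly from the second moments of $P$, and proves by induction on $i$ that the Gaussian tuple $(\vhs_i-\vhhs_i,\ \vx_i,\ \vy_i-\E[\vy_i\mid\vy^{i-1}])$ has the same law under the new policy as under $P$. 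The key identity $\cov(\vy_i\mid\vy^{i-1})=\cov\bigl(\Lambda\vx_i+H(\vhs_i-\vhhs_i)\bigr)+\Psi_i$ shows matching these moments preserves the $i$th objective term, and the induction is precisely what settles the future-evolution concern you flagged. Your sufficient-statistic intuition about $(\vhs_i,\vhhs_i)$ is correct; what makes it rigorous is moment matching plus induction, not a decomposition of $A_i\vy^{i-1}$.
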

Lemma \ref{lemma:policy} simplifies the optimization \eqref{eq:objectiveN} by showing that the optimization domain is over the sequence of matrices $(\Gamma_i,M_i\succeq0)_{i=1}^n$. Note that $\hat{\Sigma}_{i}$ is a deterministic function of the policy up to time $i-1$ and thus is not part of the optimization. Similar policies appeared in the literature e.g. \cite[Section IV]{Kim06_MA} and \cite{Gattami} building on the ideas in \cite{YangKavcicTatikondaGaussian}. Their policy reads $\vx_i = \Gamma_i(\vhs_i-\vhhs_i) + \vm_i$, and \textcolor{black}{our policy in Lemma \ref{lemma:policy} is a subset of their policy.} Specifically, if $\hat{\Sigma}_{i}$ is invertible, the orthogonality constraint is redundant, and one can use the change of variable $\Gamma_i' = \Gamma_i \hat{\Sigma}_{i}^{-1}$ to show the equivalence of the policies. \textcolor{black}{However, in general, $\hat{\Sigma}_i$ may be singular,} and the orthogonality constraint is required for the convex optimization formulation in Lemma~\ref{lemma:n_letter_convex}. In the next lemma, the channel output is formalized as the measurement of a controlled state space.
\begin{lemma}[Channel outputs dynamics]\label{lemma:state-space}
For a fixed policy $\{(\Gamma_i,M_i)\}_{i=1}^n$, the channel outputs admit the state-space model
\begin{align}\label{eq:lemma_SS_SS}
    \vhs_{i+1}&= F\vhs_{i}  + K_{p,i} \ve_i, \ \ \ \nn\\
    \vy_i&=  (\Lambda \Gamma_i \hat{ \Sigma}_{i}^\dagger + H) \vhs_{i} - \Lambda \Gamma_i\hat{ \Sigma}_{i}^\dagger \vhhs_{i} + \Lambda \vm_i + \ve_i,
\end{align}
where $K_{p,i}$ and $\ve_i\sim N(0,\Psi_i)$ are defined in \eqref{eq:def_Kp_Psi}.
The estimator in \eqref{eq:vhhs_def} can be written as
\begin{align}\label{eq:vhhs_recursion}
    \vhhs_{i+1}&= F \vhhs_{i} + K_{Y,i}(\vy_i - H\vhhs_{i}),
\end{align}
and its error covariance $\hat{\Sigma}_{i}= \cov( \vhs_{i}-\vhhs_{i})$ satisfies the Riccati recursion
\begin{align}\label{eq:SIGMArecursion}
    \hat{ \Sigma}_{i+1}
    &= F \hat{ \Sigma}_{i}F^T + K_{p,i}\Psi_i K_{p,i}^T - K_{Y,i} \Psi_{Y,i} K_{Y,i}^T
\end{align}
with the initial condition $\hat{\Sigma}_1 =0$, and the constants
\begin{align}\label{eq:lemma_SS_KY}
     \Psi_{Y,i}&= (\Lambda \Gamma_i\hat{ \Sigma}_{i}^\dagger + H)\hat{ \Sigma}_{i}(\Lambda \Gamma_i\hat{ \Sigma}_{i}^\dagger + H)^T + \Lambda M_i \Lambda^T + \Psi_i \nn\\
    K_{Y,i} &=  (F\hat{ \Sigma}_{i}(\Lambda \Gamma_i\hat{ \Sigma}_{i}^\dagger + H)^T + K_{p,i}\Psi_i)\Psi_{Y,i}^{-1}.
    \end{align}
\end{lemma}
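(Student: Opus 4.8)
The goal is to verify the state-space description in Lemma~\ref{lemma:state-space}: that under the policy \eqref{eq:policy}, the channel output obeys the stated dynamics, that the decoder's MMSE estimate $\vhhs_i$ obeys the Kalman recursion \eqref{eq:vhhs_recursion}, and that the error covariance $\hat\Sigma_i$ obeys the Riccati recursion \eqref{eq:SIGMArecursion} with the constants in \eqref{eq:lemma_SS_KY}. The strategy is a direct substitution followed by an appeal to the standard (time-varying) Kalman filtering equations applied to the derived linear state-space. I would proceed in three steps corresponding to the three displayed equation blocks.

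\textbf{Step 1: Derive the output state-space \eqref{eq:lemma_SS_SS}.} Start from the post-Kalman channel \eqref{eq:new_channel}: $\vhs_{i+1}=F\vhs_i+K_{p,i}\ve_i$ and $\vy_i=H\vhs_i+\Lambda\vx_i+\ve_i$. Substitute the policy $\vx_i=\Gamma_i\hat\Sigma_i^\dagger(\vhs_i-\vhhs_i)+\vm_i$ directly into the output equation; collecting the $\vhs_i$ and $\vhhs_i$ terms gives exactly the second line of \eqref{eq:lemma_SS_SS}. The state recursion is unchanged. The only subtlety worth a remark is that $\vhhs_i$ here acts as a known input to the state-space from the decoder's viewpoint: by definition $\vhhs_i=\E[\vhs_i\mid\vy^{i-1}]$ is $\vy^{i-1}$-measurable, so when running the decoder's filter it is a deterministic (observed) signal, not a random quantity to be estimated. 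This is what makes the "controlled state-space'' language legitimate.

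\textbf{Step 2: Decoder's Kalman recursion \eqref{eq:vhhs_recursion}.} Apply the standard time-varying Kalman filter to the linear-Gaussian system of Step~1, treating $\vhs_i$ as the hidden state, $\vy_i$ as the measurement, $\ve_i$ as the driving noise (with covariance $\Psi_i$), and the measurement noise being the same $\ve_i$ — so this is a correlated process/measurement-noise Kalman filter, and one must use the version of the filter that accounts for $\E[(\text{state noise})(\text{measurement noise})^T]=K_{p,i}\Psi_i$. Writing the observation matrix as $C_i\triangleq\Lambda\Gamma_i\hat\Sigma_i^\dagger+H$ and the known input term as $-\Lambda\Gamma_i\hat\Sigma_i^\dagger\vhhs_i+\Lambda\vm_i$ (whose mean is the first part, since $\E[\vm_i]=0$ and $\vm_i$ is independent of the past), the Kalman gain is $K_{Y,i}=(F\hat\Sigma_i C_i^T+K_{p,i}\Psi_i)\Psi_{Y,i}^{-1}$ with innovation covariance $\Psi_{Y,i}=C_i\hat\Sigma_i C_i^T+\Lambda M_i\Lambda^T+\Psi_i$; here I use $\E[(\vhs_i-\vhhs_i)\vm_i^T]=0$ and $\E[(\vhs_i-\vhhs_i)\ve_i^T]=0$ (the latter because $\ve_i\perp\vz^{i-1}$ and $\vhhs_i$ is a function of $\vy^{i-1}$, hence of $\vz^{i-1}$ and past inputs). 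These are precisely \eqref{eq:lemma_SS_KY}, and the predictor form of the filter gives \eqref{eq:vhhs_recursion} (the $H\vhhs_i$ subtracted inside the innovation, rather than $C_i\vhhs_i$, because the $\vhhs_i$-dependent part of the output is a known input that the decoder subtracts off before forming the innovation — a short bookkeeping check).

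\textbf{Step 3: Riccati recursion \eqref{eq:SIGMArecursion} and initialization.} This is the standard measurement/time-update for the error covariance of the filter in Step~2: $\hat\Sigma_{i+1}=F\hat\Sigma_iF^T+K_{p,i}\Psi_iK_{p,i}^T-K_{Y,i}\Psi_{Y,i}K_{Y,i}^T$, which is just the Riccati recursion for the derived system with correlated noises. The initial condition $\hat\Sigma_1=0$ follows because $\vhs_1=\E[\vs_1\mid\vz^0]=0$ deterministically (from $\vhs_1=0$ in \eqref{eq:prel_estimator}) and $\vhhs_1=\E[\vhs_1\mid\vy^0]=0$, so the error is identically zero. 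The one place that needs genuine care — the main obstacle — is the pseudo-inverse bookkeeping: $\hat\Sigma_i$ may be singular, so one must check that $C_i=\Lambda\Gamma_i\hat\Sigma_i^\dagger+H$ and the gain/covariance formulas behave correctly on the nullspace of $\hat\Sigma_i$. This is exactly where the orthogonality constraint $\Gamma_i(I-\hat\Sigma_i\hat\Sigma_i^\dagger)=0$ from Lemma~\ref{lemma:policy} is used: it guarantees $\Gamma_i\hat\Sigma_i^\dagger$ annihilates the part of $\vhs_i-\vhhs_i$ outside the range of $\hat\Sigma_i$, so that the term $\Gamma_i\hat\Sigma_i^\dagger(\vhs_i-\vhhs_i)$ is well-defined with covariance $\Gamma_i\hat\Sigma_i^\dagger\Gamma_i^T$ and all the cross-covariance computations reduce to the full-rank case on the range space. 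Once this is observed, every remaining identity is a routine application of the standard Kalman/Riccati formulas, and no further difficulty arises.
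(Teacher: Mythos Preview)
Your proposal is correct and follows essentially the same approach as the paper: substitute the policy into the innovations-form channel \eqref{eq:new_channel} to obtain \eqref{eq:lemma_SS_SS}, observe that $\vhhs_i$ is $\vy^{i-1}$-measurable and hence a known input, and then read off \eqref{eq:vhhs_recursion}--\eqref{eq:lemma_SS_KY} from the standard time-varying Kalman filter with correlated process/measurement noise (the paper phrases this as applying \eqref{eq:prel_estimator}--\eqref{eq:Riccati_recursion} with $G\leftarrow K_{p,i}$, $H\leftarrow \Lambda\Gamma_i\hat\Sigma_i^\dagger+H$, $W=L\leftarrow\Psi_i$, $V\leftarrow\Lambda M_i\Lambda^T+\Psi_i$). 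Your extra remarks on why the innovation is $\vy_i-H\vhhs_i$ rather than $\vy_i-C_i\vhhs_i$, and on the role of the orthogonality constraint for the pseudo-inverse bookkeeping, are correct refinements that the paper leaves implicit.
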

Lemma \ref{lemma:state-space} is a direct consequence of the policy derived in Lemma \ref{lemma:policy}. As seen from \eqref{eq:lemma_SS_SS}, the policy in \eqref{eq:policy} translates into an additive measurement noise $\vm_i$ and a modification of the observability matrix $\Lambda \Gamma_i \hat{ \Sigma}_{i}^\dagger + H$. Similar state-space structures appeared in \cite{Kim10_Feedback_capacity_stationary_Gaussian,charalambous2020new}, but it is interesting to realize that \eqref{eq:lemma_SS_SS} does not fall into the classical state-space structure since the observability matrix depends on the error covariance $\hat{\Sigma}_i$ induced from our policy. Lemma \ref{lemma:state-space} also reveals an objective structure that resembles the one in Theorem \ref{th:main}. \textcolor{black}{In particular, we can use the covariance of the channel outputs innovation in \eqref{eq:lemma_SS_KY}, $\Psi_{Y,i}$, and \eqref{eq:z_entropy_rate} to write}
\begin{align}
    h(\vy_i|\vy^{i-1}) - h(\vz_i|\vz^{i-1})&= \frac1{2}\log\det(\Psi_{Y,i}) - \frac1{2} \log\det(\Psi_{i}).\nn
\end{align}
The next lemma summarizes the SCOP formulation.
\begin{lemma}[Sequential convex-optimization formulation]\label{lemma:n_letter_convex}
The $n$-letter capacity can be bounded by the convex optimization problem
\begin{align}\label{eq:lemma_SCOP}
    &C_n(P)\le \max_{\{\Gamma_i,\Pi_i,\hat{ \Sigma}_{i+1}\}_{i=1}^n} \ \ \frac1{2n}\sum_{i=1}^n\log \det (\Psi_{Y,i}) - \log\det(\Psi_i)\nn\\
&\ \ \ \ \ s.t. \ \ \ \begin{pmatrix}
    \Pi_t & \Gamma_t\\
    \Gamma_t^T& \hat{ \Sigma}_t
\end{pmatrix} \succeq0, \ \ \frac1{n}\sum_{i=1}^n\mathbf{Tr}(\Pi_i)\le P,  \nn\\
&\Psi_{Y,t}=\Lambda \Pi_t\Lambda^T + H\hat{ \Sigma}_t H^T+ \Lambda \Gamma_t H^T+ H \Gamma_t^T\Lambda^T + \Psi_t \nn \\
& K_{Y,t}= (F \Gamma^T_t\Lambda^T + F \hat{ \Sigma}_{t}H^T + K_{p,t}\Psi_{t}) \Psi_{Y,t}^{-1}\nn \\
&\begin{pmatrix}
    F  \hat{ \Sigma}_t F^T + K_{p,t}\Psi_t K_{p,t}^T - \hat{ \Sigma}_{t+1}& K_{Y,t}\Psi_{Y,t} \\
    \Psi_{Y,t} K_{Y,t}^T & \Psi_{Y,t}
    \end{pmatrix}\succeq0, \textcolor{black}{\hat{\Sigma}_{n+1}\succeq0}
\end{align}
where the constraints hold for $t=1,\dots,n$, and $\hat{\Sigma}_1= 0$.
\end{lemma}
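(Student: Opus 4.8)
The plan is to combine the previous two lemmas with a standard Schur-complement manipulation of the error-covariance Riccati recursion. Lemma \ref{lemma:policy} already reduces the $n$-letter optimization \eqref{eq:objectiveN} to an optimization over a sequence of matrices $(\Gamma_i, M_i \succeq 0)_{i=1}^n$ subject to the orthogonality constraint \eqref{eq:lemma_policy_orthognality} and the average power constraint. Lemma \ref{lemma:state-space} then rewrites the objective as $\frac{1}{2n}\sum_i \log\det(\Psi_{Y,i}) - \log\det(\Psi_i)$, with $\Psi_{Y,i}$, $K_{Y,i}$, and the Riccati recursion \eqref{eq:SIGMArecursion} given explicitly. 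So the content of Lemma \ref{lemma:n_letter_convex} is really a \emph{convexification}: expressing these relations via LMIs in lifted variables so that the feasible set becomes convex. First I would introduce the new variable $\Pi_i \triangleq \Gamma_i \hat{\Sigma}_i^\dagger \Gamma_i^T + M_i$, which is exactly $\cov(\vx_i)$; since $M_i \succeq 0$, the power constraint becomes $\frac1n\sum_i \mathbf{Tr}(\Pi_i) \le P$, linear in the new variables. The key observation is that the condition "$\exists\, M_i \succeq 0$ with $\Pi_i = \Gamma_i \hat{\Sigma}_i^\dagger \Gamma_i^T + M_i$ and $\Gamma_i(I - \hat{\Sigma}_i\hat{\Sigma}_i^\dagger) = 0$" is equivalent, by a generalized Schur complement for (possibly singular) positive semidefinite blocks, to the single LMI $\begin{pmatrix}\Pi_i & \Gamma_i \\ \Gamma_i^T & \hat{\Sigma}_i\end{pmatrix} \succeq 0$. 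This is the step that buys convexity and simultaneously encodes the orthogonality constraint automatically; I would cite the standard generalized-Schur fact (range condition $\mathrm{Range}(\Gamma_i^T)\subseteq\mathrm{Range}(\hat{\Sigma}_i)$ plus $\Pi_i - \Gamma_i\hat\Sigma_i^\dagger\Gamma_i^T\succeq0$).

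Next I would handle the Riccati recursion \eqref{eq:SIGMArecursion}. With the lifted variables, $\Psi_{Y,t}$ from \eqref{eq:lemma_SS_KY} becomes the stated linear expression $\Lambda\Pi_t\Lambda^T + H\hat\Sigma_t H^T + \Lambda\Gamma_t H^T + H\Gamma_t^T\Lambda^T + \Psi_t$ once one expands $(\Lambda\Gamma_t\hat\Sigma_t^\dagger + H)\hat\Sigma_t(\Lambda\Gamma_t\hat\Sigma_t^\dagger + H)^T + \Lambda M_t\Lambda^T$ and uses $\Gamma_t\hat\Sigma_t^\dagger\hat\Sigma_t = \Gamma_t$ (the orthogonality constraint) together with $\hat\Sigma_t^\dagger\hat\Sigma_t\hat\Sigma_t^\dagger=\hat\Sigma_t^\dagger$; similarly $K_{Y,t}\Psi_{Y,t} = F\hat\Sigma_t(\Lambda\Gamma_t\hat\Sigma_t^\dagger+H)^T + K_{p,t}\Psi_t = F\Gamma_t^T\Lambda^T + F\hat\Sigma_t H^T + K_{p,t}\Psi_t$. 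Then \eqref{eq:SIGMArecursion} reads $\hat\Sigma_{t+1} = F\hat\Sigma_t F^T + K_{p,t}\Psi_t K_{p,t}^T - K_{Y,t}\Psi_{Y,t}K_{Y,t}^T$, i.e. $F\hat\Sigma_t F^T + K_{p,t}\Psi_t K_{p,t}^T - \hat\Sigma_{t+1} = (K_{Y,t}\Psi_{Y,t})\Psi_{Y,t}^{-1}(K_{Y,t}\Psi_{Y,t})^T$. By the ordinary Schur complement (now $\Psi_{Y,t}\succ0$ is genuine, since $\Psi_t\succ0$), this \emph{equality} is implied by the LMI $\begin{pmatrix} F\hat\Sigma_t F^T + K_{p,t}\Psi_t K_{p,t}^T - \hat\Sigma_{t+1} & K_{Y,t}\Psi_{Y,t} \\ \Psi_{Y,t}K_{Y,t}^T & \Psi_{Y,t}\end{pmatrix}\succeq0$; relaxing equality to this inequality can only enlarge the feasible set, which is why we get an upper bound ($\le$) rather than equality in \eqref{eq:lemma_SCOP}. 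I would also note that $\hat\Sigma_{n+1}$ is free in the recursion but must be a valid covariance, hence the extra constraint $\hat\Sigma_{n+1}\succeq0$, and that $\hat\Sigma_1 = 0$ is the given initialization, so $\hat\Sigma_t$ for $t\ge2$ is determined recursively and serves as a decision variable only through the LMIs.

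The last point to argue is that the objective $\frac{1}{2n}\sum_i[\log\det\Psi_{Y,i} - \log\det\Psi_i]$ is concave in the decision variables: $\Psi_i, K_{p,i}$ are fixed constants (from the prior Kalman filtering of the noise), $\Psi_{Y,i}$ is affine in $(\Pi_i,\hat\Sigma_i,\Gamma_i)$, and $\log\det$ is concave on the PSD cone, so the sum is concave; together with the LMI (hence convex) constraint set this gives a genuine convex program, justifying the name SCOP. The main obstacle I anticipate is the careful bookkeeping around the pseudo-inverse when $\hat\Sigma_i$ is singular: one must verify that the generalized Schur complement correctly captures \emph{both} the orthogonality constraint $\Gamma_i(I-\hat\Sigma_i\hat\Sigma_i^\dagger)=0$ and the reconstruction $M_i = \Pi_i - \Gamma_i\hat\Sigma_i^\dagger\Gamma_i^T \succeq 0$, and that every appearance of $\Gamma_i\hat\Sigma_i^\dagger$ in $\Psi_{Y,i}$, $K_{Y,i}$, and the Riccati recursion simplifies consistently under that range condition — the algebra is routine but the identities $\hat\Sigma_i^\dagger\hat\Sigma_i\hat\Sigma_i^\dagger = \hat\Sigma_i^\dagger$ and $\Gamma_i\hat\Sigma_i^\dagger\hat\Sigma_i = \Gamma_i$ must be invoked at exactly the right places, and a single sign or transpose slip propagates through all three LMIs.
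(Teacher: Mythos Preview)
Your proposal is correct and follows essentially the same route as the paper: combine Lemmas~\ref{lemma:policy} and~\ref{lemma:state-space}, introduce $\Pi_i=\Gamma_i\hat\Sigma_i^\dagger\Gamma_i^T+M_i$, use the generalized Schur complement to absorb both $M_i\succeq0$ and the orthogonality constraint into the first LMI, and relax the Riccati recursion to an inequality via the ordinary Schur complement. One wording slip: you write ``this \emph{equality} is implied by the LMI'', but the implication runs the other way---the equality implies the LMI (with zero Schur complement), and the LMI is the relaxation; your next sentence shows you understand this, so it is only a phrasing issue.
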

To see that \eqref{eq:lemma_SCOP} is a convex optimization, note that each of the matrix constraints is a linear function of the decision variables. In the next section, we provide the single-letter upper bound on the capacity. The key to the upper bound is the concavity of the objective function and the linearity of the constraints, along with the crucial property that the Riccati LMI constraint in \eqref{eq:lemma_SCOP} includes decision variables of two consecutive times only.

\subsection{Single-letter upper bound}
The next lemma concludes the upper bound in Theorem \ref{th:main}.
\begin{lemma}[The upper bound]\label{lemma:singleletter_UB}
The feedback capacity is bounded by the convex optimization
\begin{align}\label{eq:lemma_UB_OP}
&C_{fb}(P)\le \max_{\Pi,\hat{\Sigma},\Gamma} \frac1{2}\log \det (\Psi_Y) - \frac1{2}\log\det(\Psi)\nn\\
&\text{s.t. } \ \ \ \begin{pmatrix}
    \Pi & \Gamma\\
    \Gamma^T& \hat{ \Sigma}
\end{pmatrix} \succeq0, \ \ \mathbf{Tr}(\Pi)\le P, \ \nn\\
&\Psi_Y= \Lambda \Pi \Lambda^T + H \hat{ \Sigma} H^T+ \Lambda \Gamma H^T+ H \Gamma^T\Lambda^T  + \Psi\nn\\
&K_Y= (F \Gamma^T \Lambda^T + F \hat{ \Sigma} H^T + K_{p}\Psi) \Psi_Y^{-1}\nn\\
&\begin{pmatrix}
     F \hat{ \Sigma} F^T + K_{p}\Psi K_{p}^T - \hat{ \Sigma}& K_Y \Psi_Y \\
     \Psi_Y K_Y^T & \Psi_Y
    \end{pmatrix} \succeq0.
\end{align}
\end{lemma}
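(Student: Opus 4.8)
The plan is to derive the single-letter bound in \eqref{eq:lemma_UB_OP} from the sequential formulation of Lemma~\ref{lemma:n_letter_convex} by averaging an optimal decision sequence, exploiting that each constraint in \eqref{eq:lemma_SCOP} is either linear in the variables or --- for the Riccati LMI --- convex in the positive-semidefinite order and coupled across two consecutive times only. Using the reduction of Section~\ref{sec:setting}, take $\Sigma_1$ to be the stabilizing Riccati solution, so that $\Psi_i\equiv\Psi$ and $K_{p,i}\equiv K_p$ are constants and the $n$-letter objective is $\frac1{2n}\sum_{i=1}^n\log\det\Psi_{Y,i}-\frac12\log\det\Psi$. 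Fix $n$, let $\{(\Gamma_i,\Pi_i,\hat\Sigma_{i+1})\}_{i=1}^n$ be an optimal (or $\epsilon$-optimal) feasible sequence with $\hat\Sigma_1=0$ and $\hat\Sigma_{n+1}\succeq0$, and set $\bar\Gamma=\frac1n\sum_{i=1}^n\Gamma_i$, $\bar\Pi=\frac1n\sum_{i=1}^n\Pi_i$, $\bar{\hat\Sigma}=\frac1n\sum_{i=1}^n\hat\Sigma_i$, where the last sum ranges over $i=1,\dots,n$ and therefore contains $\hat\Sigma_1=0$.

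The linear constraints pass to the averages directly: $\left(\begin{smallmatrix}\bar\Pi & \bar\Gamma\\ \bar\Gamma^{T} & \bar{\hat\Sigma}\end{smallmatrix}\right)=\frac1n\sum_{i=1}^n\left(\begin{smallmatrix}\Pi_i & \Gamma_i\\ \Gamma_i^{T} & \hat\Sigma_i\end{smallmatrix}\right)\succeq0$; $\mathbf{Tr}(\bar\Pi)\le P$; and since $\Psi_Y$ is affine in $(\Pi,\Gamma,\hat\Sigma)$ with constant term $\Psi$, the matrix $\bar\Psi_Y$ built from the averages equals $\frac1n\sum_{i=1}^n\Psi_{Y,i}\succeq\Psi\succ0$. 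Concavity of $\log\det$ gives $\frac1n\sum_{i=1}^n\log\det\Psi_{Y,i}\le\log\det\bar\Psi_Y$, so the objective of \eqref{eq:lemma_UB_OP} at $(\bar\Gamma,\bar\Pi,\bar{\hat\Sigma})$ is at least the optimal $n$-letter SCOP value, which by Lemma~\ref{lemma:n_letter_convex} is at least $C_n(P)$.

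The crux is the Riccati LMI. Its Schur complement (legitimate since $\Psi_{Y,i}\succ0$) states $\hat\Sigma_{i+1}+G_i\preceq F\hat\Sigma_iF^{T}+K_p\Psi K_p^{T}$, where $G_i:=K_{Y,i}\Psi_{Y,i}K_{Y,i}^{T}=N_i\Psi_{Y,i}^{-1}N_i^{T}$ and $N_i:=F\Gamma_i^{T}\Lambda^{T}+F\hat\Sigma_iH^{T}+K_p\Psi$ is affine in the variables. The matrix-fractional map $(N,\Psi_Y)\mapsto N\Psi_Y^{-1}N^{T}$ is jointly operator-convex on $\{\Psi_Y\succ0\}$: equivalently $\left(\begin{smallmatrix}\Psi_{Y,i} & N_i^{T}\\ N_i & G_i\end{smallmatrix}\right)\succeq0$, and averaging these blocks (the cone being convex) gives $\left(\begin{smallmatrix}\bar\Psi_Y & \bar N^{T}\\ \bar N & \frac1n\sum_iG_i\end{smallmatrix}\right)\succeq0$, hence $\frac1n\sum_{i=1}^nG_i\succeq\bar N\bar\Psi_Y^{-1}\bar N^{T}=:\bar G$, with $\bar N=F\bar\Gamma^{T}\Lambda^{T}+F\bar{\hat\Sigma}H^{T}+K_p\Psi$. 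Now average the $n$ Schur-complemented Riccati inequalities; using the telescoping identity $\frac1n\sum_{i=1}^n\hat\Sigma_{i+1}=\bar{\hat\Sigma}+\frac1n(\hat\Sigma_{n+1}-\hat\Sigma_1)=\bar{\hat\Sigma}+\frac1n\hat\Sigma_{n+1}\succeq\bar{\hat\Sigma}$ (this is where $\hat\Sigma_1=0$ and $\hat\Sigma_{n+1}\succeq0$ are used), the bound $\frac1n\sum_iG_i\succeq\bar G$, and linearity of $\hat\Sigma\mapsto F\hat\Sigma F^{T}$, one obtains $\bar{\hat\Sigma}+\bar G\preceq F\bar{\hat\Sigma}F^{T}+K_p\Psi K_p^{T}$, which is exactly the Schur complement of the Riccati LMI in \eqref{eq:lemma_UB_OP}. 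Thus $(\bar\Gamma,\bar\Pi,\bar{\hat\Sigma})$ is feasible for \eqref{eq:lemma_UB_OP}, so $C_n(P)$ is bounded by its optimal value for every $n$; since $C_{fb}(P)=\lim_nC_n(P)$, letting $n\to\infty$ proves the lemma.

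The step I expect to be the main obstacle is precisely the handling of the two-time coupling in the Riccati LMI, which is why the usual time-sharing random variable does not apply: one must read the Schur-complement form as a matrix-fractional (hence operator-convex) constraint and then combine operator-Jensen with the telescoping of $\sum_i(\hat\Sigma_{i+1}-\hat\Sigma_i)$ and the sign of the boundary term $\hat\Sigma_{n+1}\succeq0$. A secondary point is that, if one does not invoke the reduction to a time-invariant Kalman filter, the convergences $\Psi_i\to\Psi$ and $K_{p,i}\to K_p$ must be carried as additive error terms vanishing with $n$, so that the averaged Riccati LMI is satisfied only asymptotically --- which still suffices after passing to the limit.
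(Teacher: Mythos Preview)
Your argument is correct and follows the same overall strategy as the paper: average the optimal $n$-letter decision variables, use concavity of $\log\det$ for the objective, pass the linear LMI and trace constraints to the averages, and handle the Riccati LMI via the telescoping identity $\frac1n\sum_i\hat\Sigma_{i+1}=\bar{\hat\Sigma}+\frac1n\hat\Sigma_{n+1}$ together with $\hat\Sigma_1=0$, $\hat\Sigma_{n+1}\succeq0$. The one substantive difference is that you invoke the reduction of Section~\ref{sec:setting} to take $\Sigma_1$ equal to the stabilizing solution, so that $K_{p,i}\equiv K_p$ and $\Psi_i\equiv\Psi$; with these constants fixed, the averaged Riccati LMI is satisfied \emph{exactly} for every $n$, and you can conclude without any asymptotics. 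The paper instead keeps $K_{p,i},\Psi_i$ time-varying, which produces additive error blocks of the form $\bar\Psi_n-\Psi$ and $F(\bar\Sigma_n-\Sigma)H^T$ in the averaged LMI; it then introduces an $\epsilon$-relaxed constraint set $\mathcal C_\epsilon$, shows the averaged tuple lies in $\mathcal C_\epsilon$ for $n$ large, and extracts a limit point in $\mathcal C_0$. Your route is cleaner and you correctly identify in your closing remark that the paper's extra work is precisely this time-varying-to-time-invariant passage. A minor stylistic point: your appeal to operator convexity of $(N,\Psi_Y)\mapsto N\Psi_Y^{-1}N^T$ via the auxiliary block $\left(\begin{smallmatrix}\Psi_{Y,i}&N_i^T\\ N_i&G_i\end{smallmatrix}\right)\succeq0$ is equivalent to, and can be replaced by, simply averaging the original Riccati block LMIs directly (they are affine in the decision variables once $K_p,\Psi$ are constants), which is what the paper does; both yield the same conclusion.
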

The main idea behind the upper bound is to show that the objective function evaluated at the convex combination of each of the decision variables in Lemma \ref{lemma:n_letter_convex} achieves a larger objective. At a high level, the idea is similar to the time-sharing random variable, but the challenge lies in the constraints. Specifically, one cannot show that the Riccati LMI constraint \eqref{eq:lemma_UB_OP} is satisfied at all times when evaluated at the convex combination of the decision variables. To settle this point, we show that this constraint is satisfied in the asymptotics.

\subsection{Lower bound}
In this section, we show that the upper bound in Lemma \ref{lemma:singleletter_UB} is achievable. It is shown with two lemmas: the first formulates a lower bound as an optimization problem that resembles the upper bound but has two additional constraints. The second lemma shows that additional constraints are satisfied in the upper bound optimization problem.
\begin{lemma}[Lower bound]\label{lemma:achievable}
For time-invariant policies
\begin{align}\label{eq:policy_LB}
\vx_i&= \Gamma (\vhs_{i}-\vhhs_{i}) + \vm_i,\ \ \ \ i\ge2
\end{align}
with $\vm_i\sim N(0,M)$ (and a different coding rule for $i=1$), the maximization of \eqref{eq:objectiveN} over $(\Gamma,M)$ achieves the lower bound
\begin{align}
    C_{fb}(P)&\geq \max_{\Gamma,\Pi,\hat{\Sigma}} \log\det(\Psi_Y) - \log\det(\Psi)\nn\\
    &\text{s.t.} \ \ \begin{pmatrix}
     \Pi &\Gamma\\\Gamma^T& \hat{\Sigma}
    \end{pmatrix} \succeq0, \ \ \mathbf{Tr}(\Pi)\le P \nn\\
    K_Y&= (F \hat{\Sigma}H^T + F{\Gamma}^T\Lambda^T + K_{p}\Psi)\Psi_{Y}^{-1}\nn\\
    \Psi_Y&= \Lambda\Pi\Lambda^T + H \hat{\Sigma}H^T+\Lambda \Gamma H^T + H \Gamma^T \Lambda^T + \Psi\nn\\
 \hat{ \Sigma} &= F \hat{ \Sigma} F^T + K_{p}\Psi K_{p}^T - K_Y \Psi_Y K_Y^T \label{eq:constraint_ricc}\\
 &\exists K: \rho(F-K(\Lambda \Gamma\hat{\Sigma}^\dagger + H))<1. \label{eq:constraint_detec}
\end{align}
\end{lemma}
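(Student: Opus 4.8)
The plan is to restrict the maximization in \eqref{eq:objectiveN} to time-invariant policies of the form in Lemma \ref{lemma:policy} with constant coefficients, $\vx_i = \Gamma\hat{\Sigma}_i^\dagger(\vhs_i-\vhhs_i)+\vm_i$ with $\vm_i\sim N(0,M)$ i.i.d., and to read off the achievable rate from the output state-space of Lemma \ref{lemma:state-space}. Note that at $i=1$ the estimation error $\vhs_1-\vhhs_1$ is zero, so $\hat{\Sigma}_1=0$ and the policy degenerates to $\vx_1=\vm_1$; this is the separate coding rule at $t=1$ referred to in \eqref{eq:policy_LB}. For such a policy the coefficients of \eqref{eq:lemma_SS_SS} are time-invariant for $i\ge2$, the error covariance obeys the Riccati recursion \eqref{eq:SIGMArecursion} with $\hat{\Sigma}_1=0$, and the per-letter objective is $h(\vy_i\mid\vy^{i-1})-h(\vz_i\mid\vz^{i-1})=\tfrac12\log\det\Psi_{Y,i}-\tfrac12\log\det\Psi$; summing over $i$, the value $h(\vy^n)-h(\vz^n)=\tfrac12\sum_{i=1}^n(\log\det\Psi_{Y,i}-\log\det\Psi)$ is feasible for the maximization defining $C_n(P)$, hence lower bounds it.

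The core step is to prove $\hat{\Sigma}_i\to\hat{\Sigma}$, the stabilizing solution of the Riccati equation \eqref{eq:constraint_ricc}. Starting from $\hat{\Sigma}_1=0$, the Riccati operator is monotone, so the iterates are nondecreasing in the L\"owner order; the detectability constraint \eqref{eq:constraint_detec} on the pair $(F,\Lambda\Gamma\hat{\Sigma}^\dagger+H)$ furnishes a uniform upper bound (the error covariance of a fixed stabilizing suboptimal filter), so the sequence converges, and standard Riccati convergence theory (e.g. \cite[App. E]{kailath_booklinear}) identifies the limit with the stabilizing solution $\hat{\Sigma}$. One subtlety requires care: \eqref{eq:SIGMArecursion} is not a textbook Riccati recursion because the effective observation matrix $\Lambda\Gamma\hat{\Sigma}_i^\dagger+H$ depends on the iterate through the pseudo-inverse; the fix is to argue that the range of $\hat{\Sigma}_i$ stabilizes after a finite transient, so that $\hat{\Sigma}_i^\dagger$, $\Psi_{Y,i}$ and $K_{Y,i}$ all converge and the recursion becomes asymptotically a genuine time-invariant Riccati recursion (with a minor modification of the policy over the transient, which affects neither the asymptotic rate nor the asymptotic power). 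The convergence is moreover geometric, with rate controlled by $\rho\bigl(F-K_Y(\Lambda\Gamma\hat{\Sigma}^\dagger+H)\bigr)^2<1$. Consequently $\Psi_{Y,i}\to\Psi_Y$ and $\cov(\vx_i)=\Pi_i:=\Gamma\hat{\Sigma}_i^\dagger\Gamma^T+M\to\Pi:=\Gamma\hat{\Sigma}^\dagger\Gamma^T+M$.

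By Ces\`aro the rate $\tfrac1n(h(\vy^n)-h(\vz^n))$ tends to $\tfrac12\log\det\Psi_Y-\tfrac12\log\det\Psi$, and the block-$n$ power $\tfrac1n\sum_i\mathbf{Tr}(\Pi_i)=\mathbf{Tr}(\Pi)+o(1)$, the $o(1)$ transient excess being finite by the geometric convergence and absorbed by an arbitrarily small rescaling of the policy together with continuity of $C_{fb}(\cdot)$ in $P$. Finally, every feasible triple $(\Gamma,\Pi,\hat{\Sigma})$ of the right-hand optimization arises from such a policy: the Schur complement of the first LMI yields $M:=\Pi-\Gamma\hat{\Sigma}^\dagger\Gamma^T\succeq0$ together with the orthogonality $\Gamma(I-\hat{\Sigma}\hat{\Sigma}^\dagger)=0$ demanded by Lemma \ref{lemma:policy}, while \eqref{eq:constraint_ricc} (the Riccati appears as an \emph{equality} precisely because $\hat{\Sigma}$ must be a fixed point of the induced recursion) and the detectability condition \eqref{eq:constraint_detec} are exactly the hypotheses that make the convergence argument go through. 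Taking the maximum over $(\Gamma,M)$ thus gives $C_{fb}(P)\ge$ the stated optimum.

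The hard part will be the convergence analysis of \eqref{eq:SIGMArecursion}: dealing with the pseudo-inverse dependence of the observation matrix (showing the iterate's range stabilizes so the recursion is eventually standard), pinning the limit to the stabilizing solution $\hat{\Sigma}$ that enters the constraints — which is exactly what the detectability constraint \eqref{eq:constraint_detec} is designed to guarantee — and controlling the transient tightly enough that the finite-blocklength power constraint is met up to $o(1)$. Everything else (Lemmas \ref{lemma:policy}--\ref{lemma:state-space}, the entropy-rate identity, the Ces\`aro limit, continuity of $C_{fb}$) is routine bookkeeping.
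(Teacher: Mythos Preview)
Your plan has a genuine gap rooted in the choice of policy. You instantiate Lemma~\ref{lemma:policy} with $\Gamma_i=\Gamma$ constant, i.e.\ $\vx_i=\Gamma\hat{\Sigma}_i^\dagger(\vhs_i-\vhhs_i)+\vm_i$, which is \emph{not} the policy \eqref{eq:policy_LB}. The lemma literally says $\vx_i=\Gamma(\vhs_i-\vhhs_i)+\vm_i$; in the parametrization of Lemma~\ref{lemma:policy} this corresponds to $\Gamma_i=\Gamma\hat{\Sigma}_i$, and that choice is the whole point: plugging $\Gamma_i=\Gamma\hat{\Sigma}_i$ into \eqref{eq:lemma_SS_KY} and using $\hat{\Sigma}_i\hat{\Sigma}_i^\dagger\hat{\Sigma}_i=\hat{\Sigma}_i$ collapses the effective observation matrix to the \emph{time-invariant} $\Lambda\Gamma+H$, so \eqref{eq:SIGMArecursion} becomes a genuine Riccati recursion with fixed parameters. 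Your observation matrix $\Lambda\Gamma\hat{\Sigma}_i^\dagger+H$ stays iterate-dependent, and your proposed fix (``the range of $\hat{\Sigma}_i$ stabilizes after a finite transient'') is an assertion, not an argument; the pseudo-inverse is discontinuous at rank changes and there is no reason the range should freeze in finitely many steps. The change of variable $\Gamma'=\Gamma\hat{\Sigma}$ that reconciles the policy's $\Gamma$ with the optimization's $\Gamma$ is done only \emph{after} the limit is taken, at the level of the steady-state quantities.

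Second, your convergence argument (monotone iterates from $\hat{\Sigma}_1=0$, bounded by detectability, hence convergent to the stabilizing solution) does not go through here even once the recursion is made standard. The state-space \eqref{eq:lemma_SS_SS} has correlated process and measurement noise (the cross term $K_p\Psi$), so the textbook monotonicity-from-zero argument does not apply, and in any case detectability alone does not pin the limit to the maximal solution when several nonnegative fixed points exist. The paper instead invokes a convergence result that requires the initial condition to dominate the maximal solution, $\hat{\Sigma}_1\succeq\hat{\Sigma}_s$; \emph{this} is what the ``different coding rule for $i=1$'' is for: one takes $M_1=\alpha I$ with $\alpha$ large (not merely $\vx_1=\vm_1$ because $\hat{\Sigma}_1=0$), checks that the null-space of the resulting $\hat{\Sigma}_2$ is contained in that of any Riccati solution, and then chooses $\alpha$ so that $\hat{\Sigma}_2\succeq\hat{\Sigma}_s$. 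The one-step power overshoot is absorbed in the Ces\`aro average, as you note. Without this step your iterates need not converge to the solution $\hat{\Sigma}$ that appears in \eqref{eq:constraint_ricc}.
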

The optimization problem in \eqref{eq:constraint_detec} is the same as the upper bound in \eqref{eq:lemma_UB_OP} except for the additional constraint \eqref{eq:constraint_detec} and the Riccati equation \eqref{eq:constraint_ricc} which appears as an inequality in the upper bound \eqref{eq:Schur_Riccati_main}. Next, we show that these two conditions are redundant concluding the proof of Theorem \ref{th:main}.

\begin{lemma}[Equality between the lower and upper bounds]\label{lemma:2conditions}
\textcolor{black}{For any} optimal tuple $(\Pi,\hat{\Sigma},\Gamma)$ for the upper bound optimization problem in \eqref{eq:lemma_UB_OP}:
\begin{enumerate}
\item \textcolor{black}{There exists an optimal tuple such that} the Schur complement of the Riccati LMI \eqref{eq:Schur_Riccati_main} is achieved with equality.
\item  The pair $(F,\Lambda \Gamma\hat{\Sigma}^\dagger + H)$ is detectable, i.e.,
$$\exists K: \rho(F-K(\Lambda \Gamma\hat{\Sigma}^\dagger + H))<1.$$
\end{enumerate}
Consequently, the upper bound in Lemma \ref{lemma:singleletter_UB} and the lower bound in Lemma \ref{lemma:achievable} are equal to the feedback capacity.
\end{lemma}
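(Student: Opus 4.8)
The plan is to establish the two claims of Lemma~\ref{lemma:2conditions} separately and then combine them with Lemmas~\ref{lemma:singleletter_UB} and~\ref{lemma:achievable} to close the gap. For claim~(1), start from an arbitrary optimal tuple $(\Pi,\hat{\Sigma},\Gamma)$ of~\eqref{eq:lemma_UB_OP} and suppose the Riccati LMI is strict, i.e.\ the Schur complement satisfies $F\hat{\Sigma}F^T + K_p\Psi K_p^T - \hat{\Sigma} \succ K_Y\Psi_Y K_Y^T$, so that $\hat{\Sigma}^+ \triangleq F\hat{\Sigma}F^T + K_p\Psi K_p^T - K_Y\Psi_Y K_Y^T \succ \hat{\Sigma}$. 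The idea is to replace $\hat{\Sigma}$ by $\hat{\Sigma}^+$ and simultaneously rescale $\Gamma$ (and possibly adjust $\Pi$, or rather the i.i.d.\ covariance $M$) so that the first LMI and the trace constraint are preserved while the objective does not decrease. Concretely, since enlarging $\hat{\Sigma}$ enlarges $\Psi_Y$ through the $H\hat{\Sigma}H^T$ term, the objective $\frac12\log\det(\Psi_Y)-\frac12\log\det(\Psi)$ can only increase, provided we can still meet the correlation constraint $\begin{pmatrix}\Pi & \Gamma\\ \Gamma^T & \hat{\Sigma}\end{pmatrix}\succeq 0$; this is exactly the mechanism used in the proof of Theorem~\ref{th:MA} (replacing $\Gamma$ by $\Gamma(I+\Gamma^{-2}p\hat{\Sigma})^{1/2}$ type moves). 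I would iterate this monotone adjustment; since the feasible set is compact and the objective is bounded, the iteration has a fixed point, which by optimality must satisfy the Riccati LMI with equality. Equivalently and more cleanly, one shows directly that the map sending a feasible $\hat{\Sigma}$ to $\hat{\Sigma}^+$ is monotone nondecreasing and its fixed point is the stabilizing solution of the closed-loop Riccati equation~\eqref{eq:constraint_ricc}; by monotonicity of $\log\det$ this fixed point is at least as good, so there is always an optimal tuple attaining equality.

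For claim~(2), I would invoke the standard detectability characterization for Riccati equations: the pair $(F,\Lambda\Gamma\hat{\Sigma}^\dagger + H)$ is detectable if and only if the Riccati recursion driven by the observability matrix $C_\Gamma \triangleq \Lambda\Gamma\hat{\Sigma}^\dagger + H$ admits a stabilizing solution, i.e.\ one for which $F - K_Y C_\Gamma$ is stable. But claim~(1) already produced an optimal tuple whose $\hat{\Sigma}$ solves the closed-loop Riccati equation~\eqref{eq:constraint_ricc}; it remains to argue that this solution is the stabilizing one rather than some other (e.g.\ maximal but non-stabilizing) solution. Here I would use Assumptions~1 and~2: $(F,H)$ is detectable and $(F_s,W_s)$ is controllable on the unit circle, hence the same structural properties are inherited by the perturbed observer pair $(F,C_\Gamma)$ — note $C_\Gamma$ is $H$ plus a term supported on the range of $\hat{\Sigma}$, and the orthogonality constraint~\eqref{eq:lemma_policy_orthognality} guarantees this modification does not destroy observability of the unstable modes. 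An alternative route, and perhaps the one I'd prefer to write, is a limiting argument: the optimal finite-$n$ SCOP solutions of Lemma~\ref{lemma:n_letter_convex} generate $\hat{\Sigma}_i$ as the error covariance of an actual Kalman filter, which is automatically the stabilizing branch; passing to the limit and using continuity of the Riccati solution in its data transfers stability to the single-letter optimizer.

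The main obstacle I anticipate is claim~(2): going from "$\hat{\Sigma}$ satisfies the Riccati \emph{equation}" to "$(F,C_\Gamma)$ is \emph{detectable}" is not automatic, because a Riccati equation can have non-stabilizing positive semidefinite solutions, and one must rule those out or show the optimum selects the stabilizing one. The argument has to use the channel-theoretic meaning of $\hat{\Sigma}$ — it is a genuine estimation error covariance arising from the feedback policy — together with the global assumptions on $(F,H)$ and $(F_s,W_s)$, rather than purely algebraic manipulation of the LMI. A subtlety is that $\hat{\Sigma}$ may be singular (which is why the pseudo-inverse $\hat{\Sigma}^\dagger$ and the orthogonality constraint appear in Lemma~\ref{lemma:policy}), so the detectability argument must be carried out carefully on the range of $\hat{\Sigma}$, restricting to the subspace where the state estimate is informative and checking that the complementary (deterministic) part of the state is already stable or irrelevant to the noise.

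Once both claims hold, the conclusion is immediate: the additional constraints~\eqref{eq:constraint_ricc} and~\eqref{eq:constraint_detec} appearing in the lower bound of Lemma~\ref{lemma:achievable} are satisfied at an optimizer of the upper-bound problem~\eqref{eq:lemma_UB_OP}, so the lower bound's feasible region contains an optimizer of the upper bound, forcing the two optimal values to coincide; combined with $C_{fb}(P)\le\text{(upper bound)}$ from Lemma~\ref{lemma:singleletter_UB} and $C_{fb}(P)\ge\text{(lower bound)}$ from Lemma~\ref{lemma:achievable}, all three quantities are equal, which is exactly Theorem~\ref{th:main}.
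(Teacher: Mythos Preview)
For claim~(1), your core move coincides with the paper's: define the slack $Q \triangleq F\hat\Sigma F^T + K_p\Psi K_p^T - K_Y\Psi_Y K_Y^T - \hat\Sigma \succeq 0$ and replace $\hat\Sigma$ by $\hat\Sigma' = \hat\Sigma + Q$. The paper, however, keeps $\Pi$ and $\Gamma$ \emph{fixed}. Since $\hat\Sigma' \succeq \hat\Sigma$, the first LMI $\begin{pmatrix}\Pi & \Gamma\\ \Gamma^T & \hat\Sigma'\end{pmatrix}\succeq 0$ is automatic, the trace constraint involves only $\Pi$, and the objective is nondecreasing in $\hat\Sigma$ through the $H\hat\Sigma H^T$ term in $\Psi_Y$; by optimality it therefore stays at the optimum. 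The rescaling of $\Gamma$ you propose (the Theorem~\ref{th:MA}-type substitution) is unnecessary for this part.

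For claim~(2), the paper takes a much shorter route than either of yours and sidesteps exactly the obstacle you flag. It argues by direct contradiction: if $(F, C_\Gamma)$ with $C_\Gamma = \Lambda\Gamma\hat\Sigma^\dagger + H$ were not detectable, there would be an unstable mode $v$ of $F$ with $C_\Gamma v = 0$. Detectability of $(F,H)$ (Assumption~1) forces $Hv\neq 0$, hence $\Lambda\Gamma\hat\Sigma^\dagger v = -Hv \neq 0$, and in particular $\hat\Sigma^\dagger v \neq 0$. But an unstable mode unobserved through $C_\Gamma$ would drive the error covariance $\hat\Sigma$ (which by claim~(1) solves the decoder's Riccati equation of Lemma~\ref{lemma:state-space}) to be unbounded in the direction $v$, contradicting $\hat\Sigma^\dagger v \neq 0$. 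This uses only Assumption~1 and finiteness of $\hat\Sigma$; it never asks whether the optimal $\hat\Sigma$ is the \emph{stabilizing} Riccati solution, which is precisely where your first route stalls, and it avoids your limiting-from-finite-$n$ route, whose link to the single-letter optimizer is tenuous because the upper-bound argument (Lemma~\ref{lemma:singleletter_UB}) passes to the limit along \emph{convex combinations} of the SCOP variables rather than along actual Kalman iterates.
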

\textcolor{black}{For scalar channels with $H\neq0$, it can be shown that the optimal tuple satisfies the first item. That is, the Schur complement of the Riccati equation evaluated at any optimal solution tuple is zero. This fact is utilized in Theorem \ref{th:MA}.}
\section{Proof of Technical lemmas}\label{sec:proofs}
In this section, we provide detailed proofs of Lemmas \ref{lemma:policy} - \ref{lemma:2conditions} consecutively. We then prove Lemma \eqref{lemma:smoother} on the smoothing problem in Section \ref{sec:scheme}.

\begin{proof}[Proof of Lemma \ref{lemma:policy}]
\textcolor{black}{The policy in \eqref{eq:policy}} forms a subset of the maximization domain $P(\vx^n||\vy^n)= \prod_{i=1}^n P(\vx_i|\vx^{i-1},\vy^{i-1})$ \textcolor{black}{in \eqref{eq:objectiveN}}. Thus, our proof strategy is to construct a policy of the form \eqref{eq:policy}, for any inputs distribution $P(\vx^n||\vy^n)$, and show that it induces the same objective. The optimality of a Gaussian inputs distribution in \eqref{eq:objectiveN} can be shown with a standard argument of \textcolor{black}{maximum entropy}, e.g., \cite{Kim10_Feedback_capacity_stationary_Gaussian}.
We start by computing the $i$th objective as
\begin{align}\label{eq:proof_policy_obj}
    h(\vy_i|\vy^{i-1}) - h(\vz_i|\vz^{i-1})&= \frac1{2}\log \det (\mathbf{cov}(\vy_i-\hat{\hat{\vy_i}})) - \frac1{2}\log \det (\Psi_i),
\end{align}
where $\hat{\hat{\vy_i}} \triangleq \E[\vy_i|\vy^{i-1}]$. The covariance can be computed explicitly as
\begin{align}\label{eq:proof_policy_covP}
    \mathbf{cov}(\vy_i-\hat{\hat{\vy_i}})&\stackrel{(a)}= \mathbf{cov}(\vy_i-\hat{\hat{\vz_i}}) \nn\\
    &\stackrel{(b)}= \mathbf{cov}(\Lambda \vx_i + H\vhs_i - H\vhhs_i +  \vz_i - H\vhs_i)\nn\\
    &\stackrel{(c)}= \mathbf{cov}(\Lambda \vx_i + H(\vhs_i - \vhhs_i) ) + \Psi_i
\end{align}
\textcolor{black}{where $(a)$ follows from $\hat{\hat {\vz_i}} \triangleq \E[\vz_i|\vy^{i-1}]$ and $\E[\vx_i|\vy^{i-1}]=0$. The latter assumption is without loss of optimality since any policy with $\E[\vx_i|\vy^{i-1}]\neq0$ can be modified to \textcolor{black}{$\bar{\vx}_i = \vx_i - \E[\vx_i|\vy^{i-1}]$} that has zero mean without affecting the objective function in \eqref{eq:proof_policy_obj}. Step $(b)$ follows from the channel outputs definition in \eqref{eq:pre_channel} and \eqref{eq:vhhs_def}, and $(c)$ follows from the independence of the innovation $\vz_i-H\vhs_i$ and the tuple $(\vx^{i},\vy^{i-1},\vz^{i-1})$.}

\textcolor{black}{For any inputs distribution $P(\vx^n||\vy^n)$, denoted by $P$,} we construct a new policy of the form \eqref{eq:policy}, denoted by $Q$, as follows
\begin{align}\label{eq:proof_newpolicy}
  \vx_i = \Gamma_i \hat{\Sigma}_{i}^\dagger(\vhs_i - \vhhs_i) + \mathbf{m}_i,
\end{align}
where $\Gamma_i\triangleq\E_P[\vx_i(\vhs_i-\vhhs_i)^T ]$, $\mathbf{m}_i$ is independent of $(\vx^{i-1},\vy^{i-1})$ and is distributed according to $\mathbf{m}_i\sim N(0,M_i)$ with
\begin{align}\label{eq:proof_policy_M}
    M_i\triangleq \E_P[ \vx_i \vx_i^T] - \E_P[ \vx_i(\vhs_i-\vhhs_i)^T]\hat{\Sigma}_{i}^\dagger \E_P[(\vhs_i-\vhhs_i)\vx_i^T],
\end{align}
and $\hat{\Sigma}_{i}^\dagger$ is the pseudo inverse of $\hat{\Sigma}_{i} \triangleq \mathbf{cov}_P(\vhs_i-\vhhs_i)$. The subscript $P$ is made to emphasize the dependence on the distribution~$P$.

We show by induction that the new policy in \eqref{eq:proof_newpolicy} induces the same objective as the distribution $P$. Consider the Gaussian vector $\Xi^{P/Q}_i \triangleq (\vhs_i-\vhhs_i,\vx_i,\vy_i-\hat{\hat{\vy_i}})$ where the superscript indicates its distribution. If we show that~$\Xi^P_i$ has the same distribution as~$\Xi^Q_i$ for $i=1,\dots,n$, then their objectives are equal by~\eqref{eq:proof_policy_obj}. For the base case of the induction, we have $\Xi_1^{P/Q} = (0,0,\vx_1,\vy_1)$ for both policies and our construction in \eqref{eq:proof_newpolicy} guarantees that $\vx_1$ has the same distribution for both policies. For the induction step, assume that the variables $\{\Xi^P_i\}_{i\le t}$ have the same distribution as $\{\Xi^Q_i\}_{i\le t}$. We show that tuple $\Xi_{t+1}^P$ has the same distribution as $\Xi_{t+1}^Q$ by comparing their different components using a Bayes rule. First, the encoders' estimate $\vhs_{t+1}$ is independent of the policy choice. The decoders' estimate $\vhhs_{t+1} = \E[\vhs_{t+1}|\vy^t]$ is a function of the innovations $\{\vy_i-\hat{\hat{\vy}}_i\}_{i\le t}$, and by the induction hypothesis these innovations have the same distribution. These first two steps conclude that $\mathbf{cov}_P(\vhs_{t+1}-\vhhs_{t+1}) = \mathbf{cov}_Q(\vhs_{t+1}-\vhhs_{t+1})$. For the channel input, it can be easily verified by \eqref{eq:proof_newpolicy} that $\E_Q[\vx_{t+1}\vx_{t+1}^T] = \E_P[\vx_{t+1}\vx_{t+1}^T]$, and below we also show that $$\E_Q[\vx_{t+1}(\vhs_{t+1}-\vhhs_{t+1})^T] = \Gamma_{t+1} \hat{\Sigma}_{t+1}^\dagger \hat{\Sigma}_{t+1} = \Gamma_{t+1} = \E_P[\vx_{t+1}(\vhs_{t+1}-\vhhs_{t+1})^T].$$ The last step to complete the inductive step is for the innovation $\vy_{t+1}-\hat{\hat{\vy}}_{t+1}$, and we note from \eqref{eq:proof_policy_covP} that the distribution of the latter conditioned on $(\vhs_{t+1}-\vhhs_{t+1})$, $\vx_{t+1}$ is determined by $\vz_{t+1}-H\vhs_{t+1}$.

The orthogonality constraint $\Gamma_i(I-\hat{\Sigma}_i^\dagger\hat{\Sigma}_i)$ is a property of covariance matrices since $(I-\hat{\Sigma}^\dagger_{i}\hat{\Sigma}_{i})$ is the orthogonal projection onto the kernel of $\hat{\Sigma}_{i}$, but we prove it here for completeness. Consider the eigendecomposition of the covariance matrix
\begin{align}\label{eq:proof_policy_decomp}
    \hat{\Sigma}_{i}&= \E[(\vhs_i-\vhhs_i)(\vhs_i-\vhhs_i)^T]\nn\\
    &= \begin{pmatrix} U_0&U_1\end{pmatrix} \begin{pmatrix} \Omega& 0\\ 0&0\end{pmatrix} \begin{pmatrix} U_0^T\\U_1^T\end{pmatrix},
\end{align}
where $\begin{pmatrix} U_0&U_1\end{pmatrix}$ is an orthogonal matrix and $\Omega\succ0$ which imply $(\vhs_i-\vhhs_i)^TU_1 = 0$. The Moore-Penrose pseudo inverse is
\begin{align}
    \hat{\Sigma}_{i}^\dagger&=  \begin{pmatrix} U_0&U_1\end{pmatrix} \begin{pmatrix} \Omega^{-1}& 0\\ 0&0\end{pmatrix} \begin{pmatrix} U_0^T\\U_1^T\end{pmatrix},
\end{align}
and the constraint can be written as
\begin{align}\label{eq:proof_policy_orthogo}
    &\E[\vx_i(\vhs_i - \vhhs_i)^T] (I-\hat{\Sigma}^\dagger_{i}\hat{\Sigma}_{i})\nn\\
    &\ = \E[\vx_i(\vhs_i - \vhhs_i)^T] \left(I- \begin{pmatrix} U_0&U_1\end{pmatrix} \begin{pmatrix} I & 0\\ 0&0\end{pmatrix} \begin{pmatrix} U_0^T\\U_1^T\end{pmatrix}\right).
\end{align}
\textcolor{black}{To see that \eqref{eq:proof_policy_orthogo} is the zero matrix, note that if $u$ is a column of $U_0$, then $\left(I- \begin{pmatrix} U_0&U_1\end{pmatrix} \begin{pmatrix} I & 0\\ 0&0\end{pmatrix} \begin{pmatrix} U_0^T\\U_1^T\end{pmatrix}\right)u=0$. Further, if $u$ is a column of $U_1$, then $\left(I- \begin{pmatrix} U_0&U_1\end{pmatrix} \begin{pmatrix} I & 0\\ 0&0\end{pmatrix} \begin{pmatrix} U_0^T\\U_1^T\end{pmatrix}\right)u=u$, but $(\vhs_i - \vhhs_i)^Tu=0$ by the decomposition in \eqref{eq:proof_policy_decomp}.}

Finally, it can be verified that the power consumed by the new policy satisfies
\begin{align}
 \sum_{i=1}^n \E_Q[\vx_i^T\vx_i]
= \sum_{i=1}^n \E_P[\vx_i^T\vx_i].\nn
\end{align}
\end{proof}

\begin{proof}[Proof of Lemma \ref{lemma:state-space}]
The recursion for the predicted state $\vhs_{i+1}$ is given in Eq. \eqref{eq:prel_estimator} where $\mathbf{e}_i$ is the innovation process. For the channel output, we use Lemma \ref{lemma:policy} to write
\begin{align}
    \vy_i &= \Lambda \vx_i + \vz_i\nn\\
        &= (\Lambda \Gamma_i \hat{ \Sigma}_{i}^\dagger+ H) \vhs_{i} - \Lambda \Gamma_i \vhhs_{i} + \Lambda \mathbf{m}_i + \mathbf{e}_i.
\end{align}
Note that the term $\vhhs_{i}$ is a deterministic function of $\vy^{i-1}$ and thus has no effect on the estimation error. To show that \eqref{eq:lemma_SS_SS} is a state-space model that admits standard Kalman filtering, note that the measurement noise $\Lambda \mathbf{m}_i + \ve_i$ is independent of $\vz^{i-1}$. Thus, the measurement noise is independent of previous measurements $\vy^{i-1}$ and the hidden states $\vhs^{i-1}$ of the state-space model.

To obtain the optimal estimator and the error covariance recursion in \eqref{eq:SIGMArecursion}, we apply the standard Kalman filter recursions \eqref{eq:prel_estimator}-\eqref{eq:Riccati_recursion} that also hold with the time-varying constants $G = K_{p,i}$, $H = \Lambda\Gamma_i\hat{ \Sigma}_{i}^\dagger + H$, $W = S = \Psi_i$ and $V= \Lambda M_i \Lambda^T + \Psi_i$.
\end{proof}
\begin{proof}[Proof of Lemma \ref{lemma:n_letter_convex}]
The starting point is the combination of Lemma \ref{lemma:policy} and Lemma \ref{lemma:state-space} to the optimization problem of $C_n(P)$
\begin{align}
&\max \frac1{2n}\sum_{i=1}^n \log \det (\Psi_{Y,i}) - \log\det(\Psi_i)\nn\\
& \ \ \ \ \text{s.t.} \ \ \frac1{n}\sum_{i=1}^n\mathbf{Tr}( \Gamma_i \hat{\Sigma}^\dagger_{i}\Gamma_i^T + M_i)\le P, \nn\\
& \ \ \Gamma_i (I - \hat{\Sigma}^\dagger_{i}\hat{\Sigma}_{i}) =0,\ \ M_i\succeq 0 \nn\\
& \Psi_{Y,i} = (\Lambda\Gamma_i\hat{ \Sigma}_{i}^\dagger + H ) \hat{ \Sigma}_{i} (\Lambda\Gamma_i\hat{ \Sigma}_{i}^\dagger + H )^T + \Lambda M_i \Lambda^T + \Psi_i\nn\\
    &K_{Y,i} =  (F\hat{ \Sigma}_{i}(\Lambda \Gamma_i\hat{ \Sigma}_{i}^\dagger + H)^T + K_{p,i}\Psi_i)\Psi_{Y,i}^{-1}\nn\\
&\hat{ \Sigma}_{i+1}
    = F \hat{ \Sigma}_{i}F^T + K_{p,i}\Psi_i K_{p,i}^T - K_{Y,i} \Psi_{Y,i} K_{Y,i}^T
\end{align}
with the initial condition $\hat{ \Sigma}_{1} = 0$. The maximum is over all involved variables, that is, $\{\Gamma_i,M_i,\hat{ \Sigma}_{i+1}\}_{i=1}^n$.

The first step is to introduce an auxiliary decision variable
\begin{align}\label{eq:defPi}
    \Pi_i&= \Gamma_i \hat{ \Sigma}^\dagger_{i} {\Gamma}_i^T + M_i.
\end{align}
Then, $\Psi_{Y,i}$ can be written as
\begin{align}\label{eq:proof_obj_Pi}
\Psi_{Y,i}&=\Lambda \Pi_i\Lambda^T + H\hat{ \Sigma}_{i} H^T+ \Lambda \Gamma_i H^T+ H \Gamma_i^T\Lambda^T  + \Psi_i,\nn
\end{align}
where we used the orthogonality constraint $\Gamma_i(I - \hat{\Sigma}^\dagger_{i}\hat{\Sigma}_{i}) =0$. As a result, the Riccati recursion can also be represented with $\Pi_i$ only. The power constraint can also be expressed as
\begin{align}
    \frac1{n}\sum_{i=1}^n\mathbf{Tr}(\Pi_i)\le P,
\end{align}
so that the variable $M_i$ only appears in the constraints
\begin{align}
    \Pi_i&= \Gamma_i \hat{ \Sigma}^\dagger_i \Gamma_i^T + M_i \nn\\
    M_i&\succeq0,
\end{align}
which can be reduced to the constraint
\begin{align}
    \Pi_i&\succeq \Gamma_i \hat{ \Sigma}^\dagger_{i} \Gamma_i^T.
\end{align}
By the Schur complement for positive semidefinite matrices \cite[p. $651$]{BoydOptimizationBook04},
\begin{align}
        \hat{\Sigma}_i\succeq0 \ \& \ \Pi_i- \Gamma_i \hat{ \Sigma}^\dagger_{i} \Gamma_i^T\succeq0 \ \&\ \Gamma_i(I - \hat{\Sigma}^\dagger_{i}\hat{\Sigma}_{i}) =0  \iff \begin{pmatrix}
    \Pi_i & \Gamma_i\\
    \Gamma_i^T& \hat{ \Sigma}_i
\end{pmatrix} \succeq0.
\end{align}

Finally, the Riccati equation is relaxed to the Riccati inequality
\begin{align}
\hat{ \Sigma}_{i+1}
    &\preceq F \hat{ \Sigma}_{i}F^T + K_{p,i}\Psi_i K_{p,i}^T - K_{Y,i} \Psi_{Y,i} K_{Y,i}^T,
\end{align}
and using the Schur complement transformation, we can write
\begin{align}\label{eq:proof_SCOP_riccati_i}
    \begin{pmatrix}
    F  \hat{ \Sigma}_{i} F^T + K_{p,i}\Psi_i K_{p,i}^T- \hat{ \Sigma}_{i+1}& K_{Y,i} \Psi_{Y,i} \\
    \Psi_{Y,i} K_{Y,i}^T & \Psi_{Y,i}
    \end{pmatrix}\succeq0.
\end{align}

\end{proof}

\begin{proof}[Proof of Lemma \ref{lemma:singleletter_UB}]
This is the converse proof for the capacity expression in Theorem \ref{th:main}. Recall that throughout the derivations, we used the $n$-letter capacity $C_n(P)$ in \eqref{eq:objectiveN}, but a standard converse argument can relate this quantity to the feedback capacity by showing that for any $n$,
\begin{align}\label{eq:proof_UB_fano}
    C_{fb}(P)&\le \frac1{n}C_n(P) + \delta_n
\end{align}
where $\delta_n\to 0$ is resulted from a Fano's inequality. The remaining step is to show that the SCOP formulation in Lemma \ref{lemma:n_letter_convex} that serves as an upper bound to $C_n(P)$ can be further upper bounded by its single-letter counterpart, the optimization problem in Theorem \ref{th:main}.

Define the convex combinations of the decision variables as
\begin{align}\label{eq:proof_UB_convexcomb}
    \bar{\Pi}_n&= \frac1{n}\sum_{i=1}^n \Pi_i,
    \ \ \ \bar{\Gamma}_n= \frac1{n}\sum_{i=1}^n \Gamma_i, \ \ \ \bar{\hat {\Sigma}}_n= \frac1{n}\sum_{i=1}^n \hat {\Sigma}_i,
\end{align}
and also let $\bar{\Sigma}_n\triangleq \frac1{n}\sum_{i=1}^n \Sigma_{i}$, $\bar{\Psi}_{n}\triangleq \frac1{n}\sum_{i=1}^n \Psi_i$ denote the averaged constants of the Riccati variables.

The concavity of the $\log\det(\cdot)$ function and Jensen's inequality imply that the convex combinations attain a greater objective than the one in Lemma \ref{lemma:n_letter_convex},
\begin{align}
    \frac1{n}\sum_{i=1}^n \log \det (\Psi_{Y,i})&\le \log \det \left(\frac1{n}\sum_{i=1}^n \Psi_{Y,i}\right),
\end{align}
where the argument of the right-hand side can be written as the linear function
\begin{align}
    &\frac1{n}\sum_{i=1}^n \Psi_{Y,i} \nn\\
    &\ = \Lambda \bar{\Pi}_n\Lambda^T + H\bar{\hat{ \Sigma}}_n H^T+ \Lambda \bar{\Gamma}_n H^T+ H \bar{\Gamma}_n^T\Lambda^T  + H \bar{\Sigma}_nH^T.    \nn
\end{align}
Next, the per-time constraints of the $n$-letter problem should be transformed into their single-letter counterparts, that is, the ones evaluated at the convex combinations in \eqref{eq:proof_UB_convexcomb}. It is straightforward to show that the power constraint and the first LMI constraint are satisfied at the convex combination by
\begin{align}
    \text{Tr}(\bar{\Pi}_n)&= \frac1{n}\sum_{i=1}^n\text{Tr}(\Pi_i)\nn\\
    &\le P,
\end{align}
and
\begin{align}
\begin{pmatrix}
    \bar{\Pi}_n & \bar{\Gamma}_n\\
    \bar{\Gamma}_n^T& \bar{\hat{ \Sigma}}_n
\end{pmatrix}
&=\frac1{n}\sum_{i=1}^n \begin{pmatrix}
    \Pi_i & \Gamma_i\\
    {\Gamma}_i^T& \hat{ \Sigma}_{i}
\end{pmatrix}\nn\\ &\succeq 0.
\end{align}
We proceed to the last constraint in the optimization problem, the Riccati LMI, defined by
\begin{align}\label{eq:proof_UB_Ricc}
\Omega(\Pi,\hat{\Sigma},\Gamma)&= \begin{pmatrix}
    F  \hat{ \Sigma} F^T - \hat{ \Sigma} + K_{p}\Psi K_{p}^T & K_Y(\Gamma,\hat{\Sigma}) \\
    K_Y(\Gamma,\hat{\Sigma})^T & \Psi_Y(\Gamma,\hat{\Sigma},\Pi)
    \end{pmatrix}
\end{align}
with
\begin{align*}
    K_Y(\Gamma,\hat{\Sigma})&=F {\Gamma}^T\Lambda^T + F \hat{ \Sigma} H^T + K_{p}\Psi\nn\\
    \Psi_Y(\Gamma,\hat{\Sigma},\Pi)&= \Lambda \Pi\Lambda^T + H\hat{ \Sigma} H^T+ \Lambda \Gamma H^T+ H  {\Gamma}^T\Lambda^T + \Psi.
\end{align*}
The main challenge is that the Riccati LMI does not satisfy $\Omega(\bar{\Pi}_n,\bar{\Sigma}_n,\bar{\Gamma}_n)\succeq0$ for all $n$. \textcolor{black}{In other words, the tuple of convex combinations in \eqref{eq:proof_UB_convexcomb} does not lie in the constraint set of the convex optimization in Theorem \ref{th:main}. Our strategy is to show that the limiting tuple of convex combinations (as a function of $n$) lies in the required constraint set. This is achieved by showing that the tuple of convex combinations lies in a relaxed constraints set, parameterized with some $\epsilon>0$. We then show that $\epsilon$ can be made small as $n$ grows large and argue that there is a limit point that lies in the constraints set that corresponds to $\epsilon=0$.}

Define the $\epsilon$-domain of the constraints set as
\begin{align*}
    \mathcal{C}_{\epsilon}&= \{\begin{pmatrix}
    \Pi & \Gamma\\
    {\Gamma}^T& \hat{\Sigma}
\end{pmatrix}\succeq0: \Omega(\Pi,\hat{\Sigma},\Gamma) + \epsilon I\succeq 0, \mathbf{Tr}(\Pi)\le P\},
\end{align*}
and note that $\mathcal C_0$ is the constraints set in Theorem \ref{th:main}.

By summing over both sides of the Riccati inequality in \eqref{eq:proof_SCOP_riccati_i}, we have
\begin{align*}
    &\begin{pmatrix}
    \frac1{n}\sum_{i=1}^{n}\hat{ \Sigma}_{i+1}& 0 \\
    0 & 0
    \end{pmatrix}\preceq \frac1{n}\sum_{i=1}^{n} \begin{pmatrix}
     F  \hat{ \Sigma}_{i} F^T + K_{p,i}\Psi_i K_{p,i}^T & K_{Y,i}\Psi_{Y,i} \\
    \Psi_{Y,i} K_{Y,i}^T & \Psi_{Y,i},
    \end{pmatrix}
\end{align*}
Arranging both sides and using the fact that $\Sigma_{n+1}\succeq0$,
\begin{align}
    \begin{pmatrix}
    \bar{\hat{ \Sigma}}_{n} & 0 \\
    0 & 0
    \end{pmatrix}
    &\preceq \begin{pmatrix}
     F \bar{\hat{ \Sigma}}_{n} F^T + K_{p}\Psi K_{p}^T &  K_Y(\bar{\Gamma}_n,\bar{\hat{ \Sigma}}_n) \\
    K_Y(\bar{\Gamma}_n,\bar{\hat{ \Sigma}}_n)^T & \Psi_Y(\bar{\Gamma}_n,\bar{\hat{ \Sigma}}_n,\bar{\Pi}_n)
    \end{pmatrix}\nn\\
    & + \begin{pmatrix}
     \bar{\Psi}_n -\Psi & F (\bar{\Sigma}_n -\Sigma ) H^T \\
    H (\bar{\Sigma}_n -\Sigma ) F^T & \bar{\Psi}_n - \Psi
    \end{pmatrix}.\nn
\end{align}
By our assumptions on the state-space model of the noise, we can use \cite[Ch. $14$]{kailath_booklinear} to have $\bar{\Sigma}_n\to \Sigma$ and $\bar{\Psi}_n \to \Psi$. Thus, the constraint on $\Omega (\bar{\Pi}_n,\bar{\hat{ \Sigma}}_{n},\bar{\Gamma}_n)$ is satisfied \emph{asymptotically}. Specifically, for any $\epsilon$, there exists an $n_\epsilon$ such that for all $n>n_\epsilon$
\begin{align}
    0 \preceq \Omega(\bar{\Pi}_n,\bar{\hat{ \Sigma}}_{n},\bar{\Gamma}_n) + \epsilon I.
\end{align}
Since the set $\mathcal{C}_\epsilon$ is closed and nested (in $\epsilon)$, the sequence $\{(\bar{\Pi}_{n},\bar{\hat{ \Sigma}}_{n},\bar{\Gamma}_{n}) \}_{n\in\mathbb{N}}$ has a limit point in $\bigcap_{\epsilon>0} \mathcal{C}_\epsilon = \mathcal C_0$. That is, there exists a sequence of times $T_1\le T_2\le T_3\dots$ such that $\lim_{i\to\infty} (\bar{\Pi}_{T_i},\bar{\hat{ \Sigma}}_{T_i},\bar{\Gamma}_{T_i}) \in \mathcal C_0$. It is important to note that the times sequence depends on the noise characteristics and not on the underlying codebooks. The proof is completed by taking the limit over the sequence $T_1,T_2,\dots$ in \eqref{eq:proof_UB_fano} to obtain
\begin{align}
    C_{fb}(P) &\le \max_{(\Pi,\hat{\Sigma},\Gamma) \in \mathcal C_0} \frac1{2}\log \det (\Psi_Y(\Gamma,\hat{\Sigma},\Pi)) - \frac1{2}\log\det(\Psi),\nn
\end{align}
which is precisely the optimization problem in \eqref{eq:lemma_UB_OP}.

\end{proof}
\begin{proof}[Proof of Lemma \ref{lemma:achievable}]
This is the achievability proof of the optimization problem in Lemma \ref{lemma:achievable}. The main ides is to fix a time-invariant policy and analyze the achievable rate which is determined by the asymptotic behaviour of the channel outputs process. Since the channel outputs process is described as a state-space, the asymptotic behavior of the channel outputs statistics boils down to the analysis of Riccati recursion convergence. To that end, we will use a result from \cite{convergence_detectable_initial} on certain conditions to guarantee the convergence of the Riccati recursion to the Riccati equation. Lastly, since one of the condition is given on the initial condition of the Riccati recursion (which we have no direct control over), we modify the time-invariant policy at the first time only to guarantee the convergence.

We use the policy in Lemma \ref{lemma:policy} with $\Gamma_i = \Gamma \hat{\Sigma}_{i}$ and $M_i= M$
such that the corresponding power satisfies $$\frac1{n}\sum_{i=1}^n \text{Tr}(\Gamma \hat{\Sigma}_{i}\Gamma^T + M)\le P.$$ By Lemma \ref{lemma:state-space}, the induced state-space is
\begin{align}
    \vhs_{i+1}&= F\vhs_{i}  + K_{p,i} \ve_i, \ \ \ \nn\\
    \vy_i&=  (\Lambda \Gamma + H) \vhs_{i} - \Lambda \Gamma \vhhs_{i} + \Lambda \mathbf{m}_i + \ve_i,
\end{align}
and the corresponding Riccati recursion is
\begin{align}\label{eq:ach_recursion}
    \hat{ \Sigma}_{i+1}
    &= F \hat{ \Sigma}_{i}F^T + K_{p,i}\Psi_i K_{p,i}^T - K_{L,i} \Psi_{L,i} K_{L,i}^T
\end{align}
with $\hat{\Sigma}_1=0$ and
\begin{align}
     \Psi_{L,i}&= (\Lambda \Gamma + H)\hat{ \Sigma}_{i}(\Lambda \Gamma + H)^T + \Lambda M \Lambda^T + \Psi_i \nn\\
    K_{L,i} &=  (F\hat{ \Sigma}_{i}(\Lambda \Gamma  + H)^T + K_{p,i}\Psi_i)\Psi_{L,i}^{-1}.
\end{align}
The next step is to show the convergence of the Riccati recursion in \eqref{eq:ach_recursion} to a fixed-point solution of the Riccati equation. Since $K_{p,i}$ and $\Psi_i$ converge to their time-invariant counterparts in \eqref{eq:ricc_constants} exponentially fast, we replace $K_{p,i}$ and $\Psi_i$ with $K_p$ and $\Psi$, respectively. This comes at the cost that the initial condition $\hat{\Sigma}_1=0$ becomes arbitrary.

Before presenting the convergence conditions, we need to modify the Riccati recursion in \eqref{eq:ach_recursion} to have an equivalent form with the property that the disturbance and the measurement of the state are independent. This is a standard modification can be found for instance in \cite[Sec. 14.7]{kailath_booklinear}. The equivalent form of \eqref{eq:ach_recursion} can be written as
\begin{align}\label{eq:_proof_LB_recur_noCOR}
    \hat{ \Sigma}_{i+1}
    &= F_s \hat{ \Sigma}_{i}F_s^T + K_{p}Q_s K_{p}^T - \bar{K}_{L,i} \Psi_{L,i} \bar{K}_{L,i}^T,
\end{align}
and
\begin{align}
F_s&= F - K_p \Psi (\Lambda M \Lambda^T + \Psi)^{-1}(\Lambda \Gamma + H)\nn\\
Q_s&= \Psi - \Psi (\Lambda M \Lambda^T + \Psi )^{-1}\Psi\nn\\
\bar{K}_{L,i}&=  F_s \hat{\Sigma}_i (\Lambda \Gamma + H)^T \Psi_{L,i}^{-1}.
\end{align}

We use \cite[Th. $1$]{convergence_detectable_initial} for the convergence of the Riccati recursion in \eqref{eq:_proof_LB_recur_noCOR} to the maximal solution of the Riccati equation, the maximal solution $\hat{\Sigma}_s$ whose all of its closed-loop modes are inside or on the unit circle, that is, $\rho(F_s - \bar{K}_{L,i} (\Lambda \Gamma + H))\le 1$. The sufficient condition from \cite{convergence_detectable_initial} translates to the Riccati equation in \eqref{eq:_proof_LB_recur_noCOR} as
\begin{enumerate}
    \item The initial state satisfies $\hat{\Sigma}_1\succeq \hat{\Sigma}_s$.
    \item The pair $(F_s,\Lambda \Gamma + H)$ is detectable.
\end{enumerate}
The detectability condition guarantees the existence of the maximal solution. This condition will be carried to the lower bound optimization problem as a restriction on the optimization parameters $(\Gamma,M)$. Also note that $(F_s,\Lambda\Gamma +H)$ is detectable iff $(F,\Lambda\Gamma +H)$ is detectable and thus can be expressed as $\exists K: \rho(F-K(\Lambda \Gamma + H))<1$. The first condition is needed for the convergence to the maximal solution and is shown next.
As mentioned, the initial condition $\hat{\Sigma}_1$ is arbitrary. To this end, we modify the time-invariant policy by changing $M_1$ to be an identity matrix scaled with a constant $\alpha$.

We proceed to show that the null-space of $\hat{\Sigma}_2$ lies in the null-space of any solution to the Riccati equation. For this proof, we use the closed-loop Lyapunov recursion of \eqref{eq:_proof_LB_recur_noCOR} can be expressed as
\begin{align}\label{eq:proof_LB_lyapunov}
\hat{\Sigma}_2&= (F_s - \bar{K}_{L,1}(\Lambda\Gamma + H))\hat{\Sigma}_1(F_s - \bar{K}_{L,1}(\Lambda\Gamma + H))^T \nn\\
 &\ \ + K_pQ_sK_p^T + \bar{K}_{L,1}(\Lambda M_1 \Lambda^T + \Psi) \bar{K}_{L,1}^T.
\end{align}
Let $x$ be an eigenvector of $F$ with $\lambda$ such that $x\hat{\Sigma}_2=0$. Then, pre- and post- multiplying the closed-loop Riccati equation in \eqref{eq:_proof_LB_recur_noCOR} with $x$ and $x^T$ we have
\begin{align}
 0&= x (F_s - \bar{K}_{L,1}(\Lambda\Gamma + H))\hat{\Sigma}_1(F_s - \bar{K}_{L,1}(\Lambda\Gamma + H))^Tx^T \nn\\
 &\ \ + x K_pQ_sK_p^Tx^T + x \bar{K}_{L,1}(\Lambda M_1 \Lambda^T + \Psi) \bar{K}_{L,1}^Tx^T.
\end{align}
Then, we have $xK_pQ_s=0$, $x\bar{K}_{L,1}=0$ which also implies $xF_s\hat{\Sigma}=0$. By $M_1\succ0$, we have $Q_s\succ0$ so that $xK_p=0$. Now, consider any solution to the Riccati equation. Then, pre- and post- multiplying the Riccati equation with $x$ and $x^T$ gives
$$x\hat{\Sigma}x^T = x F_s\hat{\Sigma}F_s^Tx^T + xK_pQ_sK_p^Tx^T - x^T\bar{K}_{L}\Psi_L\bar{K}_L^Tx^T,$$
which implies $x\hat{\Sigma}x^T(1-|\lambda|^2)\preceq0$. Finally, by the stability of $F-K_p H$, the equation $xK_p=0$ implies $|\lambda|<1$ and therefore $x\hat{\Sigma}=0$. To conclude the proof of the first item, we can choose $\alpha$ to be large enough such that the error covariance $\hat{\Sigma}_2\succeq \Sigma_s$. Note that the power constraint may be violated for small $n$ but it will average out when taking $n$ to be large enough.

To summarize, for any time-invariant policy $(M,\Gamma)$ subject to the detectability condition, the channel outputs entropy rate converges to
\begin{align}
    \lim_{n\to\infty} \frac1{n}h(Y^n) &= \frac1{2}\log\det(\Psi_{Y,s}) + \frac1{2}\log(2\pi e)^d
\end{align}
where $\Psi_{Y,s}$ is the innovation covariance of the Riccati equation in \eqref{eq:_proof_LB_recur_noCOR} evaluated at its (unique) maximal solution.




As shown in \cite{CoverPombra}, the asymptotic equipartition property (AEP) holds for arbitrary Gaussian processes, so that $\lim_{n\to\infty} \frac1{n}(h(Y^n) - h(Z^n))$ is achievable for any policy of the form
$X^n = B_nZ^n + V^n$ where $V^n\sim(0,\Sigma_{V_n})$ is independent of $Z^n$ and $B_n$ is a (block) lower-triangular matrix, i.e., it is a strictly causal operator. The policy considered here can be written in this form since $\vhs_i$ is a strictly causal function of $\{\vz_i\}_{i\geq1}$ and $\vhhs_i$ is a strictly causal function of $\{\vy_i\}_{i\geq1}$. Thus, we have that
\begin{align}
    C_{fb}(P) &\ge \frac1{2} \log\det(\Psi_{Y,s}) - \frac1{2} \log\det(\Psi).
\end{align}

We formulate an optimization problem which serves as a lower bound on the feedback capacity. By taking a maximum over all valid policies, we have
\begin{align}\label{eq:lemma_lb_OP}
    C_{fb}(P)&\geq \max_{\Gamma,M,\hat{\Sigma}_s} \frac1{2}\log\det(\Psi_{Y,s}) - \frac1{2}\log\det(\Psi)\nn\\
    \text{s.t.} \ \ & \text{Tr}(\Gamma \hat{\Sigma}_s\Gamma^T + M )\le P \nn\\
     \hat{ \Sigma}_s &= F \hat{ \Sigma}_s F^T + K_{p}\Psi K_{p}^T - K_L \Psi_L K_L^T\nn\\
    K_L&= (F\hat{\Sigma}_s(\Lambda \Gamma  + H)^T + K_{p}\Psi)\Psi_{L}^{-1}\nn\\
    \Psi_{Y,s}&= (\Lambda \Gamma + H)\hat{ \Sigma}_s(\Lambda \Gamma + H)^T + \Lambda M \Lambda^T + \Psi \nn\\
    & \exists K: \rho(F-K(\Lambda \Gamma + H))<1,
\end{align}
To complete the proof, change the variable $\Gamma' = \Gamma \hat{\Sigma}_s$, add the orthogonality constraint and follow the steps in Lemma \ref{lemma:singleletter_UB}: define $\Pi = \Gamma \hat{\Sigma}_s^\dagger\Gamma^T + M$, reduce $M$ and apply the Schur complement to get the optimization problem \eqref{eq:lemma_lb_OP}. For consistency with the upper bound notation, we rename $\Gamma'$ and $\hat{\Sigma}_s$ with $\Gamma$ and $\hat{\Sigma}$ respectively.

\end{proof}
\begin{proof}[Proof of Lemma \ref{lemma:2conditions}]
Recall that from the upper bound optimization problem, the tuple $(\Pi,\hat{\Sigma},\Gamma)$ satisfies
\begin{align}\label{eq:proof_ach_Ricc1}
    \hat{\Sigma}
    &\preceq F \hat{\Sigma}F^T + K_{p}\Psi K_{p}^T - K_{Y} \Psi_{Y} K_{Y}^T,
\end{align}
with
\begin{align}\label{eq:proof_lemma_conditions_obj}
     \Psi_{Y}&= \Lambda  \Pi  \Lambda^T + H \hat{\Sigma} H^T + \Lambda\Gamma H^T + H \Gamma^T\Lambda^T + \Psi \nn\\
     &= (\Lambda \Gamma\hat{\Sigma}^\dagger + H)\hat{\Sigma}(\Lambda \Gamma\hat{\Sigma}^\dagger + H)^T + \Lambda (\Pi - \Gamma\hat{\Sigma}^\dagger\Gamma^T) \Lambda^T + \Psi \nn\\
    K_{Y} &= (F\hat{\Sigma}(\Lambda \Gamma\hat{\Sigma}^\dagger + H)^T + K_{p}\Psi)\Psi_{Y}^{-1}.
\end{align}
We prove the claims.
\begin{enumerate}
\item If the optimal tuple does not satisfy the Riccati inequality \eqref{eq:proof_ach_Ricc1} with equality, there exists a matrix $Q\succeq0$ such that
\begin{align}
    Q \triangleq F \hat{\Sigma}F^T - \hat{\Sigma} + K_{p}\Psi K_{p}^T - K_{Y} \Psi_{Y} K_{Y}^T
\end{align}
is not the zero matrix. We let $\hat{\Sigma}' = Q + \hat{\Sigma}$, and observe that this modification satisfies the power constraint, and the LMI $$\begin{pmatrix}
    \Pi & \Gamma\\
    \Gamma^T& \hat{\Sigma}'
\end{pmatrix} \succeq0.$$ Then, using $\hat{\Sigma}' \succeq \hat{\Sigma}$ and the optimality of the tuple $(\Gamma,\Pi,\hat{\Sigma})$, we conclude that the objective is still equal to its optimal value after the modification.
\item If there exists an unstable mode in $F$ that cannot be observed via $\Lambda\Gamma\Sigma^\dagger + H$, by our assumption that $(F,H)$ is detectable, this mode can be observed via $\Lambda\Gamma\hat{\Sigma}^\dagger$. On the other hand, the instability of this mode implies that the error covariance $\hat{\Sigma}$ has an infinite value in this direction which is a contradiction to the observability of this mode via the matrix $\Lambda\Gamma\hat{\Sigma}^\dagger$.
\end{enumerate}

\end{proof}

\subsection{Proof for the coding scheme analysis} \label{subsec:proofs_coding}
\begin{proof}[Proof of Lemma \ref{lemma:smoother}]
The proof follows a sequential estimation argument. At each time instance, a new measurement (i.e., channel output) is made available to the decoder which can improve in turn its estimate of the first channel noise instance $\vz_0$. The derivation mostly focuses on writing the channel output as a simple linear function of $\vz_0$, and then we apply known recursive formulas for updating a new estimate and error covariance given a new measurement. We iterate that the derivations hold for general MIMO channels.


Recall that the channel output can be written as
\begin{align}\label{eq:scheme_proof_y_1}
    \vy_n &= \Lambda \vx_n + \vz_n \nn\\
    &\stackrel{(a)}= \Lambda\Gamma \hat{\Sigma}^\dagger(\vhs_{n} - \vhhs_{n}) + H\vhs_{n} + (\vz_n-H\vhs_{n})\nn\\
&= (\Lambda\Gamma\hat{\Sigma}^\dagger + H ) (\vhs_{n}-\vhhs_{n}) + \ve_n + H \vhhs_{n},
\end{align}
where $(a)$ follows from the channel input $\vx_n = \Lambda\Gamma\hat{\Sigma}^\dagger(\vhs_n - \vhhs_n)$. We now relate the channel output $\vy_n$ and $\vz_0$. To this end, the estimation error can be written as the recursion
\begin{align}\label{eq:proof_scheme_error}
    \vhs_{n+1}-\vhhs_{n+1}&= F\vhs_{n} + K_p\ve_{n} - (F\vhhs_n + K_{Y,n}(\vy_n- H\vhhs_n))\nn\\
    &\stackrel{(a)}= F(\vhs_{n}-\vhhs_n) + K_p(\vy_n - H \vhhs_{n} - (\Lambda\Gamma\hat{\Sigma}^\dagger + H ) (\vhs_{n}-\vhhs_{n}) ) - K_{Y,n}(\vy_n- H\vhhs_n)\nn\\
    &= (F-K_p(\Lambda \Gamma\Sigma^\dagger +H))(\vhs_{n} - \vhhs_n) + (K_p-K_{Y,n})(\vy_n  - H \vhhs_{n}) \nn\\
    &\stackrel{(b)}= F_p(\vhs_{n} - \vhhs_n) + (K_p-K_{Y,n})\tilde{\vy}_n \nn\\
    &=  F_p^{n}(\vhs_{1} - \vhhs_1) + \sum_{i=1}^n F_p^{n-i}(K_p-K_{Y,i})\tilde{\vy}_i \nn\\
    &\stackrel{(c)}= F_p^{n}K_p\vz_0 + \mathbf{d}_n,
\end{align}
where $(a)$ follows from \eqref{eq:scheme_proof_y_1}, $(b)$ follows from $F_p \triangleq F-K_p(\Lambda \Gamma\Sigma^\dagger +H)$ and $\tilde{\vy}_n \triangleq \vy_n-H\vhhs_n$, and $(c)$ follows from $\vhs_1 = K_p\vz_0$, $\vhhs_1=0$, and $\mathbf{d}_n \triangleq \sum_{i=1}^n F_p^{n-i}(K_p-K_{Y,i})\tilde{\vy}_i$.

We combine \eqref{eq:scheme_proof_y_1} and \eqref{eq:proof_scheme_error} to write the channel output as
\begin{align}\label{eq:scheme_proof_y_middle}
    \vy_n&=  (\Lambda\Gamma\hat{\Sigma}^\dagger + H)  (F_p^{n-1}K_p\vz_0 + \mathbf{d}_{n-1}) + \ve_n +H\vhhs_n\nn\\
    &= \kappa_n\vz_0 + (\Lambda\Gamma\hat{\Sigma}^\dagger + H) \mathbf{d}_{n-1} + \ve_n +H\vhhs_n,
\end{align}
with $\kappa_n \triangleq (\Lambda\Gamma\hat{\Sigma}^\dagger + H ) F_p^{n-1}K_p$. The estimation model in \eqref{eq:scheme_proof_y_middle} is a sequential estimation problem, but the terms $\mathbf{d}_{n-1}$ and $H\vhhs_n$ on the right-hand side depend on the previous measurement. We proceed to show that these \emph{bias terms} have no effect on the estimation problem and thus can be ignored. Define the transformed measurements (channel outputs)
\begin{align}\label{eq:proof_scheme_o}
 \mathbf{o}_n&\triangleq \tilde{\vy}_n - (\Lambda\Gamma\hat{\Sigma}^\dagger + H ) \mathbf{d}_{n-1}\nn\\
 &= \kappa_n\vz_0 + \ve_n,
\end{align}
in order to obtain a sequential estimation problem (without the bias terms) where the source is $\vz_0$, and at each time we observe $\mathbf{o}_n$ with the measurement noise $\ve_n$. The transformation $\{\tilde{\vy}_i\}_{i=1}^n\to \{\mathbf{o}_i\}_{i=1}^n$ is linear and causal (lower-triangular). Also note that this transformation is invertible since $\mathbf{d}_{n-1}$ is a function of $\tilde{\vy}_1,\dots,\tilde{\vy}_{n-1}$ only. Informally, the invertibility of this transformation shows that the information that can be extracted from the original channel outputs and the transformed channel outputs is the same. Formally, the innovations of both processes are the same, i.e., \begin{align}\label{eq:innovations_equal}
\mathbf{o}_n - \E[\mathbf{o}_n|\mathbf{o}^{n-1}]&= \vy_n - \E[\vy_n|\vy^{n-1}].
\end{align}
We are now ready to present the recursions for the sequential estimation problem in \eqref{eq:proof_scheme_o}. \textcolor{black}{Since the source is the same at all times (i.e., $\vz_0$), we only need a measurement-update formula (e.g., \cite[Lemma $9.3.2$]{kailath_booklinear})} to write it recursively as
\begin{align}\label{eq:proof_scheme_covariance_update}
    &\hat{\vz}_{0|n}= \hat{\vz}_{0|n-1} \nn\\
    &\ + \hat{Z}_{0|n-1} \kappa_n^T \cov(\vy_n |\vy^{n-1})^{-1}(\vy_n - \E[\vy_n|\vy^{n-1}])\nn\\
    &\hat{Z}_{0|n+1} = \hat{Z}_{0|n}\nn\\
    &\ - \hat{Z}_{0|n}\kappa_n^T \cov(\vy_n|\vy^{n-1})^{-1} \kappa_n \hat{Z}_{0|n}
\end{align}
with the initial conditions $\hat{\vz}_{0|0}=0$ and $\hat{Z}_{0|0} = \Psi$. By Lemma \ref{lemma:state-space}, we have $\Psi_{Y,n} = \cov(\vy_n|\vy^{n-1})$ and $\E[\vy_n|\vy^{n-1}] = H\vhhs_n$ so that the recursions simplify to
\begin{align}\label{eq:proof_coding_final_recursions}
       \hat{\vz}_{0|n}&= \hat{\vz}_{0|n-1} + \hat{Z}_{0|n-1}\kappa_n^T  \Psi_{Y,n}^{-1}(\vy_n-H\vhhs_{n})\nn\\
      \hat{Z}_{0|n} &= (I - \hat{Z}_{0|n-1}\kappa_n^T \Psi_{Y,n}^{-1} \kappa_n)\hat{Z}_{0|n-1}.
\end{align}
Furthermore, due to the optimal inputs distribution, the innovation covariance $\Psi_{Y,n}$ converges to its optimal value $\Psi_{Y}^\ast$ (for more details, see the proof of Lemma \ref{lemma:achievable}). Finally, taking a determinant over \eqref{eq:proof_coding_final_recursions}, applying Sylvester's identity, and note that $\Psi_{Y,n}= \kappa_n \hat{Z}_{0|n-1} \kappa_n^T + \Psi$ in \eqref{eq:innovations_equal} gives \eqref{eq:lemma_cov_reduction_det}.
\end{proof}

\section{Conclusions and Future Work}\label{sec:conclusion}
In this paper, we solved the feedback capacity problem of the Gaussian MIMO channel when the noise is generated from a linear dynamical system. The derivation relies on a sequential convex optimization formulation for the finite-block capacity problem using tools from control theory and convex optimization methods. Using the optimization problem convexity along with properties of Riccati recursions convergence, we provided tight lower and upper bounds that resulted a single-letter, computable capacity expression. Additionally, we showed that that the optimization problem induces a time-invariant capacity-achieving inputs distribution that was used to construct an explicit coding scheme for scalar channels.

In a broader perspective, we derived a single-letter formula for the directed information and its main steps can be summarized as follows
\begin{align}\label{eq:conclu_single}
    I(X^n\to Y^N)&= \sum_{i=1}^n I(X^i;Y_i|Y^{i-1}) \nn\\
    &\stackrel{(a)}= \sum_{i=1}^n I(X_i,\hat{S}_i(X^{i-1},Y^{i-1});Y_i|Y^{i-1})\nn\\
    &\stackrel{(b)}= \sum_{i=1}^n I(X_i,\hat{S}_i(X^{i-1},Y^{i-1});Y_i|\hat{\hat{S_i}}(Y^{i-1}))\nn\\
    &\approx n I(X,\hat{S};Y|\hat{\hat{S}}),
\end{align}
where in $(a)$ a channel state $\hat{S}_i = \E[S_i|X^{i-1},Y^{i-1}] = \E[S_i|Z^{i-1}]$ is defined and satisfies the channel Markov chain $(\hat{S}_{i+1},Y_i) - (X_i,\hat{S}_i) - (X^{i-1},Y^{i-1},\hat{S}^{i-1})$, and in $(b)$ $\hat{\hat{S}}_i \triangleq \E[\hat{S}_i|Y^{i-1}]$. Note that the channel state $\hat{S}_i$ can be computed at the encoder since it is a function of $(X^{i-1},Y^{i-1})$. Also, the computation of the asymptotic behaviour at the last step was enabled due to the description of the channel outputs process structure as a hidden-Markov (Lemma \ref{lemma:state-space}).

The above steps are related to computations of the directed information for the discrete-alphabet counterpart of the Gaussian channel, the finite-state channel (FSC). More specifically, for FSCs with state that can be computed at the encoder, the directed information can be written as
\begin{align}\label{eq:DI_1}
    I(X^n\to Y^n)&= \sum_{i=1}^n I(X_i,S_i;Y_i|Y^{i-1}),
\end{align}
and could be expressed with a computable expression in few instances only \cite{Chen05,Ising_channel,Ising_artyom_IT,PeledSabagBEC,Sabag_BIBO_IT,AharoniSabag_RL,PermuterCuffVanRoyWeissman08,trapdoor_generalized}. In \cite{Sabag_UB_IT,sabag_huleihel_TCOM}, it was shown that all these solutions can be unified with the single-letter expression $I(X,S;Y|Q)$, where the channel outputs is a hidden Markov model and $Q$ serves as its hidden state with a finite, graphical structure (called the $Q$-graph). As the conjectured formula structure resembles the one for the Gaussian channel in \eqref{eq:conclu_single}, it should be interesting to investigate whether the techniques developed here apply also for FSCs. In particular, the main step is the formulation of the directed information as a sequential convex optimization problem in order to have an alternative feedback capacity formula that can be \emph{single-letterized}.




Two more research directions are as follows.
\subsubsection{Explicit formulae}
It may be possible to find simple capacity expressions for particular noise processes using the convex optimization in Theorem \ref{th:main}. For instance, the capacity of the ARMA noise of first order can be expressed as a function of the positive root to a quartic equation \cite{Kim10_Feedback_capacity_stationary_Gaussian}. This implies that the two decision variables in Theorem \ref{th:scalar} can be reduced to a single variable. Pursuing such simplifications for ARMA processes of higher order is natural \cite{Butman_conjecture,Butman69,TiernanSchalk_AR_UB,Han_GaussianFeedback}.
\subsubsection{Scheme for MIMO channels}
In Section \ref{sec:scheme}, we presented an explicit scheme for scalar channel that trivially extends to MIMO channels that can be decomposed to parallel scalar  channels. However, an explicit scheme for non-trivial MIMO channels remains open. A conjectured scheme was described in Section \ref{sec:scheme}, and a refinement of the spectral analysis in Lemma \ref{lemma:smoother} should prove the scheme optimality.

\bibliography{ref}
\bibliographystyle{IEEEtran}

\end{document}